\author{Robert Ganian \and M. S. Ramanujan \and Stefan Szeider}
\date{\small Algorithms and Complexity Group, TU Wien, Vienna, Austria}
\newtheorem{theorem}{Theorem}
\newtheorem{corollary}{Corollary}
\newtheorem{lemma}{Lemma}
\newtheorem{definition}{Definition}
\newtheorem{remark}{Remark}
\newcommand{\lv}[1]{#1}
\newcommand{\sv}[1]{}
\theoremstyle{definition}
\newtheorem{proposition}{Proposition}
\newcommand{\FPT}{\textsf{FPT}}
\newcommand{\sd}{\textsf{}}
\newtheorem{redrule}{Reduction Rule}	
\newtheorem{observation}{Observation}
\newtheorem{claim}{Claim}
\newcommand{\tw}{\textit{tw}}
\newcommand{\sbddetection}{{\sc Width Strong-$\CSP(\Gamma)$ Backdoor Detection}}
\newcommand{\torso}{{\bf Torso}}
\newcommand{\sbw}{{\bf btw}}
\newcommand{\defparproblem}[4]{
  \vspace{1mm}
\noindent\fbox{
  \begin{minipage}{0.96\textwidth}
  \begin{tabular*}{\textwidth}{@{\extracolsep{\fill}}lr} #1  & {\bf{Parameter:}} #3 \\ \end{tabular*}
  {\bf{Input:}} #2  \\
  {\bf{Objective:}} #4
  \end{minipage}
  }
  \vspace{1mm}
}
\newcommand{\naturals}{{\mathbb N}}
\newcommand{\bigoh}{\mathcal{O}}
\newcommand{\cV}{\mathcal{V}}
\newcommand{\cY}{\mathcal{Y}}
\newcommand{\cR}{\mathcal{R}}
\newcommand{\cP}{\mathcal{P}}
\newcommand{\cD}{\mathcal{D}}
\newcommand{\cF}{\mathcal{F}}
\newcommand{\cH}{\mathcal{H}}
\newcommand{\cT}{\mathcal{T}}
\newcommand{\cX}{\mathcal{X}}
\newcommand{\cM}{\mathcal{M}}
\newcommand{\sharpp}{$\text{\normalfont \#}$}
\newcommand{\sharpCSP}{\text{\normalfont \#CSP}}
\begin{document}
\maketitle
\thispagestyle{empty} 
\begin{abstract}

We show that CSP is fixed-parameter tractable when parameterized by
the treewidth of a backdoor into any tractable CSP problem over a
finite constraint language. This result combines the two prominent
approaches for achieving tractability for CSP: (i)~by structural restrictions
on the interaction between the variables and the constraints and
(ii)~by language restrictions on the relations that can be used inside
the constraints. Apart from defining the notion of backdoor-treewidth and showing how backdoors of small treewidth can be used to efficiently solve CSP, our main technical contribution is a fixed-parameter algorithm that finds a backdoor of small treewidth.
\end{abstract}

\section{Introduction}


\newcommand{\SB}{\{\,}%
\newcommand{\SM}{\;{:}\;}%
\newcommand{\SE}{\,\}}%

\newcommand{\var}{\text{\normalfont var}}
\newcommand{\spbd}{\text{\normalfont split-bd-size}}

\newcommand{\rel}{\text{\normalfont rel}}
\newcommand{\fun}{\text{\normalfont fun}}

\newcommand{\relD}{\DDD^*}
\newcommand{\CCC}{\mathcal{C}}
\newcommand{\HHH}{\mathcal{H}}
\newcommand{\III}{\mathbf{I}}
\newcommand{\KKK}{\mathbf{K}}
\newcommand{\VVV}{\mathcal{V}}
\newcommand{\DDD}{\mathcal{D}}
\newcommand{\FFF}{\mathcal{F}}
\newcommand{\CSP}{\text{\normalfont CSP}}
\newcommand{\VCSP}{\text{\normalfont VCSP}}
\newcommand{\strongbds}{\textsc{SBD}($\CSP(\Gamma_1)\uplus\cdots\uplus\CSP(\Gamma_d)$)}

The Constraint Satisfaction Problem (CSP) is a central and generic
computational problem which provides a common framework for
many theoretical and practical applications \cite{HellNesetril08}. An
instance of CSP consists of a collection of variables that must be
assigned values subject to constraints, where each constraint is given
in terms of a relation whose tuples specify the allowed combinations
of values for specified variables. The problem was originally
formulated by Montanari~\cite{Montanari74}, and has been found
equivalent to the homomorphism problem for relational
structures \cite{FederVardi98} and the problem of evaluating
conjunctive queries on databases~\cite{Kolaitis03}. 
CSP is NP-complete in general, and identifying the classes of CSP instances which can be
solved efficiently has become a prominent research area in theoretical computer science~\cite{CarbonnelCooper16}. 

One of the most classical approaches in this area relies on exploiting
the \emph{structure} of how variables and constraints interact with each other,
most prominently in terms of the \emph{treewidth} of graph
representations of CSP instances. The first result in this line of
research dates back to 1985, when Freuder~\cite{Freuder85} observed
that CSP is polynomial-time tractable if the \emph{primal treewidth},
which is the treewidth of the \emph{primal graph} of the instance, is
bounded by a constant. A large number of related results 
on structural restrictions for CSP have been obtained to date (see, e.g.,
\cite{CohenJeavonsGyssens08,DalmauKolaitisVardi02,GottlobLeoneScarcello02b,Grohe07,Marx13,SamerSzeider10}).

The other leading approach used to show the tractability of constraint
satisfaction relies on \emph{constraint languages}. In this case, the
polynomially tractable classes are defined in terms of a tractable
\emph{constraint language} $\Gamma$, which is a set of relations that
can be used in the instance. A landmark result in this area is
Schaefer's celebrated Dichotomy Theorem for Boolean
CSP~\cite{Schaefer78} which says that for every constraint languge
$\Gamma$ over the Boolean domain, the corresponding CSP problem is
either NP-complete or solvable in polynomial time. Feder and
Vardi~\cite{FederVardi98} conjectured that such a dichotomy holds for
all finite 
constraint languages. Although the conjecture is still open it has
been proven true for many important special cases (see, e.g., 
\cite{Bulatov06,Bulatov11,BulatovKrokhinJeavons01,
CooperCohenJeavons94,Dalmau00,HellNesetril90}).

Tractability due to restrictions on the constraint language and
tractability due to restrictions in terms of the structure of the CSP instance are often
considered complementary: under structural restrictions the domain
language can be arbitrary, whereas under constraint language
restrictions the variables and constraints can interact arbitrarily.

One specific tool that is frequently used to build upon the constraint
language approach detailed above is the notion of \emph{backdoors}, which provides a means of relaxing celebrated results on
tractable constraint languages to instances which are `almost'
tractable.  In particular, this is done by measuring the size of a
\emph{strong backdoor}~\cite{WilliamsGomesSelman03} to a selected
tractable class,
where a strong backdoor is a set of variables with the property that every
assignment of these variables results in a CSP instance in the
specified class. A natural way of defining such a class is to consider all CSP instances whose constraints use relations from a constraint language $\Gamma$, denoted by $\CSP(\Gamma)$.
The last couple of years have seen several new
results for CSP using this backdoor-based
approach~(see, e.g., 
\cite{CarbonnelCooperHebrard14,GanianRamanujanSzeider16,GaspersMisraOrdyniakSzeiderZivny14}.
In particular, the general aim of research in this direction is to
obtain so-called \emph{fixed-parameter} algorithms, i.e., algorithms
where the running time only has a polynomial dependence on the input
size and the exponential blow-up is restricted exclusively to the size
of the backdoor (the \emph{parameter}). Parameterized decision problems which
admit such an algorithm belong to the complexity class \FPT.

\begin{figure}[t]
\vspace{-0.5cm}
\begin{center}
  \includegraphics[trim={0 30pt 0 0},clip,height=120 pt, width=190 pt]{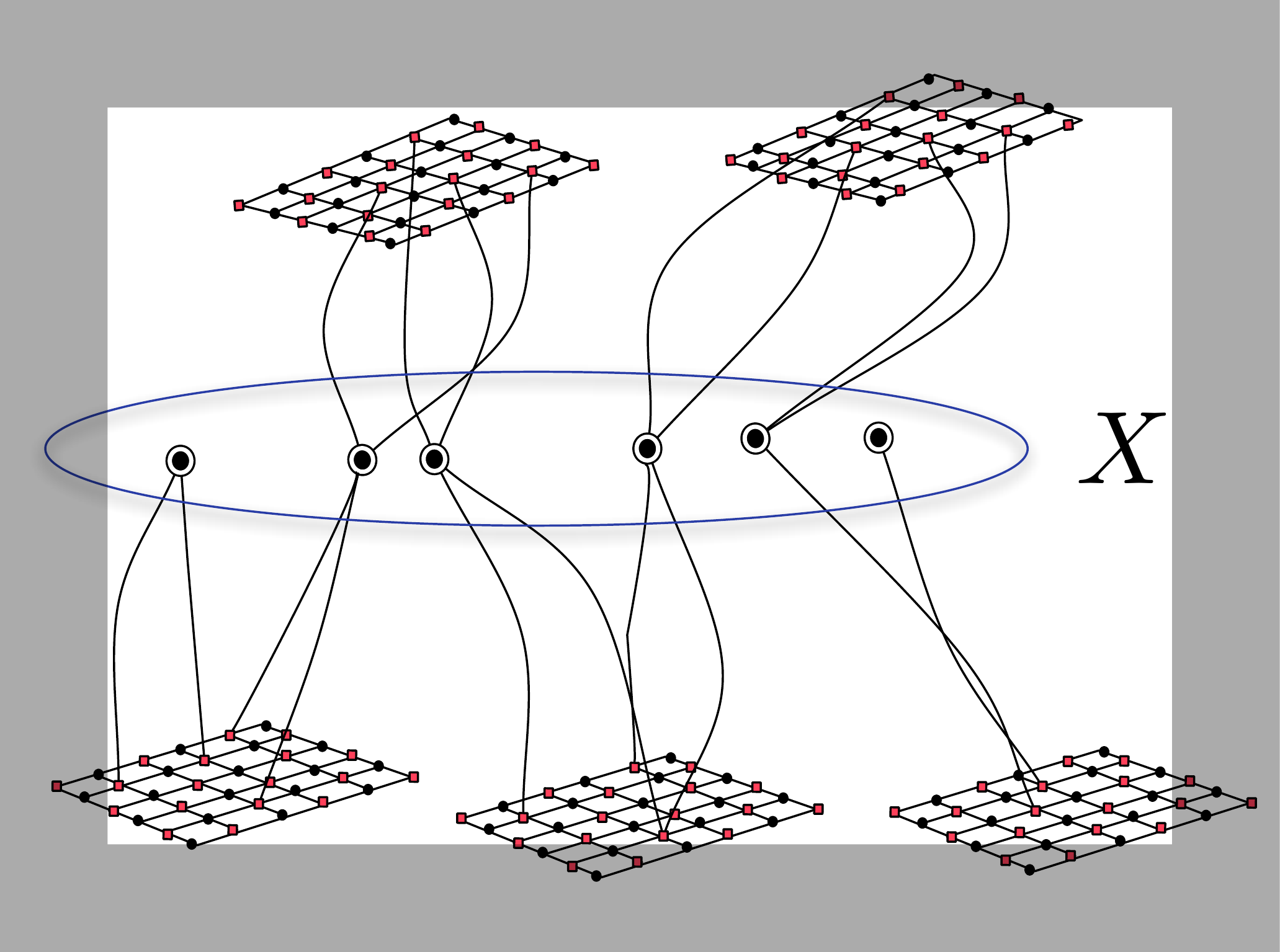}
\caption{An illustration of instances with neither a small backdoor into $\CSP(\Gamma)$ for any tractable constraint language $\Gamma$, nor bounded primal treewidth. Here, $X$ denotes a minimum strong backdoor of unbounded size into $\CSP(\Gamma)$ for some choice of $\Gamma$.} 
\vspace{-0.4cm}
\label{fig:example} 
\end{center}
\vspace{-0.5cm}
\end{figure}

We note that treewidth-based and backdoor-based approaches outlined
above are orthogonal to each other. Consider, for example, on the one
hand a CSP instance which is tractable due to the used constraint
language but which has high treewidth, or on the other hand an
instance consisting of many disjoint copies of CSP instances of
constant primal treewidth with a constant-size strong backdoor into a tractable constraint language
(backdoor size multiplies whereas treewidth remains constant). Hence
applying either of these approaches (treewidth-based and
backdoor-based) alone will not yield satisfactory results for instances
that are not homogeneous with respect to either of these forms of restrictions. It is certainly natural to consider the algorithmic complexity of
instances which have small treewidth after certain simple `blocks'
characterized by a tractable constraint language are removed, or
instances with a large but `well-structured' backdoor to a
tractable class (see Fig.~\ref{fig:example}), but until now we lacked
the theoretical tools required to capture the complexity of
such instances.

\lv{
\subsection{Our Results}}
\sv{
\noindent{\bf Our Results.}}
We propose and develop a hybrid framework for constraint satisfaction which combines the advantages of both the width-based and backdoor-based approaches. In particular, we introduce the notion of \emph{backdoor-treewidth} with respect to a constraint language $\Gamma$; this is defined, roughly speaking, as the primal treewidth of the instance after contracting (possibly large) parts of the instance into single constraints, so that the remaining variables form a strong backdoor into $\CSP(\Gamma)$ in the original instance. We refer to Definition~\ref{def:width} for the formal definition of backdoor-treewidth. It is not difficult to see that backdoor-treewidth is at most the minimum of primal treewidth and the size of a backdoor into the specified class. However, backdoor-treewidth can be arbitrarily smaller than both the primal treewidth and the size of a backdoor, and hence promises to push the frontiers of tractability beyond the current state of the art.

\begin{theorem}\label{thm:main main theorem}
  Let $\Gamma$ be a fixed tractable finite constraint language. Then, CSP  parameterized by the backdoor-treewidth with respect to $\Gamma$ is {\rm {\FPT}}.
\end{theorem}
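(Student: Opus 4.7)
}
The plan is to split the algorithm into two separate subroutines: a \emph{detection} subroutine that, given an instance $\III$ and a parameter $k$, either produces a strong backdoor $B$ into $\CSP(\Gamma)$ whose ``torso'' primal graph has treewidth $\bigoh(f(k))$, or concludes that the backdoor-treewidth exceeds $k$; and a \emph{solving} subroutine that, given such a $B$ and a tree decomposition of its torso, decides $\III$ in FPT time. Composing them proves the theorem.

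For the solving subroutine I would proceed by dynamic programming along a nice tree decomposition $(T,\{X_t\}_{t\in V(T)})$ of the torso graph, whose width is bounded by $w = f(k)$ and whose bags therefore each contain at most $w+1$ backdoor variables from $B$. The key observation is that every constraint of $\III$ either lives entirely inside some bag of the decomposition (after contraction this is forced by the definition of backdoor-treewidth, since each contracted piece corresponds to a connected component of $\III - B$ whose neighbourhood in $B$ must sit inside a single bag) or has scope covered by $B \cap X_t$ for some $t$. For each bag $X_t$ I enumerate every assignment $\alpha$ to $B \cap X_t$ (there are at most $|D|^{w+1}$ of them, where $D$ is the fixed finite domain of $\Gamma$); for each such $\alpha$, the residual constraints attached to that bag form a $\CSP(\Gamma)$-instance, which can be solved in polynomial time by assumption. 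The DP table at node $t$ records, for each $\alpha$, whether a consistent extension to all backdoor variables introduced in the subtree rooted at $t$ exists, and the join/forget/introduce operations are the standard treewidth-DP update rules. The total running time is $|D|^{\bigoh(w)} \cdot \mathrm{poly}(|\III|)$, which is FPT in $k$.

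For the detection subroutine I would use iterative compression combined with an obstruction-based branching strategy. Since $\Gamma$ is a fixed finite language, there is a finite set of ``forbidden'' constraint patterns whose presence certifies that an instance is not in $\CSP(\Gamma)$; every strong backdoor must hit each such pattern on at least one variable. Starting from a trivial backdoor (all variables) and removing variables one by one, at each step I call a \emph{compression routine} that, given a backdoor $B'$ of backdoor-treewidth at most $k+1$, either produces a backdoor $B$ of backdoor-treewidth at most $k$ or reports failure. The compression routine guesses the intersection $B \cap B'$, and on the remaining variables runs a bounded-depth branching on forbidden patterns combined with a treewidth-sensitive separation procedure: whenever the current candidate torso has a small balanced separator, I branch on how $B$ interacts with the separator (placing at most $w+1$ separator variables into each bag), and recurse on the two sides. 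Throughout, standard treewidth-approximation tools (e.g.\ Bodlaender-type algorithms) are used to certify the width bound of each intermediate torso.

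The main obstacle I expect is precisely this detection step: we need a backdoor that is small relative to some tree-structure \emph{of the torso}, but the torso itself depends on which variables we pick, so backdoor choice and tree decomposition have to be optimised simultaneously rather than sequentially. In particular, naive approaches fail because a locally good branching choice (hitting the next obstruction greedily) can drastically increase torso-treewidth, while a locally good separator choice can cut through obstructions without actually breaking them. Overcoming this will likely require a notion of boundaried backdoor or ``partial'' backdoor with interface to the already-decomposed part, so that the recursion can be performed piecewise along a tree decomposition while maintaining both the backdoor property and the width bound; the finiteness of $\Gamma$ (hence the finiteness of obstruction signatures on each boundary) is what should keep the number of interface types bounded and yield the FPT guarantee.
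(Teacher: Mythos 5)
Your solving subroutine is essentially the paper's Lemma~\ref{lem:usingwidth}: a dynamic program over a tree decomposition of the torso, using the fact that the neighbourhood of every component of $G-X$ is a clique in $\torso(X)$ and hence lies in a single bag, and that after instantiating the at most $w+1$ backdoor variables of a bag the attached residual instance is in $\CSP(\Gamma)$ and polynomial-time solvable. That half is fine.

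The detection subroutine, however, is where the theorem's entire difficulty lies, and your sketch has a genuine gap. Iterative compression in the form you describe cannot work here: the parameter is the \emph{treewidth of the torso}, not the \emph{size} of the backdoor, so the backdoor $B'$ you are trying to compress can be arbitrarily large (this is precisely the point of backdoor-treewidth, as the paper stresses when explaining why algorithms for finding small backdoors are useless here). Your compression routine ``guesses the intersection $B\cap B'$'', which is a $2^{|B'|}$ enumeration and hence not FPT when $|B'|$ is unbounded; the standard compression paradigm has no leverage without a size bound on the old solution. The subsequent ``bounded-depth branching on forbidden patterns combined with a treewidth-sensitive separation procedure'' is not specified at a level where one could check it, and you yourself identify the obstruction: the torso, and hence the width constraint, depends on the very set being constructed, so greedy hitting of obstructions and separator-guided recursion interfere with each other. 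The paper resolves this differently: it proves a finite-state lemma for $t$-boundaried CSP instances (an equivalence $\sim_{t,k}$, refined by $\equiv_{t,h,\varepsilon}$ defined through ``realizable configurations'' that record how a partial backdoor, the collapse of outside components into the torso, and bounded-size labeled minor folios of the torso interact across the boundary), shows every class has a representative of size bounded in $k$ and $t$, and then runs the recursive-understanding technique of Grohe et al.: repeatedly find a $(q,k+1)$-separation, recurse into one side, and replace an ``unbreakable'' side by its small equivalent representative, with the base case (nice instances) handled by Courcelle's theorem when the treewidth is bounded and by enumeration of minimal backdoors plus bounded branching otherwise. Your closing remark about boundaried/partial backdoors with finitely many interface types gestures toward exactly this machinery, but the proposal does not construct the equivalence, the representatives, or the replacement argument, and without them (or some substitute) the detection step --- and hence the theorem --- is not established.
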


We note that our result is in fact tight as far as the choice of the
language $\Gamma$ is concerned: $\Gamma$ must clearly be tractable,
and both the backdoor-based and width-based approaches are known to fail
for infinite languages under established complexity assumptions. To be
more specific, finding strong backdoors is not even \FPT\
parameterized by backdoor size if the arity of relations in the
language is unbounded~\cite{GaspersMisraOrdyniakSzeiderZivny14},
primal treewidth implicitly bounds the arity of relations, and both
approaches require bounded domain to
solve CSP in {\FPT} time~\cite{SamerSzeider10}.

Two separate problems need to be dealt with in order to use backdoor-treewidth for solving constraint satisfaction: 
finding a strong backdoor of small treewidth, and then using it to actually solve the CSP instance. The latter task can be solved efficiently by a dynamic programming procedure on a tree-decomposition. 
 However, finding strong backdoors of small treewidth is considerably more complicated and forms the main technical contribution of this article. We note in particular that algorithms for finding small backdoors to tractable classes cannot be used for this purpose, since the size of the backdoors we are interested in can be very large. In fact, it is even far from obvious that we can detect a backdoor of treewidth at most $k$ in polynomial time when $k$ is considered a constant (and the order of the polynomial may depend on $k$).
 
Our result on backdoor-treewidth also carries over to the counting variant of CSP ($\sharpCSP$).
  $\sharpCSP$ is a prominent \#P-complete
 extension of CSP problem which asks for the number of variable
 assignments that satisfy the given constraints. Structural
 restrictions as well as language restrictions have been studied for
 $\sharpCSP$. The dynamic programming algorithm for CSP for instances
 of bounded primal treewidth can be readily adapted to $\sharpCSP$
 (see, e.g., \cite{Farnqvist12}).  A constraint language $\Gamma$ is
 \emph{\#-tractable} if $\sharpCSP(\Gamma)$ ($\sharpCSP$ restricted to
 instances whose constraints use only relations from $\Gamma$) can be
 solved in polynomial time.  Bulatov~\cite{Bulatov13} characterized
 all finite \emph{\#-tractable} constraint languages.  Applying our results, we obtain the following corollary.

\begin{corollary}\label{cor:counting}
  Let $\Gamma$ be a fixed \#-tractable finite constraint language. Then, $\sharpCSP$ parameterized by the backdoor-treewidth with respect to $\Gamma$ is {\rm {\FPT}}.
\end{corollary}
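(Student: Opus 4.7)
The plan is to derive Corollary~\ref{cor:counting} from Theorem~\ref{thm:main main theorem} by re-using its ``finding a backdoor of small treewidth'' subroutine verbatim and replacing only the final dynamic-programming step that exploits the decomposition. Concretely, given the input $\sharpCSP$ instance $\III$ with backdoor-treewidth at most $k$, I would first invoke the algorithm promised by Theorem~\ref{thm:main main theorem} to compute, in FPT time, a strong backdoor $X$ into $\CSP(\Gamma)$ together with a tree-decomposition of width at most $f(k)$ of the ``torso'' instance obtained by contracting the non-backdoor part. This step is oblivious to whether we want to decide or to count satisfying assignments, so it transfers with no modification.

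The second step is to adapt the decision DP on this tree-decomposition into a counting DP. For each bag $B_t$ and each partial assignment $\tau : X\cap B_t \to \DDD$, we store the number $N_t(\tau)$ of extensions of $\tau$ to all variables appearing in the subtree rooted at $t$ that satisfy every constraint whose scope lies entirely inside the subtree. Introduce, forget and join nodes are handled by the standard operations from counting algorithms on tree decompositions: joins multiply the counts of the two children indexed by their common $\tau$, introduce/forget nodes sum over the introduced/forgotten value. The only genuinely new ingredient occurs when the bag $B_t$ contains one of the ``compound'' constraints $\III_C$ produced by the contraction, i.e., a $\CSP(\Gamma)$ sub-instance whose variables are the backdoor variables appearing in $B_t$ together with a (possibly large) collection of non-backdoor variables contained in the subtree at $t$. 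For each $\tau$, substituting $\tau$ into $\III_C$ yields an instance whose remaining free variables lie entirely outside $X$ and whose constraints use only relations from $\Gamma$; the number of satisfying extensions is therefore exactly a $\sharpCSP(\Gamma)$ count, which can be computed in polynomial time because $\Gamma$ is \#-tractable (Bulatov~\cite{Bulatov13}).

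Since each bag has size at most $f(k)+1$, there are at most $|\DDD|^{f(k)+1}$ partial assignments to process per bag, and each is resolved by either standard arithmetic combination or one polynomial-time $\sharpCSP(\Gamma)$ call, so the overall running time remains FPT in $k$. Conceptually, the entire difficulty of the result, namely constructing the backdoor-treewidth decomposition, is inherited from Theorem~\ref{thm:main main theorem}; the only obstacle specific to the counting version is ensuring that the ``local'' work at each bag can be done by exact counting rather than mere satisfaction testing, and this is precisely why the hypothesis must be strengthened from tractability to \#-tractability of $\Gamma$.
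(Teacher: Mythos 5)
Your proposal is correct and follows essentially the same route as the paper: the paper derives the corollary by running the backdoor-detection algorithm (Theorem~\ref{thm:find_backdoor}, plus self-reducibility) to obtain a strong backdoor of width at most $k$, and then applying Lemma~\ref{lem:usingwidth}, whose dynamic program over a tree-decomposition of the torso already handles the counting case exactly as you describe, with the \#-tractability of $\Gamma$ used only for the polynomial-time counting of extensions within each component hanging off a bag. The only cosmetic differences are that you attribute the backdoor-finding step to Theorem~\ref{thm:main main theorem} rather than Theorem~\ref{thm:find_backdoor} and cite a width bound $f(k)$ where the paper gets width exactly at most $k$; neither affects correctness.
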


\sv{

\noindent
{\bf (a)} In the first part of our algorithm to detect strong backdoors of small treewidth, we define a notion of \emph{boundaried} CSP instances in the spirit of boundaried graphs  and show that for any $t,k\in \mathbb N$, there is an equivalence relation $\sim_{t,k}$ on the set of all $t$-boundaried CSP instances such that (i) this relation has at most $f(k,t)$ equivalence classes for some function $f$ depending only on $k$ and the constraint language $\Gamma$,  and (ii) for any two $t$-boundaried CSP instances in the same equivalence class of $\sim_{t,k}$, they `interact in the same way' with every other $t$-boundaried CSP-instance.
\noindent
{\bf (b)} We then describe an algorithm that for any given $t,k\in \naturals$ runs in time $\bigoh(g(t,k))$ for some function $g$  and actually \emph{constructs} a set $\mathfrak{H}$ of $f(k,t)$ CSP instances, one from each equivalence class of the relation $\sim_{t,k}$. Additionally, we show that each instance in this set has size bounded by a function of $k$ and $t$.
\noindent
{\bf (c)} In this part, we show that for any given $t$-boundaried CSP instance $\III$ whose size exceeds a certain bound depending on $k$ and $t$ and whose incidence graph satisfies certain connectivity properties, we can in time $g(t,k)|\III|^{\bigoh(1)}$ correctly determine the equivalence class that this instance belongs to and compute a strictly smaller $t$-boundaried CSP instance $\III'$ which belongs to the same equivalence class of $\sim_{t,k}$ as $\III$. It follows that once $\III'$ is computed, we can `replace' $\III$ with the strictly smaller $\III'$, without altering the existence (or non-existence) of a strong backdoor of small treewidth. 
Our replacement framework is inspired by the graph replacement tools dating back to the results of Fellows and Langston~\cite{FellowsL89} and further developed by Arnborg, Bodlaender, and other authors~\cite{ArnborgCPS93,BodlaenderF96a,BodlaendervA01a,Fluiter97,BodlaenderH98}. 
\noindent
{\bf (d)} In this part, we utilize the \emph{recursive-understanding} technique, introduced by Grohe et al. \cite{GroheKMW11} to solve the Topological Subgraph Containment problem and used with great success in the design of {\FPT} algorithms for several other fundamental graph problems (see \cite{KawarabayashiT11,ChitnisCHPP12}), to recursively compute a $t$-boundaried subinstance with the properties required to execute Part~\textbf{(c)}. Once this process terminates, we have an instance whose size is upper-bounded by a function of $k$ and $t$ which can be solved by brute force.

}

\lv{Our algorithm to detect strong backdoors of small treewidth has four parts. 

\begin{description}\item[(a)] In the first part, we define a notion of \emph{boundaried} CSP instances in the spirit of boundaried graphs  and show that for any $t,k\in \mathbb N$, there is an equivalence relation $\sim_{t,k}$ on the set of all $t$-boundaried CSP instances such that (i) this relation has at most $f(k,t)$ equivalence classes for some function $f$ depending only on $k$ and the constraint language $\Gamma$,  and (ii) for any two $t$-boundaried CSP instances in the same equivalence class of $\sim_{t,k}$, they `interact in the same way' with every other $t$-boundaried CSP-instance.

\item[(b)] We then describe an algorithm that for any given $t,k\in \naturals$ runs in time $\bigoh(g(t,k))$ for some function $g$  and actually \emph{constructs} a set $\mathfrak{H}$ of $f(k,t)$ CSP instances, one from each equivalence class of the relation $\sim_{t,k}$. Additionally, we show that each instance in this set has size bounded by a function of $k$ and $t$.

\item[(c)] In this part, we show that for any given $t$-boundaried CSP instance $\III$ whose size exceeds a certain bound depending on $k$ and $t$ and whose incidence graph satisfies certain connectivity properties, we can in time $g(t,k)|\III|^{\bigoh(1)}$ correctly determine the equivalence class that this instance belongs to and compute a strictly smaller $t$-boundaried CSP instance $\III'$ which belongs to the same equivalence class of $\sim_{t,k}$ as $\III$. It follows that once $\III'$ is computed, we can `replace' $\III$ with the strictly smaller $\III'$, without altering the existence (or non-existence) of a strong backdoor of small treewidth. 
Our replacement framework is inspired by the graph replacement tools dating back to the results of Fellows and Langston~\cite{FellowsL89} and further developed by Arnborg, Bodlaender, and other authors~\cite{ArnborgCPS93,BodlaenderF96a,BodlaendervA01a,Fluiter97,BodlaenderH98}.


\item[(d)] In this part, we utilize the \emph{recursive-understanding} technique, introduced by Grohe et al. \cite{GroheKMW11} to solve the Topological Subgraph Containment problem and used with great success in the design of {\FPT} algorithms for several other fundamental graph problems (see \cite{KawarabayashiT11,ChitnisCHPP12}), to recursively compute a $t$-boundaried subinstance with the properties required to execute Part~\textbf{(c)}. Once this process terminates, we have an instance whose size is upper-bounded by a function of $k$ and $t$ which can be solved by brute force.

\end{description}
}

\lv{
\subsection{Related Work}}
\sv{
\noindent {\bf Related Work.}}
Williams et al.~\cite{WilliamsGomesSelman03,WilliamsGomesSelman03a}
introduced the notion of \emph{backdoors} for the runtime analysis of
algorithms for CSP and SAT, see also \cite{HemaspaandraWilliams12} for
a more recent discussion of backdoors for SAT.  A backdoor is a small
set of variables whose instantiation puts the instance into a fixed
tractable class (called the base class). One distinguishes between strong and weak backdoors,
where for the former all instantiations lead to an instance in the
base class, and for the latter at least one leads to a satisfiable
instance in the base class.  Backdoors have been studied under a
different name by Crama et al.~\cite{CramaEkinHammer97}. The study of
the parameterized complexity of finding small backdoors was initiated
by Nishimura et al.~\cite{NishimuraRagdeSzeider04-informal} for SAT,
who considered backdoors into the classes of Horn and Krom CNF
formulas. Further results cover the classes of renamable Horn
formulas~\cite{RazgonOSullivan09}, q-Horn
formulas~\cite{GaspersOrdyniakRamanujanSaurabhSzeider13} and classes
of formulas of bounded
treewidth~\cite{GaspersSzeider13,RamanujanLokshtanovFominSaurabhMisra15}. The
detection of backdoors for CSP has been studied in several works~\cite{BessiereCarbonnelHebrardKatsirelosWalsh13,CarbonnelCooperHebrard14}. Gaspers
et al.~\cite{GaspersMisraOrdyniakSzeiderZivny14} obtained
results on the detection of strong backdoors into \emph{heterogeneous}
base classes of the form $\CSP(\Gamma_1)\cup\dots \cup \CSP(\Gamma_d)$
where for each instantiation of the backdoor variables, the reduced
instance belongs entirely to some $\CSP(\Gamma_i)$ (possibly to
different $\CSP(\Gamma_i)$'s for different instantiations). This direction was recently further generalized by Ganian et al.~\cite{GanianRamanujanSzeider16} by developing a framework for detecting strong backdoors into so-called \emph{scattered} base classes with respect to $\Gamma_1\dots \Gamma_d$; there, each instantiation of the backdoor variables results in a reduced instance whose every connected component belongs entirely to some $\CSP(\Gamma_i)$ (possibly to different $\CSP(\Gamma_i)$'s for different components and different instantiations).

\sv{\smallskip \noindent {\emph{Statements whose proofs are located in the appendix are marked with $\star$.}}}

\section{Preliminaries}\label{sect:prel}
We use standard graph terminology, see for instance the handbook by Diestel~\cite{Diestel12}. For $i\in \mathbb{N}$, we use $[i]$ to denote the set $\{1,\dots,i\}$.

\lv{
\subsection{Constraint Satisfaction}}
\sv{
\subparagraph*{Constraint Satisfaction.}}
\label{sub:CSP}
Let $\VVV$ be a set of variables and $\DDD$ a finite set of
values.  A \emph{constraint of arity $\rho$ over $\DDD$} is a pair $(S,R)$
where $S=(x_1,\dots,x_\rho)$ is a sequence of variables from $\VVV$ and
$R \subseteq \DDD^\rho$ is a $\rho$-ary relation. The set
$\var(C)=\{x_1,\dots,x_\rho\}$ is called the \emph{scope} of $C$.  An
\emph{assignment} $\alpha:X\rightarrow \DDD$ is a mapping of a set $X\subseteq \VVV$ of variables. An assignment
$\alpha:X\rightarrow \DDD$ \emph{satisfies} a constraint
$C=((x_1,\dots,x_\rho),R)$ if $\var(C)\subseteq X$ and
$(\alpha(x_1),\dots,\alpha(x_\rho)) \in R$.
For a set $\III$ of constraints we write $\var(\III)=\bigcup_{C\in \III}
\var(C)$ and $\rel(\III)=\SB R \SM (S,R) \in C, C\in \III \SE$.

A finite set $\III$ of constraints is \emph{satisfiable} if there
exists an assignment that simultaneously satisfies all the constraints
in $\III$.  The \emph{Constraint Satisfaction Problem} (CSP, for
short) asks, given a finite set $\III$ of constraints, whether $\III$
is satisfiable.  The \emph{Counting Constraint Satisfaction Problem}
(\sharpCSP, for short) asks, given a finite set $\III$ of constraints, to
determine the number of assignments to $\var(\III)$ that satisfy
$\III$. {\CSP} is NP-complete and {\sharpCSP} is {\sharpp}P-complete
(see, e.g., \cite{Bulatov13}).

Let $\alpha:X\rightarrow \DDD$ be an assignment. For a $\rho$-ary
constraint $C=(S,R)$ with $S=(x_1,\dots,x_\rho)$ and $R\in \DDD^\rho$, we denote by $C|_\alpha$
the constraint $(S',R')$ obtained from $C$ as follows. $R'$ is
obtained from $R$ by (i) deleting all tuples $(d_1,\dots,d_\rho)$ from
$R$ for which there is some $1\leq i \leq \rho$ such that $x_i\in X$ and $\alpha(x_i)\neq
d_i$, and (ii) removing from all remaining tuples all coordinates $d_i$
with $x_i\in X$.  $S'$ is obtained from $S$ by deleting all variables
$x_i$ with $x_i\in X$.  For a set $\III$ of constraints we define
$\III|_\alpha$ as $\SB C|_\alpha \SM C \in \III \SE$.  

A \emph{constraint language} (or \emph{language}, for short) $\Gamma$
over a domain $\DDD$ is a set of relations (of
possibly various arities) over~$\DDD$.  By $\CSP(\Gamma)$ we denote
CSP restricted to instances $\III$ with $\rel(\III)\subseteq
\Gamma$.  A constraint language is \emph{tractable} if for every
finite subset $\Gamma'\subseteq \Gamma$, the problem $\CSP(\Gamma')$
can be solved in polynomial time.  A constraint language is
\emph{\sharpp-tractable} if for every finite subset $\Gamma'\subseteq
\Gamma$, the problem $\sharpCSP(\Gamma')$ can be solved in polynomial
time. Throughout this paper, we make the technical assumption that every considered tractable
or \sharpp-tractable constraint language $\Gamma$ contains the redundant tautological relation of arity $2$; note that if this is not the case, then this relation can always be added into $\Gamma$ and the resulting language will still be tractable or \sharpp-tractable, respectively.
Let $\Gamma$ be a constraint language and $\III$ be an instance of CSP. A variable set $X$ is a \emph{strong backdoor} to $\CSP(\Gamma)$ if for each assignment $\alpha:X\rightarrow \DDD$ it holds that $\III|_\alpha\in \CSP(\Gamma)$. 

The \emph{primal graph} of a CSP instance $\III$ is the graph whose vertices correspond to the variables of $\III$ and where two variables $a,b$ are adjacent iff there exists a constraint in $\III$ whose scope contains both $a$ and $b$.
The \emph{incidence graph} of a CSP instance
$\III$  is the bipartite graph whose vertices correspond to the variables and constraints of $\III$, and where vertices corresponding to a variable $x$ and a constraint $C$ are adjacent if and only if $x\in \var(C)$. 
Observe that an incidence graph does not uniquely define a CSP instance; however, in this paper the CSP instance from which a graph is obtained will always be clear from the context. Hence for an incidence or primal graph $G$
we will denote the corresponding CSP instance by $\psi(G)$. Furthermore, we slightly abuse the notation and use $\cV(G)$ to refer to the vertices of $G$ that correspond to variables in $\psi(G)$, and $\CCC(G)$ to refer to the vertices of $G$ that correspond to constraints in $\psi(G)$. Also, for a vertex subset $X \subseteq V(G)$, we continue to use the notations $\cV(X)$ and $\CCC(X)$ to refer to the sets $\cV(G) \cap X$ and $\CCC(G) \cap X$, respectively.


%
\lv{
\subsection{Treewidth}}
\sv{
\noindent{\bf Treewidth.}}
\label{trewap}
Let $G$ be a graph. A {\em tree decomposition} of $G$ is a pair $(T,\mathcal{X}=\{X_{t}\}_{t\in V(T)})$ where $T$ is a tree and ${\cal X}$ is a collection of subsets of $V(G)$ such that: 
\lv{
\begin{itemize}
\item $\forall {e=uv \in E(G)}, \exists {t\in V(T)} : \{u,v\} \subseteq X_{t}$, and 
\item $\forall {v\in V(G)}, \ T[\{t\mid v\in X_{t}\}]$ is a non-empty connected subtree of $T$. 
\end{itemize}
}
\sv{
\begin{enumerate*}[label={\textbf{(\arabic*)}}]
\item $\forall {e=uv \in E(G)}$, $\exists {t\in V(T)} : \{u,v\}$ $\subseteq X_{t}$, and 
\item $\forall {v\in V(G)}, \ T[\{t\mid v\in X_{t}\}]$ is a non-empty connected subtree of $T$. 
\end{enumerate*}
}
We call the vertices of $T$ {\em nodes} and the sets in $\mathcal{ X}$ {\em bags} of the tree decomposition $(T,{\cal X})$.  The {\em width} of $(T,{\cal X})$ is equal to $\max_{}\{|X_t|-1\mid {t\in V(T)}\}$ and the {\em treewidth} of $G$ (denoted $\tw(G)$) is the minimum width over all tree decompositions of $G$. 
\lv{

A {\em nice tree decomposition} is a pair $(T,{\cal X})$ where $(T,{\cal X})$ is a tree decomposition such that $T$ is a rooted tree and the following conditions are satisfied: 
\begin{itemize}
\item Every node of the tree $T$ has at most two children; 
\item if a node $t$ has two children $t_1$ and $t_2$, then $X_t = X_{t_1} = X_{t_2}$; and
\item  if a node $t$ has one child $t_1$, then either $|X_t| = |X_{t_1}| + 1$  and $X_{t_1} \subset X_{t}$ (in this case we call $t$ an {\em insert node}) or $|X_t| = |X_{t_1}| -1$ and $X_t \subset X_{t_1}$ (in this case we call $t$ a {\em forget node}). 
\end{itemize}

It is possible to transform a tree decomposition $(T,{\cal X})$ into a nice tree decomposition $(T',{\cal X}')$ in time $O(|V|+|E|)$~\cite{Bodlaender96}.
}
The primal treewidth of a $\CSP$ instance $\III$ is the treewidth of its primal graph, and similarly the incidence treewidth of $\III$ is the treewidth of its incidence graph. We note that if the constraints have bounded arity, then any class of $\CSP$ instances has bounded primal treewidth if and only if it has bounded incidence treewidth~\cite{SamerSzeider10a}.

\begin{proposition}[\cite{KolaitisVardi00}] \label{prop:treewidth_torso_bound}
Let $\III$ be a $\CSP$ instance where the constraints have arity bounded by $\rho\in \naturals$. Then, the primal treewidth of the instance is at most $\rho(t+1)-1$ where $t$ is the incidence treewidth of the instance.\end{proposition}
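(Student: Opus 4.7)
The plan is to take a tree decomposition of the incidence graph of width $t$ and transform it, on the same underlying tree, into a tree decomposition of the primal graph of width at most $\rho(t+1)-1$. Specifically, given an incidence tree decomposition $(T,\{X_t\}_{t\in V(T)})$ of $\III$ of width $t$, I would define a new collection of bags by
\[
X'_t \;=\; \cV(X_t) \;\cup\; \bigcup_{C \in \CCC(X_t)} \var(C),
\]
i.e., we keep the variable vertices that were already in the bag and replace every constraint vertex by its scope.

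First I would bound the size of each $X'_t$. Setting $a = |\cV(X_t)|$ and $b = |\CCC(X_t)|$, we have $a+b \le t+1$, and since each constraint has arity at most $\rho$, $|X'_t| \le a + \rho b \le \rho(a+b) \le \rho(t+1)$. Hence the width of the new decomposition is at most $\rho(t+1)-1$, giving the claimed bound.

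Next I would verify the two defining properties of a tree decomposition for the primal graph. For edge coverage, take any edge $uv$ of the primal graph; by definition there is a constraint $C\in \III$ with $\{u,v\}\subseteq \var(C)$. In the incidence graph, both $uC$ and $vC$ are edges, and in particular $C$ appears in some bag $X_t$ of the incidence decomposition. Since $C\in \CCC(X_t)$, the construction of $X'_t$ adds all of $\var(C)$, so $\{u,v\}\subseteq X'_t$. For the subtree-connectivity property, fix a variable $x$; in the new decomposition $x$ lies in $X'_t$ exactly when either $x\in X_t$ or some constraint $C$ with $x\in \var(C)$ satisfies $C\in X_t$. The corresponding set of nodes is therefore $T_x \cup \bigcup_{C:\, x\in \var(C)} T_C$, where $T_x$ and $T_C$ are the (connected) subtrees coming from the original decomposition. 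Because each edge $xC$ in the incidence graph is covered by some bag, $T_x \cap T_C \neq \emptyset$ for every relevant $C$, so each $T_x \cup T_C$ is connected, and the whole union is connected.

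There is no real obstacle here; the only slightly delicate point is the connectivity argument, which hinges on using the edge-coverage property of the \emph{incidence} decomposition to glue $T_x$ to every $T_C$ with $x\in \var(C)$. Together with the bag-size calculation, this yields a valid tree decomposition of the primal graph of width at most $\rho(t+1)-1$, completing the proof.
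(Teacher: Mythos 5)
Your proof is correct: replacing every constraint vertex in each bag of a width-$t$ incidence tree decomposition by the variables in its scope, bounding the new bag sizes by $\rho(t+1)$, and using the covered incidence edges to glue the subtree of each variable to the subtrees of its constraints is exactly the standard argument behind this proposition, which the paper does not prove itself but cites from Kolaitis and Vardi. All three required checks (bag size, edge coverage, and subtree connectivity) are handled correctly, so there is nothing to add.
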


\lv{
\subsection{$t$-boundaried CSP Instances}}
\sv{
\noindent {\bf $t$-boundaried CSP Instances.}}
A \emph{$t$-boundaried graph} is a graph $G$  with a set $Z \subset V(G)$  of size at most $t$ with each vertex $v \in Z$  having a unique label $\ell(v) \in \{1, \ldots, t\}$. We refer to $Z$ as the {\em boundary} of $G$. For a $t$-boundaried graph $G$, $\delta(G)$ denotes the boundary of $G$. When it is clear from the context, we will often use the notation $(G,Z)$ to refer to a $t$-boundaried graph $G$ with boundary $Z$. For $P\subseteq [t]$, we use $P(G,Z)$ to denote the subset of $Z$ with labels in $P$; for $i\in [t]$ we use $i(G,Z)$ instead of $\{i\}(G,Z)$ for brevity. 
Two $t$-boundaried graphs $G_1$ and $G_2$ can be `glued' together to obtain a new incidence graph, which we denote by $G_1 \oplus G_2$. The gluing operation takes the disjoint union of $G_1$ and $G_2$ and identifies the vertices of $\delta(G_1)$ and $\delta(G_2)$ with the same label. 

In some cases, we will also use a natural notion of replacement of boundaried graphs.
Let $(G_1,Z_1)$ be a $t$-boundaried graph which is an induced subgraph of a graph $G$ such that $Z_1$ is a separator between $V(G_1)\setminus Z_1$ and $V(G)\setminus V(G_1)$. Let $(G_2,Z_2)$ be a $t$-boundaried graph. Then the operation of \emph{replacement} of $(G_1,Z_1)$ by $(G_2,Z_2)$ results in the graph $G'=(G[V(G)\setminus (V(G_1)\setminus Z_1)],Z_1) \oplus (G_2,Z_2)$. Furthermore, if $G$ was a $j$-boundaried graph with boundary $Z$ and $Z\cap V(G_1)\subseteq Z_1$, then the resulting graph $G'$ is also a $j$-boundaried graph with the same boundary.

In this paper, it will sometimes be useful to lift the notions of boundaries and gluing from graphs to CSP instances. A \emph{$t$-boundaried incidence graph} of a CSP instance $\III$ is a $t$-boundaried graph $G$ with boundary $Z$ such that $G$ is the incidence graph of $\III$ and $Z\subseteq \cV$. Similarly, we call a CSP instance $\III$ with $t$ uniquely labeled variables a \emph{$t$-boundaried CSP instance}. Note that boundaried incidence graphs and boundaried CSP instances are de-facto interchangeable, but in some cases it is easier to use one rather than the other due to technical reasons. 

The gluing operations of boundaried incidence graphs and boundaried CSP instances are defined analogously as for standard boundaried graphs. Observe that if $G_1$ and $G_2$ are $t$-boundaried incidence graphs of $\III_1$ and $\III_2$, respectively, then $G_1\oplus G_2$ is also an incidence graph; furthermore, $\psi(G_1\oplus G_2)$ is well-defined and can be reconstructed from $\III_1$ and $\III_2$.

\lv{
\subsection{Minors}
\label{sub:minors}
Given an edge $e=xy$ of a graph $G$, the graph  $G/e$ is obtained from  $G$ by contracting the edge $e$, that is, the endpoints $x$  and $y$ are replaced by a new vertex $v_{xy}$ which  is  adjacent to the old neighbors of $x$ and $y$ (except from $x$ and $y$).  A graph $H$ obtained by a sequence of edge-contractions is said to be a \emph{contraction} of $G$.  We denote it by $H\leq_{c} G$. A graph $H$ is a {\em minor} of a graph $G$ if $H$ is the contraction of some subgraph of $G$ and we denote it by $H\leq_{m} G$. We say that a graph $G$ is {\em $H$-minor-free} when it does not contain $H$ as a minor. We also say that a graph class ${\cal G}$ is {\em $H$-minor-free} (or, excludes $H$ as a minor) when all its members are $H$-minor-free.

\begin{definition}\label{def:lgi}
Let $G_1$ and $G_2$ be two (not necessarily boundaried) graphs, and let $\Sigma$ be a set of symbols. For $i \in \{1,2\}$, let $f_{G_i}$ be a function that associates with every vertex of $V(G_i)$ some subset of $\Sigma$. The image of a vertex $v \in G_i$ under $f_{G_i}$ is called the label of that vertex. We say that that $G_1$ is {\em label-wise isomorphic} to $G_2$, and denote it by $G_1 \cong_{t} G_2$, if there is a map $h:V(G_1) \rightarrow V(G_2)$ such that (a) $h$ is a bijection; (b) $(u,v)\in E(G_1)$ if and only if $(h(u),h(v))\in E(G_2)$ and (c) $f_{G_1}(v)=f_{G_2}(h(v))$. We call $h$ a {\em label-preserving isomorphism}.
\end{definition}

Notice that the first two conditions of Definition~\ref{def:lgi} simply indicate that  $G_1$ and $G_2$ are isomorphic. Now, let $G$ be a $t$-boundaried graph, that is, $G$ has $t$ distinguished vertices, uniquely labeled from $1$ to $t$. Given  a $t$-boundaried graph $G$, we define a canonical labeling function $\mu_G: V(G)\rightarrow 2^{[t]}$. The function  $\mu_G$  maps every distinguished vertex $v$ with label $\ell \in [t]$ to the set $\{\ell\}$, that is $\mu_G(v)=\{\ell\}$, and for all vertices $v\in (V(G)\setminus \partial(G))$ we have that $\mu_G(v)=\emptyset$. 

Next we define a notion of labeled edge contraction. 
Let $H$ be a (not necessarily boundaried) graph together with a function $f_H:V(H)\rightarrow 2^{[t]}$ for some $t\in \naturals$ and $(u,v)\in E(H)$. Furthermore, let $H'$ be the graph obtained 
from $H$ by identifying the vertices $u$ and $v$ into $w_{uv}$, removing all the parallel edges and removing all the loops.  Then by 
{\em labeled edge contraction} of an edge $(u,v)$ of a graph $H$, we mean obtaining a graph $H'$ with the label function 
$f_{H'}:V(H')\rightarrow 2^{[t]}$. For  $x\in V(H')\cap V(H)$ we have that $f_{H'}(x)=f_H(x)$ and for $w_{uv}$ we define 
$f_{H'}(w_{uv})=f_H(u)\cup f_H(v)$.  Now we recall the notion of labeled minors of a 
$t$-boundaried graph. 

\begin{definition}
Let $H$ be a graph together with a function $f:V(H)\rightarrow 2^{[t]}$ and $G$ be a 
 $t$-boundaried graph with canonical labeling function $\mu_G$. 
A graph $H$ is called a labeled minor of $G$, if we can obtain a labeled isomorphic copy of $H$ from $G$ 
by performing edge deletion, vertex deletion and labeled edge contraction.
\end{definition}

\begin{remark}We note that the notion of a label-preserving isomorphism for graphs depends only on the labeling function, and is oblivious to the boundary. In particular, if $G$ and $H$ are two labeled $t$-boundaried graphs that are label-wise isomorphic, a label preserving isomorphism is not required to necessarily map the boundary vertices of $G$ to boundary vertices of $H$.  
\end{remark}

Finally, we define the notion of {\em $h$-folios} for boundaried graphs.

\begin{definition}
For $h\in \naturals$, the $h$-folio of a labeled graph $G$ with labeling $\Lambda$, is the set ${\cM}_h(G,\Lambda)$ of all labeled minors of $G$ on at most $h$ vertices. 
\end{definition}
}

\section{Backdoor-Treewidth}
\label{sec:backdoorwidth}
In this section we give a formal definition of the notion of backdoor-treewidth.

\begin{definition}
Let $G$ be a graph and $X\subseteq V(G)$.	We denote by $\torso_G(X)$ the following graph defined over the vertex set $X$. For every pair of vertices $x_1,x_2 \in X$, we add the edge $(x_1,x_2)$ if (a) $(x_1,x_2)\in E(G)$ or (b) $x_1$ and $x_2$ both have a neighbor in the same  connected component of $G-X$. That is, we begin with $G[X]$ and  make the neighborhood of every connected component of $G-X$, a clique. When $G$ is an incidence graph of the instance $\III$ and $X$ is a set of variables of $\III$, we also refer to $\torso_G(X)$ as $\torso_\III(X)$.
\end{definition}

  \begin{definition}\label{def:width}
      Let ${\cal F}$ be a class of CSP instances and $\III$ be a CSP instance. Then the \emph{backdoor-treewidth} of $\III$ with respect to $\cal F$, denoted $\sbw_{\cal F}(\III)$, is 
the smallest value of $\tw(\torso_\III(X))$ taken over all strong backdoors $X$ of $\III$ into $\cal F$.
If ${\cal F}=\CSP(\Gamma)$ for some constraint language $\Gamma$, then we call $\sbw_{\cal F}$ the backdoor-treewidth with respect to $\Gamma$.
\end{definition}

As an example, observe that in Figure~\ref{fig:example} the graph $\torso_G(X)$ is a path. Throughout this paper, we sometimes refer to backdoors of small treewidth simply as backdoors of small \emph{width}. Next, we show how backdoors of small treewidth can be used to solve CSP and $\sharpCSP$.

\lv{\begin{lemma}}
\sv{\begin{lemma}[$\star$]}
\label{lem:usingwidth}
 Let $\III$ be a CSP instance over domain $\cD$ and $X$ be a strong backdoor of $\III$ to the class ${\cal F}$. There is an algorithm that, given $\III$ and $X$, runs in time $\bigoh(\vert \cD\vert^{\tw(\torso(X))}\vert \III\vert^{\bigoh(1)})$ and correctly decides whether $\III$ is satisfiable or not. Furthermore, if $\cF$ is \sharpp-tractable and $X$ is a strong backdoor to $\cF$, then in the same time bound one can count the number of satisfying assignments of $\III$. \end{lemma}
 
 \lv{
\begin{proof} The algorithm is a standard dynamic programming procedure over a bounded treewidth graph and hence we only sketch it briefly. Let $G$ denote the incidence graph of $\III$ and let $H$ denote the graph $\torso(X)$ and let $(T,\cX)$ be a nice tree-decomposition of $H$ of width $\tw(H)$.
Now, for every $v\in T$, we define the instance $\III_{v}$ as the subinstance of $\III$ induced on the variables in $X_v$, the bags below it in $(T,\cX)$, and the constraints whose scope is completely contained in the union of $X_v$ and the bags below it. The key observation is that for any connected component of $G-X$, there is a vertex $v\in V(T)$ such that the bag $X_v$ contains the neighbors of this component. This is because these variables induce a complete graph in $\torso(X)$ and by the definition of tree-decompositions, every subgraph that is complete is contained in a bag of any tree-decomposition.

For each $v\in T$, we will define a function $\tau_v:\cD^{X_v}\to \{0,1\}$ which maps assignments of the variables in $X_v$ to 0 to 1. Let $\gamma:X_v\to \cD$ be an assignment to the variables in $X_v$. We define $\tau_v(\gamma)=1$ if there is a satisfying assignment for $\III_v$ that extends $\gamma$ and $\tau_v(\gamma)=0$ otherwise. Let $v^*$ denote the root of $T$. Clearly, the instance $\III$ is satisfiable if and only if there is a $\gamma:X_{v^*}\to \cD$ such that $\tau_{v^*}(\gamma)=1$. We now describe an algorithm to compute $\tau_v$ for every $v\in V(T)$.

It follows from the definition of $\tau$ that for every $u,v\in V(T)$ where $v$ denotes a \emph{forget} node and is a parent of $u$, the function $\tau_v$ can be computed from $\tau_u$. The same holds in the case of \emph{join} nodes. Therefore, it suffices to describe how to compute this function for \emph{leaf} nodes and \emph{introduce} nodes. Let $v$ be a leaf node and let $x$ be the unique variable in $X_v$. Consider the instance $\III_{v}$. We know by the definition of strong backdoors that the instance obtained from $\III_v$ by any instantiation of the variable $x$ is in the language $\Gamma$ which is assumed to be tractable. Hence we simply solve the instance resulting from $\III_v$ for every assignment to $x$. Now, let $v$ be an introduce node with child $u$. If there is a connected component of $G-X$ whose neighbors are in $X_v$ but not in $\III_u$, then we go over all instantiations of the variables in $X_v$ and solve the resulting tractable instance for each such component. Combining this with the function $\tau_u$ gives us the function $\tau_v$. Since one can also compute the number of satisfying assignments in a similar way, this concludes the proof of the lemma.
\end{proof}
}
\sv{
\begin{proof}[Sketch of Proof.]
The algorithm is a standard dynamic programming procedure over a bounded treewidth graph and hence we only sketch it briefly. Let $G$ denote the incidence graph of $\III$ and let $H$ denote the graph $\torso(X)$ and let $(T,\cX)$ be a tree-decomposition of $H$ of width $\tw(H)$.
Now, for every $v\in T$, we define the instance $\III_{v}$ as the subinstance of $\III$ induced on the variables in $X_v$, the bags below it in $(T,\cX)$, and the constraints whose scope is completely contained in the union of $X_v$ and the bags below it. The key observation is that for any connected component of $G-X$, there is a vertex $v\in V(T)$ such that the bag $X_v$ contains the neighbors of this component.

To solve CSP, for each $v\in T$ we define a function $\tau_v$ which maps assignments of the variables in $X_v$ to 0 to 1. Let $\gamma:X_v\to \cD$ be an assignment to the variables in $X_v$. We define $\tau_v(\gamma)=1$ if there is a satisfying assignment for $\III_v$ that extends $\gamma$ and $\tau_v(\gamma)=0$ otherwise. Let $v^*$ denote the root of $T$. Clearly, the instance $\III$ is satisfiable if and only if there is a $\gamma:X_{v^*}\to \cD$ such that $\tau_{v^*}(\gamma)=1$. At this point it suffices to describe how to dynamically compute the function $\tau_v$ for each node in the tree-decomposition; this step can be facilitated by the use of so-called nice tree-decompositions. The algorithm to solve $\sharpCSP$ is similar; there $\tau_v$ is extended by information about how many ways there are to extend an assignment to the variables in $X_v$ to a satisfying assignment for $\III_v$.
\end{proof}
}

As the width of a backdoor can be arbitrarily smaller than its size, the width provides a much better measure of how far away an instance is from a tractable base class. In particular, the width lower-bounds both the primal treewidth and the backdoor size. We formalize this below.

\lv{\begin{proposition}}
\sv{\begin{proposition}[$\star$]}
\label{lem:comparingwidth}
Let $\III$ be a CSP instance and $\cal F$ be a class of CSP instances. Let $q$ be the primal treewidth of $\III$ and $r$ be the minimum size of a strong backdoor to $\cal F$ in $\III$. Then $\sbw_{\cal F}(\III)\leq \min(q,r)$. 
\end{proposition}

\lv{
\begin{proof}
Observe that the graph $\torso(X)$ is a minor of the primal graph $G$ of $\III$. Indeed, to obtain $\torso(X)$ from $G$, it suffices to gradually contract all edges with an endpoint outside of $X$. Since minor operations can never increase the treewidth, it follows that $tw(\torso(X))\leq q$. Moreover, since the treewidth of a graph on $|X|$ vertices is upper-bounded by $|X|$, it follows that $tw(\torso(X))\leq r$ and $tw(\torso(X))\leq r'$, respectively.
\end{proof}
}

In order to prove Theorem~\ref{thm:main main theorem}, we give an {\FPT} algorithm for the problem of finding strong backdoors parameterized by their width (formalized below). 
We note that since we state our results in as general terms as possible, the dependence on $k$ is likely to be sub-optimal for specific languages and could be improved using properties specific to each language.

\defparproblem{{\sc Width Strong-$\CSP(\Gamma)$ Backdoor Detection}}{CSP instance $\III$, integer $k$.}{$k$}{Return a set $X$ of variables such that $X$ is a strong backdoor of $\III$ to $\CSP(\Gamma)$ of width at most $k$ or correctly conclude that no such set exists.
}

The main technical content of the article then lies in the proof of the following theorem.

\begin{theorem}
	\label{thm:find_backdoor}
{\sbddetection} is \emph{\FPT} for every finite $\Gamma$.
\end{theorem}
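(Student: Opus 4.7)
The plan is to use a protrusion-replacement framework combined with recursive understanding, following the four-part strategy sketched in the introduction. The underlying intuition is that for a fixed finite $\Gamma$, both ``becoming a $\CSP(\Gamma)$-instance after boundary instantiation'' and ``the torso has treewidth at most $k$'' are finite-state properties with respect to small separators, so the composite property ``$\III$ has a strong backdoor to $\CSP(\Gamma)$ of width at most $k$'' should be finite-state as well. We exploit this by iteratively replacing large, low-boundary subinstances with small equivalent ones until the whole instance has size bounded by a function of $k$, at which point the problem is solved by brute force in \FPT\ time.

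For Part~(a) I would define $\sim_{t,k}$ on $t$-boundaried CSP instances by $\III_1 \sim_{t,k} \III_2$ iff for every $t$-boundaried CSP instance $J$, $\III_1 \oplus J$ admits a strong backdoor of width at most $k$ to $\CSP(\Gamma)$ exactly when $\III_2 \oplus J$ does. To bound the number of classes by a function $f(k,t)$, I would associate with each $\III$ a finite \emph{type} recording (i) for every partial assignment of the boundary variables, the relation on the remaining boundary expressible by propagating that assignment through the interior (a finite object since $\DDD$ is fixed and only finitely many relations of arity at most $t$ over $\DDD$ exist), and (ii) a labeled-folio-style signature that, for each choice of at most $k$ candidate backdoor variables meeting the boundary, records which torso-like ``skeletons'' on the boundary together with those variables are achievable in the interior. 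Both ingredients live in universes whose size depends only on $t$, $k$, and $|\DDD|$, yielding a finite quotient; two instances with the same type are equivalent by a direct swap argument. For Part~(b), a padding argument against a canonical finite family of test glue-partners shows that each class contains a representative of size at most a computable $h(k,t)$; enumerating all $t$-boundaried instances up to size $h(k,t)$ and bucketing them by brute-force equivalence checks produces the required set $\mathfrak{H}$ of $f(k,t)$ canonical representatives.

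For Part~(c), given a $t$-boundaried subinstance $\III$ whose interior is large and suitably well-connected to the boundary, I would identify the unique $\bH \in \mathfrak{H}$ with $\III \sim_{t,k} \bH$ by, for each candidate $\bH$, glueing both $\III$ and $\bH$ against a small collection of test-instances drawn from $\mathfrak{H}$ itself and comparing the answer to the backdoor-treewidth question on these glued objects; the relevant subroutines run in time $g(t,k)|\III|^{\bigoh(1)}$ because the test-instances have size bounded in $k$ and $t$. Replacing $\III$ by $\bH$ strictly decreases the global instance size while preserving the answer. For Part~(d), to actually locate such a replaceable subinstance I would apply the recursive understanding machinery of Grohe, Kawarabayashi, Marx, and Wollan to the incidence graph: either a small separator isolates a large, highly-connected chunk serving as the $\III$ above (after recursing inside it to reduce it further), or the incidence graph is itself globally well-linked across small separators and therefore has size bounded by a function of $k$, in which case brute force finishes.

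The main obstacle I anticipate is Part~(a): finding a type that is small enough to depend only on $k$, $t$, and $\Gamma$, yet rich enough to determine backdoor-treewidth against every possible glue-partner. Backdoor variables may straddle the boundary arbitrarily, and the torso construction lets interior contractions fuse outside neighbors into cliques that must jointly respect the treewidth bound $k$; so the type has to simultaneously track which boundary-adjacent variables the interior can ``offer'' as backdoor candidates, which $\CSP(\Gamma)$ reductions the interior allows for each boundary instantiation, and which cliques the interior contributes to the torso. Formalizing this will require a Myhill--Nerode style analysis noticeably more subtle than the classical treewidth-replacement analyses of Bodlaender and Fluiter, and verifying that equal types imply equivalence under $\sim_{t,k}$ will likely be the most technical step of the proof. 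The remaining parts are, by comparison, largely mechanical, though Part~(d) demands care to guarantee that the recursive-understanding extraction respects the bipartite structure of the incidence graph and the variable/constraint distinction inherent in $t$-boundaried CSP instances.
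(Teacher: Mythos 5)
Your overall architecture coincides with the paper's own four-part plan, but at the two places where the actual mathematical work lies the proposal has gaps rather than proofs. First, the finite-state step (a), which you yourself flag as the main obstacle, is left open, and the sketch you give would not suffice: restricting the type to ``at most $k$ candidate backdoor variables meeting the boundary'' is unjustified (a width-$k$ backdoor may contain the whole boundary, and in the final algorithm $t\le 2(k+1)>k$), and nothing in your type accounts for components of $G-S$ that \emph{cross} the cut. Such a component merges non-backdoor boundary vertices, is adjacent to backdoor variables on \emph{both} sides, and hence creates torso edges and cliques spanning the boundary; moreover a forbidden minor witnessing torso treewidth $>k$ can have branch sets deep inside both sides. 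The paper's configurations record exactly this data --- a partition of the non-backdoor boundary into crossing components, the extra torso edges they induce, up to $(k+1)t$ labelled ``phantom'' vertices standing for far-side backdoor variables adjacent to crossing components, and the labelled-minor folio of the interior torso piece up to size $h^*(k)+t(k+1)$ --- and then prove a composition claim (equal $(h+\ell)$-folios of the pieces imply equal $h$-folios of the glued torso) together with the fact that the glued torso is the gluing of its torso pieces. Your ``skeletons on the boundary'' signature does not supply this, and your ingredient (i) (propagating boundary assignments to relations on the remaining boundary) is unnecessary, since the boundary consists of variables only, so every constraint lies on one side and the strong-backdoor condition localizes.

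Second, parts (c) and (d). Your identification step requires deciding, for each representative $H$, whether $\III\oplus H$ has a width-$k$ backdoor, where $\III$ is the large extracted subinstance; justifying the running time by the smallness of the test partners addresses the wrong factor --- the difficulty is the large $\III$, and this decision is precisely the problem being solved. The paper closes this circle with its $(\beta,k)$-nice machinery: unbreakability of $\III$ (no $(8^q,k+1)$-separation) implies that $\III\oplus H$ is nice, and nice instances admit a dedicated {\FPT} algorithm (an MSO/Courcelle formulation via forbidden minors when the treewidth is bounded, and otherwise a branching algorithm that enumerates minimal strong backdoors of bounded size and bounded-size connected subgraphs around the unique large component). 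Nothing of this sort appears in your proposal. Relatedly, your base case is incorrect: an incidence graph with no $(8^q,k+1)$-separation need not have size bounded in $k$ (bounded-degree expanders are unbreakable and arbitrarily large), so ``brute force finishes'' does not apply; the paper handles the globally unbreakable case through the nice-instance algorithm. Finally, the computable bound $h(k,t)$ on representative size is asserted via an unexplained ``padding argument,'' whereas establishing it (replacement of equivalent bags in tree decompositions of bounded-treewidth instances, plus gadget-replacement of components outside a union of realizing sets to reduce to bounded treewidth) is a substantial part of the paper.
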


Before we proceed to the description of the algorithms, we state the following simple and obvious preprocessing routine (correctness is argued in the appended full version) which will allow us to infer certain structural information regarding interesting instances of this problem.

\begin{redrule}\label{red:bounded_arity}
Given a CSP instance $\III$ and an integer $k$ as an instance of {\sbddetection}, if there is a constraint in $\III$ of arity at least $p+k+2$ where $p$ is the maximum arity of a relation in $\Gamma$, then return {\sc NO}.
\end{redrule}

\lv{We argue the correctness of this rule as follows. Suppose there is a constraint $C=((x_1,\dots,x_r),R)$ where $r\geq p+k+2$. Then, any strong backdoor set $X$ must contain at least $k+2$ variables in the scope of $C$. However, this implies that $\torso_{\psi(G)}(X)$ contains a clique on at least $k+2$ vertices, which in turn implies that $\sbw_\Gamma(\III)>k$. 
\lv{Moving forward, for any constraint language $\Gamma$ and integer $k\in \naturals$, we denote by $\rho(\Gamma,k)$ the integer $p+k+2$ where $p$ is the maximum arity of a relation in $\Gamma$. }
}

\section{The Finite State Lemma}
\label{sec:finitestate}
In this section, we prove that the problem {\sc Width Strong-$\CSP(\Gamma)$ Backdoor Detection} has finite state; this will allow us to construct a finite set of bounded-size representatives (Section~\ref{sec:rep}) which will play a crucial role in the proof of Theorem~\ref{thm:find_backdoor} (Section~\ref{sec:algo}).
Let $\Gamma$ be a finite constraint language; throughout the rest of the paper, we work with this fixed constraint language. We begin by defining a relation over the set of boundaried incidence graphs.

\begin{definition}
	Let $k,t\in \naturals$ and let $(G_1,Z_1)$ and $(G_2,Z_2)$ be $t$-boundaried incidence graphs of CSP instances $\III_1$ and $\III_2$ with boundaries $Z_1$ and $Z_2$ respectively. Then, we say that $(\III_1,Z_1)\sim_{t,k}(\III_2,Z_2)$ (or $(G_1,Z_1)\sim_{t,k}(G_2,Z_2)$) if for every $t$-boundaried CSP instance $\III_3$ with incidence graph $G_3$, the instance $\psi(G_1\oplus G_3)$ has a strong backdoor set of width at most $k$ into $\CSP(\Gamma)$ if and only if the instance $\psi(G_2\oplus G_3)$ has a strong backdoor set of width at most $k$ into $\CSP(\Gamma)$.
\end{definition}

It is clear that $\sim_{t,k}$ is an equivalence relation. Generally speaking, the high-level goal of this section is to prove that $\sim_{t,k}$ has finite index. This is achieved by introducing a second, more technical equivalence $\equiv_{t,h,\varepsilon}$ which captures all the information about how a $t$-boundaried incidence graph $(G,Z)$ contributes to the (non)-existence of a strong backdoor of small width after gluing. Observe that for a set $X$ which has vertices from `both' sides of a boundary the graph $\torso(X)$ may have edges crossing this boundary. Since we need to take this behaviour into account, proving this lemma is in fact much more involved than might be expected at first glance.

To define $\equiv_{t,h,\varepsilon}$, we will first need the notion of a \emph{configuration}, which can be thought of as one possible way a $t$-boundaried graph can interact via gluing; this is then tied to the notion of a \emph{realizable configuration}, which is a configuration that actually can occur in the graph $(G,Z)$. We let $(G_1,Z_1)\equiv_{t,h,\varepsilon}(G_1,Z_1)$ if and only if both boundaried graphs have the same set of realizable configurations.
Before we proceed to the technical definition of a configuration, we need one more bit of notation.
Since we will often be dealing with labeled minors, we fix a pair of symbols $\Box$ and $\Diamond$ and express all relevant label sets using these symbols.
\sv{Specifically, for $r,s\in \naturals$ we let $\mathfrak{L}(r,s)$ denote the set $2^{\{\Box_1,\dots, \Box_r\}\bigcup \{\Diamond_1,\dots, \Diamond_{s}\}}$.
}

\lv{
\begin{definition}
	Let $r,s\in \naturals$ and $T\subseteq \naturals$. We denote by $\mathfrak{L}(r,s)$ the set 
	$2^{\{\Box_1,\dots, \Box_r\}\bigcup \{\Diamond_1,\dots, \Diamond_{s}\}}$ and we denote by $\mathfrak{L}(T,s)$ the set $2^{\{\Box_i|i\in T\}\bigcup \{\Diamond_i| i\in [s]\}}$.
\end{definition}
}

\begin{definition}\label{def:config}
	Let $h,t\in \naturals$. A \textbf{$(t,h)$-configuration} is a tuple $(P,w,w',\cP,\cP',\gamma,\mathfrak{H})$, where: 
	\lv{
\begin{itemize}
\item $P$ is a subset of $[t]$,
\item $w,w' \in \mathbb{N}$ and $w'\leq (w+1)t$,
\item $\cP=\{Q_1,\dots, Q_r\}$ is a partition of $[t]\setminus P$, 
\item $\cP'\in 2^{P \choose 2}\times  2^{[w'] \choose 2}\times 2^{P\times [w']}$,
\item $\gamma:\cP\to 2^{P} \times 2^{[w']}$, 
\item $\mathfrak{H}$ is a collection of 
labeled graphs on at most $h$ vertices where the label set is $\mathfrak{L}(t,w')$
\end{itemize}
}
\sv{
\begin{enumerate*}[label={\textbf{(\arabic*)}}]
\item $P$ is a subset of $[t]$,
\item $w,w' \in \mathbb{N}$ and $w'\leq (w+1)t$,
\item $\cP=\{Q_1,\dots, Q_r\}$ is a partition of $[t]\setminus P$, 
\item $\cP'\in 2^{P \choose 2}\times  2^{[w'] \choose 2}\times 2^{P\times [w']}$,
\item $\gamma:\cP\to 2^{P} \times 2^{[w']}$, 
\item $\mathfrak{H}$ is a collection of 
labeled graphs on at most $h$ vertices where the label set is $\mathfrak{L}(t,w')$.
\end{enumerate*}
}

For a set $Q\in \cP$ with $\gamma(Q)=(J_1,J_2)$, we denote by $\gamma^i(Q)$ the set $J_i$ for each $i\in \{1,2\}$.
A $(t,h)$-configuration $(P,w,w',\cP,\cP',\gamma,\mathfrak{H})$ is called a {\bf $(t,h,\varepsilon)$-configuration} if $w\leq \varepsilon$ and we denote the set of such $(t,h)$-configurations by $\mathfrak{S}_{\sd}(t,h,\varepsilon)$. 
\end{definition}

Let us informally break down the intuition behind the above definition. $t$ corresponds to the size of the boundary of the associated $t$-boundaried incidence graph (as we will see in the next definition), and $h$ is an upper bound on the size of forbidden minors for our target treewidth. The $(t,h)$-configuration then captures the following information about interactions between a $t$-boundaried incidence graph $(G_1,Z_1)$ and a potential solution after gluing: 
\lv{

\begin{itemize}
\item $P$ represents the part of the boundary that intersects a backdoor of small width, 
\item $w'$ represents neighbors of the remainder of the boundary outside of $G_1$,
\item $w$ represents the target treewidth of the torso, 
\item $\cP$ represents how the part of the boundary outside of the strong backdoor will be partitioned into connected components, i.e., how it will `collapse' into the torso,
\item $\cP'$ represents all the new edges that will be created in the torso due to collapsing of parts outside of the torso,
\item $\gamma$ represents connections between connected components in the boundary outside of the strong backdoor and relevant variables in the strong backdoor, which is the second part of information needed to encode the collapse of these components into the torso,
\item $\mathfrak{H}$ represents `parts' of all minors of size at most $h$ present in the torso inside $G_1$.
\end{itemize}
}
\sv{(a) $P$ represents the part of the boundary that intersects a backdoor of small width, 
(b) $w'$ represents neighbors of the remainder of the boundary outside of $G_1$,
(c) $w$ represents the target treewidth of the torso, 
(d) $\cP$ represents how the part of the boundary outside of the strong backdoor will be partitioned into connected components, i.e., how it will `collapse' into the torso,
(d) $\cP'$ represents all the new edges that will be created in the torso due to collapsing of parts outside of the torso,
(e) $\gamma$ represents connections between connected components in the boundary outside of the strong backdoor and relevant variables in the strong backdoor, which is the second part of information needed to encode the collapse of these components into the torso,
(f) $\mathfrak{H}$ represents `parts' of all minors of size at most $h$ present in the torso inside $G_1$.
}
In order to formally capture the intuition outlined above, we define the result of `applying' a configuration on a $t$-boundaried incidence graph.

\begin{definition}\label{def:associate}
	Let $h,t\in \naturals$, $(G,Z)$ be the $t$-boundaried incidence graph of a $t$-boundaried CSP instance $\III$ and $\omega=(P,w,w',\cP,\cP',\gamma,\mathfrak{H})$ be a $(t,h)$-configuration. We associate with $G$ and $\omega$ an incidence graph $G^\omega$ which is defined as follows. We begin with the graph $G$, add $w'$ new variables $l^\omega_1,\dots, l^\omega_{w'}$, denoting the set comprising these vertices by $L_\omega$. For every $J\subseteq [w']$, we denote by $J(L_\omega)$ the set $\{l^\omega_i| i\in J\}$.
	 For each $Q\in {\cP}$, let $(J_1^Q,J_2^Q)=\gamma(Q)$ and add 
	$\vert Q\vert-1$ redundant binary constraints $C^Q_1,\dots, C^Q_{\vert Q\vert-1}$ (we have assumed that $\Gamma$ also contains a tautological relation of arity 2) and connect these with the variables in $Q(G,Z)$ to form a path which alternates between a vertex/variable in $Q(G,Z)$ and a vertex/variable in $\{C^Q_1,\dots, C^Q_{\vert Q\vert-1}\}$. Following this, for every variable $u$ in $J_1(G,Z)\cup J_2(L_\omega)$,  we add a redundant binary constraint $C_u$ and set $\var(C_u)$ as $u$ and an arbitrary variable in $Q(G,Z)$. This completes the definition of $G^\omega$. We also define the graph $\tilde G^\omega$ as the graph obtained from $G^\omega$ by doing the following.
	Let $\cP'=(X_1,X_2,X_3)$ where $X_1\subseteq {
P\choose 2}$, $X_2\subseteq {[w']\choose 2}$ and $X_3\subseteq P\times [w']$.
	 For every pair $(i,j)\in X_1$, we add the edge $(i(G,Z),j(G,Z))$. Similarly, for every pair $(i,j)\in X_2$, we add the edge $(l^\omega_i,l^\omega_j)$. Finally, for every pair $(i,j)\in X_3$, we add the edge $(i(G,Z),l^\omega_j)$.
	  This completes the description of $\tilde G^\omega$.
\end{definition}

The graph $G^\omega$ defined above can be seen as an enrichment of $G$ by (1) adding strong backdoor variables which will be affected by a collapse of the boundary into the torso ($l^\omega_1,\dots, l^\omega_{w'}$) and (2) enforcing the assumed partition of part of the boundary into connected components (as per $\cP$) and (3) adding connections of these components both into the rest of the boundary and vertices $l^\omega_{i}$ (as per $\gamma$). The graph $\tilde G^\omega$ is then an extension of $G^\omega$ by edges which will be created in the torso.
Note that while $G^\omega$ is an incidence graph, $\tilde G^\omega$ is not necessarily a bipartite graph.

With $\tilde G^\omega$ in hand, we can finally formally determine whether the information contained in a given configuration is of any relevance for the given graph. This is achieved via the notion of \emph{realizability}.
\begin{definition}

\label{def:realizability}	
Let $h,t\in \naturals$, $(G,Z)$ be the $t$-boundaried incidence graph corresponding to a $t$-boundaried CSP instance $\III$ and let $\omega = (P,w,w',\cP,\cP',\gamma,\mathfrak{H})$ be a $(t,h)$-configuration. We say that $\omega$ is a \textbf{realizable} configuration in $(G,Z)$ if, and only if, there is a subset $S^* \subseteq \cV(G)$ with the following properties:
\lv{
\begin{itemize}
\item $S^* \cap Z = P(G,Z)$
\item $tw(\torso_{\tilde G^\omega}(S^*\cup L_\omega))$ is at most $w$.
\item 
$\mathfrak{H} = \mathcal{M}_h(\torso_{\tilde G^\omega}(S^*\cup L_\omega),\Lambda_\omega)$, 
where  $\Lambda_\omega:S^*\cup L_\omega\to \mathfrak{L}(P,w')$ is defined as: 
 
 \begin{description} \item -- for all $v\in S^*\cap Z$, $\Lambda(v)=\{\Box_i\}$ where $v=i(G,Z)$, 
 	
\item
-- for all $v\in L_\omega$, $\Lambda(v)=\{\Diamond_i\}$ where $v=l^\omega_i$ and 
\item -- for every  vertex $v\in S^*\setminus Z$, $\Lambda(v)=\emptyset$. 
\end{description}
That is, $\mathfrak{H}$ is precisely the set of all labeled minors of $(\torso_{\tilde G^\omega}(S^*\cup L_\omega),\Lambda_\omega)$ with at most $h$ vertices.

\item $S^*$ is a strong backdoor of $\psi(G)$ into $\CSP(\Gamma)$.
\end{itemize}
If the above conditions hold, we say that $S^*$ \textbf{realizes} $\omega$ in $(G,Z)$. 
}
\sv{
\begin{enumerate*}[label={\textbf{(\arabic*)}}]
\item $S^* \cap Z = P(G,Z)$,
\item $tw(\torso_{\tilde G^\omega}(S^*\cup L_\omega))$ is at most $w$,
\item $\mathfrak{H}$ is precisely the set of all labeled minors of $(\torso_{\tilde G^\omega}(S^*\cup L_\omega),\Lambda_\omega)$ with at most $h$ vertices,
\item $S^*$ is a strong backdoor of $\psi(G)$ into $\CSP(\Gamma)$.
\end{enumerate*}
If the above conditions hold, we say that $S^*$ \textbf{realizes} $\omega$ in $(G,Z)$. 
}

\end{definition}

We let $\mathfrak{S}_{\sd}((G,Z),h,\varepsilon)$ denote the set of all realizable $(|Z|,h,\varepsilon)$-configurations in $(G,Z)$. 
We ignore the explicit reference to $Z$ in the notation if it is clear from the context.
We let $h^*(k)$
denote the upper bound on the size of forbidden minors for graphs of treewidth at most $k$ given in \cite{Lagergren98}. For technical reasons, we will be in fact concerned with minors of size slightly greater than $h^*(k)$, and hence for $t\in \mathbb{N}$ we set $h^*(k,t)=h^*(k)+t\cdot(k+1)$. 

We use $\Upsilon_{\sd}(t,h,\varepsilon)$ to denote a computable upper bound on the number of $(t,h,\varepsilon)$-configurations. Observe that setting $\Upsilon_{\sd}(t,h,\varepsilon)=2^t\cdot \varepsilon \cdot \varepsilon t \cdot t^t \cdot 2^{{t \choose 2}}\cdot 2^{  { {(\varepsilon +1) t}  \choose 2}} \cdot 2^{t^2(\varepsilon +1) } \cdot 2^{t^2(\varepsilon+1)} \cdot 2^{{h \choose 2}}h^{2^{(\varepsilon+1)t}}$ is sufficient. We now give the formal definition of the refined equivalence relation.

\begin{definition} Let $t,h\in \naturals$ and let $(\III_1,Z_1)$ and $(\III_2,Z_2)$ be $t$-boundaried CSP instances with $t$-boundaried incidence graphs $(G_1,Z_1)$ and $(G_2,Z_2)$ respectively. Then, $(\III_1,Z_1)\equiv_{t,h,\varepsilon}(\III_2,Z_2)$ (or $(G_1,Z_1)\equiv_{t,h,\varepsilon}(G_2,Z_2)$)  if $\mathfrak{S}_{\sd}((G_1,Z_1),h,\varepsilon)=\mathfrak{S}_{\sd}((G_2,Z_2),h,\varepsilon)$.
\end{definition}

\sv{
From these definitions, it is straightforward to verify that $\equiv_{t,h,\varepsilon}$ is indeed an equivalence and the number of equivalence classes induced by this relation over the set of all $t$-boundaried incidence graphs is at most 
$2^{\Upsilon_{\sd}(t,h,\varepsilon)}$. 
The main lemma of this section, Lemma~\ref{lem:replace_equiv}, then links $\equiv_{t,h,\varepsilon}$ to $\sim_{t,k}$, and in particular shows that the former is a refinement of the latter. We note that the more refined $\equiv_{t,h,\varepsilon}$ is used throughout the paper; it is not merely a tool for showing finite-stateness of $\sim_{t,k}$.
}

\lv{
Clearly, $\equiv_{t,h,\varepsilon}$ is an equivalence relation and the number of equivalence classes induced by this relation over the set of all $t$-boundaried incidence graphs is at most 
$2^{\Upsilon_{\sd}(t,h,\varepsilon)}$. 
We now formally prove that the equivalence relation $\equiv_{t,h^*(k,t),k}$ is a refinement of the equivalence relation $\sim_{t,k}$. That is, we prove that whenever 2 boundaried incidence graphs are in the same equivalence class of $\equiv_{t,h^*(k,t),k}$ then they are in the same equivalence class of $\sim_{t,k}$.
}

\lv{\begin{lemma}}
\sv{\begin{lemma}[$\star$]}
\label{lem:replace_equiv}
Let $k,t\in {\mathbb N}$ and let $(G_1,Z_1)$, $(G_2,Z_2)$ be two $t$-boundaried incidence graphs satisfying  $(G_1,Z_1)\equiv_{t,h^*(k,t),k}(G_2,Z_2)$. Then, $(G_1,Z_1)\sim_{t,k}(G_2,Z_2)$.
\end{lemma}

\lv{
\begin{proof} 
	In order to prove the lemma, we need to prove that 
 for any $t$-boundaried graph $(G_3,Z_3)$ the instance $\psi(G_1 \oplus G_3)$ has a strong backdoor set into $\CSP(\Gamma)$ of width at most $k$ if and only if the instance $\psi(G_2 \oplus G_3)$ has a strong backdoor set into $\CSP(\Gamma)$ of width at most $k$.
We first give a brief sketch of the proof strategy. We only present the proof of one direction of the statement as the proof for the converse can be obtained by simply switching $G_1$ and $G_2$ in the arguments. We begin by assuming the existence of a set $S_1$ which is a strong backdoor set of $\psi(G_1 \oplus G_3)$ into $\CSP(\Gamma)$ of width at most $k$. We then use the set $S_1$ to define a  $(t,h^*(k,t),k)$-configuration $\omega$ and argue that this is in fact a configuration realized by $S_1^*=S_1\cap V(G_1)$ in $(G_1,Z_1)$. We then use the premise that $(G_1,Z_1)\equiv_{t,h^*(k,t),k}(G_2,Z_2)$ to infer the existence of a set, say $S_2^*\subseteq \cV(G_2)$ such that $S_2^*$ realizes $\omega$ in $(G_2,Z_2)$. We then proceed to prove that the set obtained from $S_1$ by `cutting' $S_1^*$ and `pasting' $S_2^*$ is indeed a strong backdoor set of the required kind for the instance $\psi(G_2 \oplus G_3)$.

\noindent
\textbf{Phase I: Defining a $(t,h^*(k,t),k)$-configuration realized by $S_1^*$.}
Suppose that $\psi(G_1 \oplus G_3)$ contains a strong backdoor $S_1$ into $\CSP(\Gamma)$ such that $tw(\torso_{G_1 \oplus G_3}(S_1))$ $\leq k$. 
Unless specified otherwise, henceforth we use $\torso(S_1)$ to denote $\torso_{G_1\oplus G_3}(S_1)$. 
 Let $S_1^*=S_1\cap V(G_1)$ and let $L_1=\{l_1,\dots, l_z\}$ be the set of vertices in $S_1\setminus S_1^*$ which are adjacent to a component of $(G_1\oplus G_3)-S_1$ that intersects $Z_1$. Observe that $z\leq (k+1)\cdot t$, since otherwise there would be a component in $(G_1\oplus G_3)-S_1$ that is adjacent to more than $k+1$ variables in $S_1$, which in turn would result in a clique of size greater than $k+1$ in $\torso(S_1)$. 
 We now define a tuple  $\omega=(P,w,w',\cP,\cP',\gamma,\mathfrak{H})$ as follows (we will later show that $\omega$ is actually a configuration). 
 
 \begin{enumerate} 
\item Let $P\subseteq [t]$ such that $P(G_1,Z_1)=S_1\cap Z_1$.

\item  Let $w=k$.
\item   Let $w'=z$.
\item Let $\cP=\{Q_1,\dots, Q_r\}$ be the partition of $Z_1\setminus P$ such that for every $i\in [r]$, the variables in $Q_i$ are contained in the same connected component of $(G_1 \oplus G_3)-S_1$ and for every $i\neq j\in [r]$, the variables in $Q_i$ and $Q_j$ are in distinct connected components of $(G_1 \oplus G_3)-S_1$.  
\item  Let $X_1$ be the set of all pairs $(i,j)$ where $i,j\in P, i\neq j,$ and there is a component of $(G_1 \oplus G_3)-S_1$ which is adjacent to both $i(G_1,Z_1)$ and $j(G_1,Z_1)$ and disjoint from $V(G_1)$. Let $X_2$ be the set of all pairs $(i,j)$ where $i,j\in [z]$ and there is a component of $(G_1 \oplus G_3)-S_1$ which is adjacent to both $l_i$ and $l_j$ and disjoint from $V(G_1)$. Let $X_3$ be the set of all pairs $(i,j)$ where $i\in P$, $j\in [z]$ and there is a component of $(G_1 \oplus G_3)-S_1$ which is adjacent to both $i(G_1,Z_1)$ and $l_j$ and disjoint from $V(G_1)$.
 Finally, let $\cP'=(X_1,X_2,X_3)$. 
\item
Let $\gamma:\cP_1\to 2^{P} \times 2^{[z]}$ be the function defined as follows. For every $Q\in \cP_1$, let $J_1^Q$ denote the vertices of $P(G_1,Z_1)$ which are adjacent to the component of $(G_1\oplus G_3)-S_1$ that contains $Q$ and let $J_2^Q$ denote the vertices of $L_1$ which are adjacent to the component of $(G_1\oplus G_3)-S_1$ that contains $Q$. The function $\gamma$ is defined as $\gamma(Q)=(J_1^Q,J_2^Q)$ for every $Q\in \cP_1$. 
\item 
 We define a function 
 $\Lambda_1:(S_1^*\cup L_1)\to \mathfrak{L}(P,z)$  as follows. 
 \begin{itemize}
 \item  For every $v\in S_1^*\cap Z_1$, we set $\Lambda_1(v)=\{\Box_i\}$ where $v=i(G_1,Z_1)$;
 \item  for every $v\in L_1$, we set $\Lambda_1(v)=\{\Diamond_i\}$ where $v=l_i$; and 
 \item  for every other vertex $v\in (S_1^*\cup L_1)$, we set $\Lambda_1(v)=\emptyset$. 
 \end{itemize}
 Finally, we define $\mathfrak{H}$
    to be the set of all labeled minors of $(\torso_{G_1\oplus G_3}(S_1)[S_1^*\cup L_1],\Lambda_1)$ with at most $h^*(k,t)$ vertices. That is, $\mathfrak{H}=M_{h^*(k,t)}(\torso_{G_1\oplus G_3}(S_1)[S_1^*\cup L_1],\Lambda_1)$.

\end{enumerate}

We begin by showing that $\omega$ is indeed a $(t,h^*(k,t),k)$-configuration.

\begin{claim}
	$\omega$ is a $(t,h^*(k,t),k)$-configuration.
\end{claim}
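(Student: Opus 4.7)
The plan is to verify, one by one, the six defining properties of a $(t,h)$-configuration in Definition~\ref{def:config} with $h = h^*(k,t)$, and then note that $w = k$ makes $\omega$ a $(t,h^*(k,t),k)$-configuration. Five of the six items can be read off directly from the construction: $P\subseteq [t]$ by definition; $\cP'=(X_1,X_2,X_3)$ lies in $2^{P\choose 2}\times 2^{[w']\choose 2}\times 2^{P\times [w']}$ because each $X_j$ was defined as a set of pairs of the corresponding type; $\gamma:\cP \to 2^{P}\times 2^{[w']}$ because $(J_1^Q,J_2^Q)\in 2^P \times 2^{[z]}$ by construction; and $\mathfrak{H}$ is obtained as an $h^*(k,t)$-folio of a graph equipped with a labeling whose image lies in $\mathfrak{L}(P,z)\subseteq \mathfrak{L}(t,w')$, so its members are labeled graphs on at most $h^*(k,t)$ vertices with labels in $\mathfrak{L}(t,w')$.

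Before turning to the main obstacle, I would check that $\cP$ is indeed a partition of $[t]\setminus P$. Since $P(G_1,Z_1) = S_1\cap Z_1$, the indices in $[t]\setminus P$ correspond bijectively to $Z_1\setminus S_1$; the sets $Q_1,\dots,Q_r$ are defined so that two indices lie in the same $Q_i$ iff the corresponding boundary variables lie in the same connected component of $(G_1\oplus G_3)-S_1$. This is an equivalence relation on $Z_1\setminus S_1$, and its non-empty classes partition $[t]\setminus P$.

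The main obstacle, and the only step requiring an honest argument, is the inequality $w'=z\leq (w+1)t = (k+1)t$, which the setup only sketched. I would spell it out in two stages. First, for each connected component $K$ of $(G_1\oplus G_3)-S_1$, the definition of $\torso_{G_1\oplus G_3}(S_1)$ forces $N(K)\cap S_1$ to be a clique of $\torso(S_1)$; since $\tw(\torso(S_1))\leq k$, every clique of $\torso(S_1)$ has at most $k+1$ vertices, so $|N(K)\cap S_1|\leq k+1$. Second, the components of $(G_1\oplus G_3)-S_1$ that meet $Z_1$ are indexed by the non-empty classes of the partition they induce on $Z_1\setminus S_1$, hence there are at most $|Z_1\setminus S_1|\leq t$ of them. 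Since $L_1$ consists of vertices in $S_1\setminus S_1^*$ adjacent to one of these at most $t$ components, we conclude $|L_1|\leq t(k+1)$, i.e., $z\leq (k+1)t$. Combined with $w=k\leq k$, this completes the verification that $\omega\in \mathfrak{S}_{\sd}(t,h^*(k,t),k)$.
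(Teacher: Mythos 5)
Your proposal is correct and follows essentially the same route as the paper: the paper likewise dismisses the routine items and reduces the claim to the bound $w'\leq (w+1)t$, which it proves exactly as you do — the neighborhood of each component of $(G_1\oplus G_3)-S_1$ is a clique in the torso and hence has at most $k+1$ vertices, and at most $t$ components meet $Z_1$, giving $|L_1|\leq t(k+1)$. Your explicit verification of the remaining conditions (and that $\cP$ is a partition) is just a more spelled-out version of what the paper treats as immediate.
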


\begin{proof}
 In order to prove this, we  only need to prove that $w'\leq (w+1)t$.
Since $\torso_{G_1\oplus G_3}(S_1)$ has treewidth at most $k$, it follows that any component of $(G_1 \oplus G_3)-S_1$ has at most $k+1$ neighbors in $S_1$ (otherwise $\torso_{G_1\oplus G_3}(S_1)$ would have a $(k+2)$-clique). Since $L_1$ is the set of vertices of $S_1$ which are neighbors of  $r\leq t$ components, it follows that $|L_1|\leq r(k+1)\leq t(k+1)$, implying that $w'=z\leq t(k+1)=t(w+1)$. This completes the proof of the claim.
\end{proof}

 Having proved that $\omega$ is a $(t,h^*(k,t),k)$-configuration, we now claim that $S_1^*$ in fact realizes $\omega$.

\begin{claim}
\label{claim:iso}
$S_1^*$ realizes $\omega$ in $(G_1,Z_1)$.	
\end{claim}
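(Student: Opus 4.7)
The plan is to verify the four defining conditions of Definition~\ref{def:realizability} for $\omega$ and $S_1^*$. Condition (1), namely $S_1^* \cap Z_1 = P(G_1,Z_1)$, holds by the very choice of $P$ in step~1 of the construction. Condition (4), that $S_1^*$ is a strong backdoor of $\psi(G_1)$ into $\CSP(\Gamma)$, follows by extension: any assignment $\alpha : S_1^* \to \DDD$ can be extended to an assignment $\alpha' : S_1 \to \DDD$, and since $S_1$ is a strong backdoor of $\psi(G_1\oplus G_3)$, the reduced instance $\psi(G_1\oplus G_3)|_{\alpha'}$ lies in $\CSP(\Gamma)$; restricting this reduction to the constraints of $\psi(G_1)$ (which is a sub-instance of $\psi(G_1\oplus G_3)$) yields $\psi(G_1)|_\alpha \in \CSP(\Gamma)$ since $\CSP(\Gamma)$ is closed under sub-instances.

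The heart of the argument is to establish conditions (2) and (3) simultaneously via an explicit label-preserving isomorphism
\[
\varphi : \torso_{\tilde G_1^\omega}(S_1^* \cup L_\omega) \;\longrightarrow\; \torso_{G_1\oplus G_3}(S_1)\bigl[\,S_1^* \cup L_1\,\bigr],
\]
fixing every vertex of $S_1^*$ and sending $l_i^\omega \mapsto l_i$ for $i\in [z]$. Once this isomorphism is in hand, condition (2) follows because the right-hand side is an induced subgraph of $\torso_{G_1\oplus G_3}(S_1)$, which has treewidth at most $k$ by hypothesis, and treewidth is monotone under taking induced subgraphs; and condition (3) follows because labeled minors are preserved under label-preserving isomorphism, so $\mathcal{M}_{h^*(k,t)}$ of the two graphs coincide, and by construction in step~7 this common set equals $\mathfrak{H}$.

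To prove $\varphi$ is an isomorphism, the plan is to trace every edge through both sides. An edge of $\torso_{\tilde G_1^\omega}(S_1^* \cup L_\omega)$ arises in one of three ways: (a) it is an edge already present in $G_1$ between two vertices of $S_1^*$, which is also an edge of $G_1\oplus G_3$ and hence of the torso on the right; (b) it is one of the edges explicitly added in $\tilde G_1^\omega$ via $\cP' = (X_1,X_2,X_3)$, and by the definition of $\cP'$ in step~5 every such edge corresponds exactly to a torso-edge of $G_1\oplus G_3$ generated by a component of $(G_1\oplus G_3)-S_1$ that is disjoint from $V(G_1)$; (c) it is created by the torso operation inside $\tilde G_1^\omega$ through a component $K$ of $\tilde G_1^\omega - (S_1^* \cup L_\omega)$. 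Here one argues that components of type (c) either lie entirely in $G_1 - S_1^*$ and avoid $Z_1$ — in which case they are also components of $(G_1\oplus G_3)-S_1$ with the same neighborhood in $S_1^*$ — or they contain some $Q(G_1,Z_1)$ for $Q\in \cP$, in which case by the construction of $G_1^\omega$ (paths through the $C_i^Q$, attachments through $\gamma(Q)$) the neighborhood of $K$ in $S_1^* \cup L_\omega$ is precisely the $\varphi$-preimage of the neighborhood (restricted to $S_1^* \cup L_1$) of the corresponding spanning component of $(G_1\oplus G_3)-S_1$ that contains $Q$. The converse direction — every edge of the right-hand torso comes from one of (a)--(c) — is handled by the same case analysis, splitting components of $(G_1\oplus G_3)-S_1$ by whether they are contained in $V(G_1)\setminus S_1$, disjoint from $V(G_1)$, or spanning.

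The main technical obstacle is case (c) together with its converse: one must verify that the enrichment in $G_1^\omega$ faithfully simulates the cliques in the torso that arise from components of $(G_1\oplus G_3)-S_1$ crossing $Z_1$ into $V(G_3)$. This reduces to checking (i) that $\cP$ correctly groups together the portion of $Z_1\setminus P$ lying in a common spanning component, so that the paths $C_1^Q,\dots,C_{|Q|-1}^Q$ re-connect exactly these vertices in $\tilde G_1^\omega$, and (ii) that $\gamma$ correctly records which vertices of $L_1$ and of $P(G_1,Z_1)$ are reached from $V(G_3)$ by that same spanning component. Both (i) and (ii) are immediate from steps~4 and~6 of the construction, which completes the verification.
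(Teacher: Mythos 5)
Your proposal is correct and follows essentially the same route as the paper: conditions (1) and (4) are dispatched directly (the paper phrases (4) as a contradiction, you as an extension/restriction argument, but the underlying fact — constraints of $G_1$ contain no variable of $S_1\setminus S_1^*$ in their scope — is the same), and conditions (2) and (3) are obtained via the same label-preserving isomorphism fixing $S_1^*$ and sending $l_i^\omega\mapsto l_i$, verified by the same case analysis on whether a torso edge comes from a component avoiding $Z_1$, a component containing some $Q\in\cP$ (handled through the $C^Q_j$-paths and $\gamma$), or an edge added via $\cP'$, followed by monotonicity of treewidth on the induced subgraph $\torso_{G_1\oplus G_3}(S_1)[S_1^*\cup L_1]$ and preservation of labeled minors. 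No substantive difference from the paper's proof.
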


\begin{proof} In order to prove this, we need to argue that $S_1^*$ satisfies the properties in Definition \ref{def:realizability}. By the definition of $S_1^*$, it holds that $S_1^*\cap Z_1= P(G_1,Z_1)$. Hence the first property is satisfied.
 We now  argue that $S_1^*$ is a strong backdoor of $\psi(G_1)$ into $\CSP(\Gamma)$. Suppose that this is not the case and for some assignment $\tau:S_1^*\to \cD$, there is a constraint in $G_1$ whose associated relation after applying $\tau$ is not in $\Gamma$. However, since $S_1$ is a strong backdoor of $G_1\oplus G_3$ into $\CSP(\Gamma)$, it must be the case that this constraint contains in its scope a variable of $S_1\setminus S_1^*$. However, since $Z_1$ is comprised entirely of variables, no constraint in $G_1$ can contain in its scope a variable of $S_1\setminus S_1^*$, a contradiction. Hence, we conclude that $S_1^*$ is a strong backdoor of $\psi(G_1)$ into $\CSP(\Gamma)$, completing the argument for the fourth property in Definition \ref{def:realizability}.

In order to prove that the remaining two properties hold, we  show that $(\torso_{\tilde G^\omega}(S_1^*\cup L_\omega),\Lambda_\omega)$ has a \emph{label-preserving} isomorphism to the graph $(\torso_{G_1\oplus G_3}(S_1) [S_1^*\cup L_1],\Lambda_1)$. For ease of presentation, let $B_1$  denote the graph $\torso_{\tilde G^\omega}(S_1^*\cup L_\omega)$ and $B_2$ denote the graph $\torso_{G_1\oplus G_3}(S_1) [S_1^*\cup L_1]$.

 We now define a bijection $\phi:V(B_1)\to V(B_2)$. For every $v\in S_1^*$, we set $\phi(v)=v$. For every $l^\omega_i\in L_\omega$, we set $\phi(l^\omega_i)=l_i\in L_1$. We argue that $\phi$ is in fact a label-preserving isomorphism between $(B_1,\Lambda_\omega)$ and $(B_2,\Lambda_1)$. It is straightforward to verify that for any vertex $v\in V(B_1)$, $\Lambda_\omega(v)=\Lambda_1(\phi(v))$. Therefore, we only need to prove that $\phi$ is an isomorphism. We begin with the forward direction.

 ($\implies$) We show that for every edge $(u,v)\in E(B_1)$, $(\phi(u),\phi(v))$ is an edge in $B_2$. Consider an edge $(u,v)\in E(B_1)$. 

 \begin{description}
 	\item {\bf Case 1:} $u,v\in S_1^*$. 
 	By the definition of $B_1$, it must be the case that either there is a component $X$ of $\tilde G^\omega-(S_1^*\cup L_\omega)$ which is adjacent to both $u$ and $v$ or the pair $(u,v)\in X_1$ (see the description of $\cP'$ in the definition of $\omega$ ). In the former case, since $V(B_1)=S_1^*\cup L_\omega$, we consider the following two exhaustive subcases : $X\subseteq V(G_1)\setminus Z_1$ or $X\cap Z_1\neq \emptyset$.  Suppose that $X$ is disjoint from $Z_1$, that is, $X\subseteq V(G_1)\setminus Z_1$. Then, it must be the case that $N(X)\subseteq S_1^*$ and hence $X$ is also disjoint from $S_1$ and adjacent to $u$ and $v$ in $(G_1\oplus G_3) - S_1$, implying that $(u,v)$ is an edge in $B_2$. On the other hand, suppose that $X$ intersects $Z_1$. Then, there is a set $Q\in \cP$ such that $X\cap Z_1=Q$. By the definition of $\omega$, it follows that $Q$ is contained in a component $X'$ of $(G_1\oplus G_3) - S_1$. Now, the definition of $S_1^*$ and $\gamma$ implies that $X'$ is also adjacent to $u$ and $v$ in $(G_1\oplus G_3) - S_1$, implying that $(u,v)=(\phi(u),\phi(v))$ is an edge in $B_2$. 
 	
 	   In the latter case, that is when the pair $(u,v)\in X_1$, the description of the set $X_1$ implies that there is a component of $(G_1\oplus G_3) - S_1$ which is adjacent to both $u$ and $v$, implying that $(u,v)=(\phi(u),\phi(v))$ is an edge in $B_2$.
 	This completes the argument for the  first case.

 	\item {\bf Case 2:} $u,v\in L_\omega$. Let $u=l^\omega_i$ and $v=l^\omega_j$. By the definition of $B_1$ it follows that either  there is a set $Q\in \cP$ such that the set $\gamma^2(Q)$ contains $u$ and $v$ (recall that $\gamma^2(Q)$ denotes the set in the second co-ordinate of $\gamma(Q)$) or the pair $(i,j)\in X_2$. In the former case,  the definition of the function $\gamma$ implies that the component of $(G_1\oplus G_3) - S_1$ containing $Q$ is adjacent to the vertices $l_i,l_j\in L_1$. Hence, we infer that $(\phi(u),\phi(v))$ is an edge in $B_2$. In the latter case, the definition of the set $X_2$ implies that there is a component of $(G_1\oplus G_3) - S_1$ (not necessarily intersecting $V(G_1)$) that is adjacent to the vertices $l_i,l_j\in L_1$. Again, this implies that $(\phi(u),\phi(v))$ is an edge in $B_2$.  This completes the argument for the second case.
 	
 	\item {\bf Case 3:} $u\in S_1^*, v=l^\omega_i\in L_\omega$. In this case, it follows from the definition of $B_1$ that either there is a set $Q\in \cP_1$ such that the set $\gamma^2(Q)$ contains $v$ or the pair $(u,v)\in X_3$. Furthermore, in the former case, if $u\in Z_1$, then $u\in P(G_1,Z_1)$ and $\gamma^1(Q)$ contains $u$, which by the definition of $\omega$ implies that $u$ and $l_i$ are adjacent to the same component of $(G_1\oplus G_3)-S_1$, implying the edge $(u,l_i)=(\phi(u),\phi(l^\omega_i))$.
 	On the other hand, if $u\notin Z_1$, then the component containing $Q$ in $\tilde G^\omega-(S_1^*\cup L_\omega)$ is adjacent to $u$. Let this component be $X$. Then, the set $X\cap V(G_1)$ contains a path in $(G_1\oplus G_3) - S_1$ from $Q$ to a vertex which is adjacent to $u$. Note that this path is also present in $(G_1\oplus G_3)-S_1$. Also, the definition of $\gamma$ implies that since $v\in \gamma^2(Q)$, there is a path from $Q$ to a vertex that is adjacent to $l_i$ in the graph $(G_1\oplus G_3)-S_1$. Hence, we infer that there is a component of $(G_1\oplus G_3)-S_1$ that is adjacent to both $u$ and $l_i$ and hence $(u,l_i)=(\phi(u),\phi(v))$ is indeed an edge in $B_2$. 	
 	Finally, if the pair $(u,v)\in X_3$, then it follows from the definition of $X_3$ that there is a component of $(G_1\oplus G_3)-S_1$ which is adjacent to $u$ and $v$, implying that $(\phi(u),\phi(v))$ is an edge in $B_2$.
 	This completes the argument for the third and final case.
 \end{description}

($\impliedby$) We now argue the converse direction. That is, for every edge $(u,v)\in E(B_2)$, $(\phi^{-1}(u),\phi^{-1}(v))$ is an edge in $B_1$. By the definition of $B_2$, it follows that there is a component $X$ of $(G_1\oplus G_3) -S_1$ which is adjacent to $u$ and $v$. Since $V(B_2)=S_1^*\cup L_1$, we have the following exhaustive cases. 

\begin{description}
	\item {\bf Case 1:} $u,v\in S_1^*$. We have the following three subcases: $X\subseteq V(G_1)\setminus Z_1$, $X\cap Z_1\neq \emptyset$, or $X\cap V(G_1)=\emptyset$. If $X\subseteq V(G_1)\setminus Z_1$, then it follows that $X$ is also a connected component of $\tilde G^\omega-(S_1^*\cup L_\omega)$, implying that $(u,v)\in E(B_1)$. If $X\cap V(G_1)=\emptyset$, then the definition of $\cP'$ implies that $X_1$ contains the pair $(u,v)$, which in turn implies that $\tilde G^\omega$ and hence $B_1$ contains the edge $(u,v)$. Finally, if $X\cap Z_1\neq \emptyset$, then....
	
\item {\bf Case 2:} $u,v\in L_1$ where $u=l_i$ and $v=l_j$. Here, we have the following 2 subcases: $X\cap Z_1\neq \emptyset$ or $X\cap Z_1=\emptyset$. Since $Z_1$ intersects every connected set in $G_1\oplus G_3$ that contains vertices of $V(G_1)$ and $V(G_3)$, these 2 subcases are exhaustive. In the first subcase, suppose that $X\cap Z_1=Q$. Then, the definition of $\gamma$ implies that $\gamma^2(Q)$ contains $l^\omega_i$ and $l^\omega_j$. The definition of $\tilde G^\omega$ implies that $(\phi^{-1}(u),\phi^{-1}(v))$ is an edge in $B_1$. In the second subcase, the definition of $\cP'$ implies that the pair $(i,j)\in X_2$. Again, the definition of $\tilde G^\omega$ implies that $(\phi^{-1}(u),\phi^{-1}(v))$ is an edge in $B_1$, completing the argument for this case.

	\item {\bf Case 3:} $u\in S_1^*$, $v=l_i\in L_1$. Here, we have the following 2 subcases: $X\cap Z_1\neq \emptyset$ or $X\cap Z_1=\emptyset$. Again, since $Z_1$ intersects every connected set in $G_1\oplus G_3$ that contains vertices of $V(G_1)$ and $V(G_3)$, these 2 subcases are exhaustive. In the first subcase, suppose that $X\cap Z_1=Q$. Then, the definition of $\gamma$ implies that $\gamma^2(Q)$ contains $l^\omega_j$. If $u\in Z_1$ then $\gamma^1(Q)$ contains $u$, implying that $B_2$ contains the edge $(u,l^\omega_j)=(\phi^{-1}(u),\phi^{-1}(v))$. On the other hand, if $u\notin Z_1$, then the component of $\tilde G^\omega - (..)$ is adjacent to both $u$ and $l^\omega_j$, implying that $B_2$ contains the edge $(u,l^\omega_j)=(\phi^{-1}(u),\phi^{-1}(v))$ In the second subcase, it must be the case that $u\in Z_1$ and that the pair $(u,j)\in X_3$. Again, the definition of $\tilde G^\omega$ implies that $(\phi^{-1}(u),\phi^{-1}(v))$ is an edge in $B_1$, completing the argument for this final case.
\end{description}

Thus we have concluded that $\phi$ is an isomorphism.
Hence,  $tw(B_1)=tw(B_2)$ and since $B_2$ is a subgraph of $\torso_{G_1\oplus G_3}(S_1)$, it follows that $tw(B_1) \leq tw(\torso_{G_1\oplus G_3}(S_1))\leq k$.
Finally, since $\phi$ is also a label-preserving isomorphism between $(B_1,\Lambda_\omega)$ and $(B_2,\Lambda_1)$, we conclude that $\cM_{h^*(k,t)}(B_1,\Lambda_\omega)=\cM_{h^*(k,t)}(B_2,\Lambda_1)$ which is precisely $\mathfrak{H}$ by definition of $\omega$. This completes the proof of the claim that $\omega$ is realized by $S_1^*$ in $(G_1,Z_1)$.
\end{proof}

\noindent
\textbf{Phase II: Defining an equivalent solution in $\psi(G_2\oplus G_3)$.}

 Since $\omega\in \mathfrak{S}_{\sd}((G_1,Z_1),h^*(k,t),k)$, and the premise of the lemma guarantees that $\mathfrak{S}_{\sd}((G_1,Z_1),h^*(k,t),k)=\mathfrak{S}_{\sd}((G_2,Z_2),h^*(k,t),k)$, we conclude that $\omega$ is also realizable in $(G_2,Z_2)$. Let $S_2^*$ denote the subset of $V(G_2)$ that realizes $\omega$. We claim that $S_2=S_2^*\cup (S_1\setminus S_1^*)$ is in fact a strong backdoor of $\psi(G_2\oplus G_3)$ into $\CSP(\Gamma)$ and has width at most $k$. The rest of the proof of the lemma is dedicated to proving this statement.

    We begin by arguing that $S_2$ is indeed a strong backdoor of $G_2\oplus G_3$. Suppose that this is not the case and let $\tau:S_2\to \cD$ be an assignment to $S_2$ and $C$ be a constraint such that $C|_{\tau}\notin \Gamma$. If $C\in V(G_2)$ then this contradicts our assumption that $S_2^*$ is a strong backdoor of $G_2$. Therefore, it must be the case that $C\in V(G_3)$. Now, if $\var(C)$ is disjoint from $Z_3$, then we have a contradiction to our assumption that $S_1$ is a strong backdoor of $\psi(G_1\oplus G_3)$. Therefore, we conclude that $\var(C)$ intersects $Z_3$. However, since $S_1^*$ and $S_2^*$ realize  $\omega$, it must be the case that $\var(C)\cap Z_1=\var(C)\cap Z_2$. Therefore, if $S_1$ is a strong backdoor of $\psi(G_1\oplus G_3)$ to $\CSP(\Gamma)$,  then $S_2$ is a strong backdoor of $\psi(G_2\oplus G_3)$ into $\CSP(\Gamma)$. It remains to prove that $S_2$ is a strong backdoor of width at most $k$. For this, we need the following three claims.
    
The first claim states that the labeled minors ($h^*(k,\ell)$-folios) we expect to find in the torso of $G_2$ are actually there.
    \begin{claim}
    \label{claim:sameminors}
      Consider the graph $\torso_{G_2\oplus G_3}(S_2)[S_2^*\cup L_1]$ and the labeling $\Lambda_2:S_2^*\cup L_1 \to \mathfrak{L}(P,w')$ defined as follows. For every $v\in S_2^*\cap Z_2$,  $\Lambda_2(v)=\{\Box_i\}$ where $v=i(G_2,Z_2)$; for every $v\in L_1$, we set $\Lambda_2(v)=\{\Diamond_i\}$ where $v=l_i$; and for every other vertex $v\in (S_2^*\cup L_1)$, we set $\Lambda_2(v)=\emptyset$. Then, $\mathfrak{H}=\cM_{h^*(k,\ell)}(\torso_{G_2\oplus G_3}(S_2)[S_2^*\cup L_1],\Lambda_2)$.
    	
    \end{claim}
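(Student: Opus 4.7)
The plan is to prove the claim by constructing a label-preserving isomorphism between the two labeled graphs
\[
(B_2', \Lambda_2) \;:=\; (\torso_{G_2\oplus G_3}(S_2)[S_2^*\cup L_1], \Lambda_2)
\quad\text{and}\quad
(B_1', \Lambda_\omega) \;:=\; (\torso_{\tilde G_2^\omega}(S_2^*\cup L_\omega), \Lambda_\omega),
\]
where $\tilde G_2^\omega$ denotes the graph produced from $(G_2, Z_2)$ by Definition~\ref{def:associate} applied to $\omega$. Because $S_2^*$ realizes $\omega$ in $(G_2, Z_2)$, the labeled folio $\cM_{h^*(k,t)}(B_1', \Lambda_\omega)$ equals $\mathfrak{H}$ by clause~(3) of Definition~\ref{def:realizability}, and any label-preserving isomorphism between $B_1'$ and $B_2'$ immediately transports this equality to $B_2'$, giving the desired conclusion.

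The candidate isomorphism $\phi_2 : S_2^*\cup L_\omega \to S_2^* \cup L_1$ is the natural one: $\phi_2(v)=v$ for $v \in S_2^*$ and $\phi_2(l_i^\omega) = l_i$ for each $i \in [w']$. Label preservation is a direct comparison of the definitions of $\Lambda_\omega$ and $\Lambda_2$: boundary vertices with label $\{\Box_i\}$ match because $S_2^*\cap Z_2 = P(G_2, Z_2)$, and vertices in $L_\omega$ carry the same $\{\Diamond_i\}$ labels as their images in $L_1$. All remaining vertices are unlabeled on both sides.

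The bulk of the argument is the edge-correspondence, carried out in essentially the same style as Claim~\ref{claim:iso} but with the roles of $G_1$ and $G_2$ swapped on the realized-configuration side. For the forward direction, one takes an edge of $B_1'$ and splits into three cases (both endpoints in $S_2^*$, both in $L_\omega$, or one of each); in each case the edge arises either from a component of $\tilde G_2^\omega - (S_2^* \cup L_\omega)$ adjacent to both endpoints, or from an explicit pair in $\cP'$. Components lying inside $V(G_2)\setminus Z_2$ are unchanged by the gluing with $G_3$, so they also witness the edge in $B_2'$. Components intersecting the boundary correspond to some $Q \in \cP$ with $\gamma(Q)=(J_1^Q, J_2^Q)$: since the portion of the graph in $V(G_3)$ is identical in the $G_1$ and $G_2$ gluings, the component of $(G_2\oplus G_3)-S_2$ extending $Q$ is adjacent to exactly the vertices in $J_1^Q(G_2,Z_2) \cup \phi_2(J_2^Q(L_\omega))$, producing the required edge of $B_2'$. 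Edges coming from $\cP'$ correspond to components of $(G_1\oplus G_3)-S_1$ disjoint from $V(G_1)$, hence living entirely in $V(G_3)\setminus Z_3$; these same components appear in $(G_2\oplus G_3)-S_2$ and yield the edge in $B_2'$. The reverse direction is symmetric: an edge of $B_2'$ is witnessed by a component $X$ of $(G_2\oplus G_3)-S_2$, which is partitioned according to whether $X\subseteq V(G_2)\setminus Z_2$, $X\cap V(G_2)=\emptyset$, or $X\cap Z_2\neq\emptyset$; in each subcase the edge is produced in $\tilde G_2^\omega$ either by a component of $\tilde G_2^\omega-(S_2^*\cup L_\omega)$, by the corresponding pair in $\cP'$, or through $\gamma(Q)$ for the unique $Q\in \cP$ contained in $X \cap Z_2$.

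The main obstacle is the bookkeeping required to verify that the partition $\cP$, the incidence data $\gamma$, and the extra-edge data $\cP'$ really do capture all connectivity information across the shared boundary, uniformly in the realizing pair. The crucial technical observation is that since $\omega$ is realized both by $S_1^*$ in $(G_1,Z_1)$ and by $S_2^*$ in $(G_2,Z_2)$, the partition of $Z \setminus P$ into components that travel through $V(G_3)$ is independent of which side is glued, and so the edges of $\torso_{G_i\oplus G_3}(S_i)[S_i^*\cup L_1]$ contributed by $V(G_3)$-components are the same for $i=1$ and $i=2$. Combined with the fact that edges contributed purely by the $V(G_i)$-side are captured by the component structure of $\tilde G_i^\omega-(S_i^* \cup L_\omega)$, this yields the isomorphism and thus the claim.
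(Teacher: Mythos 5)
Your proposal is correct and matches the paper's own argument: the paper likewise invokes clause~(3) of Definition~\ref{def:realizability} for $S_2^*$ realizing $\omega$ in $(G_2,Z_2)$ and then transfers $\mathfrak{H}$ via a label-preserving isomorphism between $\torso_{\tilde G_2^\omega}(S_2^*\cup L_\omega)$ and $\torso_{G_2\oplus G_3}(S_2)[S_2^*\cup L_1]$, established by the same identity-plus-$l_i^\omega\mapsto l_i$ map and the same three-case edge analysis as Claim~\ref{claim:iso}. Your explicit remark that the $G_3$-side components (and hence the data encoded by $\cP$, $\gamma$, $\cP'$) are unchanged when swapping $G_1$ for $G_2$ is exactly the point the paper leaves implicit when it declares the isomorphism proof ``identical'' to Claim~\ref{claim:iso}.
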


    \begin{proof}
    	By the definition of $S_2^*$, it holds that $\mathfrak{H}=\cM_{h^*(k,\ell)}(\torso_{\tilde G_2^\omega}(S_2)[S_2^*\cup L_\omega],\Lambda_\omega)$. Therefore, it suffices to prove that $\cM_{h^*(k,\ell)}(\torso_{G_2\oplus G_3}(S_2)[S_2^*\cup L_1],\Lambda_2)=\cM_{h^*(k,\ell)}(\torso_{\tilde G_2^\omega}(S_2)[S_2^*\cup L_\omega],\Lambda_\omega)$. In order to do so, we define a label-preserving isomorphism from $(\torso_{G_2\oplus G_3}(S_2)[S_2^*\cup L_1],\Lambda_2)$ to $(\torso_{\tilde G_2^\omega}(S_2)[S_2^*\cup L_\omega],\Lambda_\omega)$. The proof is identical to that of Claim \ref{claim:iso} and hence we do not repeat it.
    \end{proof}

 The second claim states that if some $\ell$-boundaried graphs $B_1$ and $B_2$ contain the same $(h^*(k)+\ell)$-folios, then their join with a boundaried graph $B_3$ must contain the same $h^*(k)$-folios. 
We will use one new piece of notation to make our exposition clearer. Given a minor $Q$ (constructed by a fixed sequence of deletions and contractions) in a graph $G$, we say that a vertex $v\in V(G)$ is a \emph{preimage} of a vertex $q\in Q$ iff if either $v=q$ or $v$ was contracted into a new vertex $v'$ which is a preimage of $q$.

    \begin{claim} \label{claim:composeminors}
    Let $\ell,h\in \naturals$ and $(B_1,K_1), (B_2,K_2)$ and $(B_3,K_3)$ be $\ell$-boundaried incidence graphs. For $i\in \{1,2\}$, let $H_i=B_i\oplus B_3$. For each $i\in \{1,2\}$, let $\Lambda_i:V(B_i)\to 2^{\{\bigtriangledown_1, \dots, \bigtriangledown_\ell \}}$ be defined as follows. For every $v\in V(B_i)\setminus K_i$, $\Lambda_i(v)=\emptyset$ and for every $v\in K_i$, $\Lambda_i(v)=\{\bigtriangledown_j\}$ where $v=\{j\}(B_i,K_i)$. If $\cM_{h+\ell}(B_1,\Lambda_1)=\cM_{h+\ell}(B_2,\Lambda_2)$ then $\cM_h(B_1\oplus B_3,\emptyset)=\cM_h(B_2\oplus B_3,\emptyset)$.
    \end{claim}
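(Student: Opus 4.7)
The plan is to show $\cM_h(B_1 \oplus B_3, \emptyset) \subseteq \cM_h(B_2 \oplus B_3, \emptyset)$; the reverse inclusion follows by symmetry. Fix $M \in \cM_h(B_1 \oplus B_3, \emptyset)$ witnessed by branch sets $\{S_v : v \in V(M)\}$ in $B_1 \oplus B_3$. The overall strategy is to \emph{project} $M$ onto $B_1$ to obtain a labeled minor $M_1$ of $(B_1,\Lambda_1)$ of size at most $h+\ell$ that records how the $B_1$-side of the minor meets the boundary, apply the folio equality hypothesis to transfer $M_1$ to a labeled minor of $(B_2,\Lambda_2)$, and finally \emph{reassemble} a minor model of $M$ inside $B_2 \oplus B_3$ by gluing the $B_2$-side branch sets of $M_1$ to the $B_3$-side pieces of the original $S_v$'s along the common boundary.

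For the projection, declare the branch sets of $M_1$ to be the connected components of $S_v \cap V(B_1)$ (taken in $B_1$) as $v$ ranges over $V(M)$, with labels inherited by union from $\Lambda_1$ and with all edges of $B_1$ crossing between branch sets retained. To bound $|V(M_1)| \leq h+\ell$, split the branch sets into those touching $K_1$ and those contained in $V(B_1) \setminus K_1$. Components of the first type use pairwise disjoint nonempty subsets of $K_1$, giving at most $\ell$ in total. For the second type, note that any path in $B_1 \oplus B_3$ between $V(B_1)\setminus K_1$ and $V(B_3)\setminus K_3$ must cross $K_1$; hence if $S_v$ meets $V(B_3)\setminus K_3$, every component of $S_v \cap V(B_1)$ in $B_1$ must touch $K_1$, else $S_v$ would be disconnected in $B_1 \oplus B_3$. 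Thus each $v$ contributes at most one second-type component (when $S_v \subseteq V(B_1)\setminus K_1$), giving at most $|V(M)| \leq h$ such components, and hence $|V(M_1)| \leq h+\ell$.

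By hypothesis $M_1 \in \cM_{h+\ell}(B_2,\Lambda_2)$, so there exist branch sets $\{T_C : C \in V(M_1)\}$ in $B_2$ realizing $M_1$ via a label-preserving isomorphism. Label preservation ensures that the boundary labels appearing in each $T_C$ match those of the corresponding component $C$ of $S_v \cap V(B_1)$, i.e., $j(B_2,K_2) \in T_C$ iff $j(B_1,K_1) \in C$. For each $v \in V(M)$ define $S'_v = (S_v \cap V(B_3)) \cup \bigcup_{C} T_C$, where $C$ ranges over those branch sets of $M_1$ that arose from $S_v \cap V(B_1)$.

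The main verification — and the only nontrivial step in the plan — is to check that $\{S'_v\}$ is a valid minor model of $M$ in $B_2 \oplus B_3$. Disjointness holds because the $T_C$'s are pairwise disjoint in $B_2$, the $S_v \cap V(B_3)$'s are pairwise disjoint in $B_3$, and any potential overlap lives at the boundary, where label preservation guarantees that the boundary vertices appearing in the $T_C$'s of one $S'_v$ are exactly the boundary vertices of the original $S_v$ (and hence disjoint from those of $S'_u$ for $u \neq v$). Connectivity of each $S'_v$ in $B_2 \oplus B_3$ follows because each $T_C$ is connected in $B_2$ and attaches to $S_v \cap V(B_3)$ through exactly the same boundary vertices as the original component $C$ did in $B_1 \oplus B_3$; the abstract gluing pattern describing how pieces of $S_v$ meet at the boundary is therefore unchanged, so connectivity of $S_v$ transfers to $S'_v$. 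Finally, each edge $(u,v) \in E(M)$ is realized: it arose from an edge of $B_1 \oplus B_3$ between $S_u$ and $S_v$, which either lies in $B_3$ (and is preserved verbatim) or lies in $B_1$ (in which case it corresponds to an edge of $M_1$ between appropriate branch sets and thus to an edge in $B_2$ between the corresponding $T_C$'s). Any surplus edges between distinct $S'_v$'s in $B_2 \oplus B_3$ can be deleted, yielding $M \in \cM_h(B_2 \oplus B_3, \emptyset)$ as required.
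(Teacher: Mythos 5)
Your proposal is correct and follows essentially the same route as the paper's proof: project the minor onto $B_1$ as a labeled minor on at most $h+\ell$ pieces (the boundary accounting for the extra $\ell$), transfer it to $B_2$ via the folio equality, and reassemble with the $B_3$-side parts using the fact that label preservation fixes the boundary intersections of each piece. Your accounting of the $h+\ell$ bound and of disjointness is in fact slightly more explicit than the paper's, but the underlying argument is the same.
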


\begin{proof}
Consider a $h$-folio $Q$ in $\cM_h(B_1\oplus B_3,\emptyset)$. We intend to show that $Q$ is also present in $\cM_h(B_2\oplus B_3,\emptyset)$; the other direction is completely symmetric. For each $q\in Q$, let pre$_1(q)$ denote the set of preimages of $q$ in $B_1$ and $\text{pre}_3(q)$ the set of preimages of $q$ not in $B_1$. Observe that it may happen that pre$_1(q)$ is not a connected set, but only if pre$_1(q)$ intersects $K_1$. Let pre$_1(Q)=\SB X \SM \exists q\in Q \text{ s.t. } X\text{ is a connected component of pre}_1(q)\SE$. To capture the correspondence between pre$_1(Q)$ and $V(Q)$, we define the mapping map$:\text{pre}_1(Q)\rightarrow V(Q)$ where map$(X)=q\in Q$ iff $X$ is a connected component of a preimage of~$q$. 

Observe that since each set of preimages is disjoint from the others, $|$pre$_1(Q)|\leq |Q|+\ell$. So, let $Q_1$ be the $(h+\ell)$-folio obtained in $(B_1,\Lambda_1)$ by contracting each element in pre$_1(Q)$ into a single vertex and deleting all other vertices in $G_1$. Interestingly, observe that $Q_1$ need not necessarily be a subgraph of $Q$ (a vertex in $Q$ could be `split' into several vertices in $Q_1$). Since $\cM_{h+\ell}(B_1,\Lambda_1)=\cM_{h+\ell}(B_2,\Lambda_2)$, it follows that $Q_1$ also occurs as a $(h+\ell)$-folio in $(B_2,\Lambda_2)$. Let pre$_2(Q)$ denote the set of preimages of $Q_1$ in $B_2$. Note that there is a unique label-preserving one-to-one correspondence between the elements of pre$_2(Q)$  and those of pre$_1(Q)$, defined as follows: $f(X_2\in  \text{pre}_2(Q))= X_1\in  \text{pre}_1(Q)$ iff there exists $q\in Q_1$ such that both $X_1$ and $X_2$ are the preimages of $q$.

Now, let us consider the minor $Q'$ in $B_2\oplus B_3$ obtained by the following procedure. For each $q\in V(Q)$, we define $Y_q\subseteq V(B_2\oplus B_3)$ as follows: $Y_q=\text{pre}_3(q)\cup \SB x\in X_2 \SM X_2\in \text{pre}_2(Q) \text{ where map}(f(X_2))=q\SE$. Intuitively, $Y_q$ uses the correspondence between the preimages of $Q_1$ in $B_1$ and $B_2$ to replicate a preimage of $q$ in $B_2\oplus B_3$.
Now, for each $q\in V(Q)$ we let $q'$ be a vertex in $Q'$ obtained by contracting $Y_q$ into a single vertex. We claim that $Y_q$ is connected and hence that each such $q'$ is well-defined; indeed, each $X_2\in \text{pre}_2(Q)$ is itself connected by construction, and for each such $X_2$ there exists a corresponding $X_1$ with the same intersection with the boundary.

Finally, it remains to verify that for each vertex pair $a,b\in Q$ that is adjacent, the natural corresponding vertex pair $a', b'\in Q'$ is also adjacent. So, let $\bar a, \bar b$ be an adjacent pair of preimages of $a,b$, respectively, in $B_1$. If $\bar a, \bar b$ are adjacent due to an edge in $B_3$, then $\bar a, \bar b$ both occur in $B_3$ and hence they are also preimages of $a', b'$, from which the claim follows. On the other hand, if $\bar a, \bar b$ are adjacent due to an edge in $B_1$, then there exists at least one vertex $a_1\in Q_1$ (corresponding to $a$) and one vertex $b_1\in Q_1$ (corresponding to $b$) such that $a_1, b_1$ are adjacent in $Q_1$. But then the preimages of $a_1$ and of $b_1$ \textbf{in $B_2$} must also contain an adjacent pair, say $\bar a', \bar b'$. Finally, by construction of $Y_a$ and $Y_b$, we conclude that $\bar a'$ and $\bar b'$ must be preimages of $a', b'$, respectively. Thus $a', b'$ are indeed an adjacent pair in $Q'$. To conclude the proof, observe that, by deleting all vertices not contracted into $Q'$ and possibly some redundant edges, we have found a $h$-folio $Q$ which occurs in $B_2\oplus B_3$, and hence $Q\in \cM_{h}(B_2\oplus B_3,\emptyset)$.
\end{proof}

The next, final claim states that the torsos of the composed graphs can also be obtained by taking the respective parts of the torso and gluing these parts together. In other words, we show that if we have a part of a torso in each of the boundaried graph, then the whole torso can be obtained by simply gluing these parts along the correct boundary.

\begin{claim}
\label{claim:torsojoin}
Let 
\begin{itemize}
\item $B_1=\torso_{G_1\oplus G_3}(S_1)[S_1^*\cup L_1]$ and $K_1=P(G_1,Z_1)\cup L_1$;
\item $B_2= \torso_{G_2\oplus G_3}(S_2)[S_2^*\cup L_1]$ and $K_2=P(G_2,Z_2)\cup L_1$;
\item $B_3=\torso_{G_1\oplus G_3}(S_1)-(V(G_1)\setminus K_1)$ and $K_3=K_1= P(G_1,Z_1)\cup L_1$. 
\end{itemize}
Then, $\torso_{G_1\oplus G_3}(S_1)=(B_1,K_1)\oplus (B_3,K_3)$ and $ \torso_{G_2\oplus G_3}(S_2)= (B_2,K_2)\oplus (B_3,K_3)$.
\end{claim}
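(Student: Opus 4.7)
The plan is to establish both equalities by verifying that the vertex and edge sets of the torso decompose exactly along the boundary $K_1 = P(G_1,Z_1) \cup L_1$. For the first equality, I would start by observing that $V(B_1) \cup V(B_3) = (S_1^* \cup L_1) \cup ((S_1 \setminus S_1^*) \cup P(G_1,Z_1)) = S_1 = V(\torso_{G_1 \oplus G_3}(S_1))$, with $V(B_1) \cap V(B_3) = K_1$, so the gluing gives the correct vertex set. Since both $B_1$ and $B_3$ are subgraphs of $\torso_{G_1 \oplus G_3}(S_1)$, the inclusion $E(B_1) \cup E(B_3) \subseteq E(\torso_{G_1 \oplus G_3}(S_1))$ is immediate.

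The heart of the argument is the converse inclusion, which requires the following \emph{separation property}: no edge of the torso connects a vertex of $S_1^* \setminus K_1$ to a vertex of $(S_1 \setminus S_1^*) \setminus L_1$. Edges of the torso arise either as original edges of $G_1 \oplus G_3$ or as clique edges from a component $X$ of $(G_1 \oplus G_3) - S_1$. For original edges, a vertex in $S_1^* \setminus K_1 \subseteq V(G_1) \setminus Z_1$ cannot be adjacent to any vertex outside $V(G_1)$. For clique edges, if $X$ has a neighbor $u \in S_1^* \setminus K_1 \subseteq V(G_1) \setminus Z_1$, then $X$ contains a vertex of $V(G_1) \setminus Z_1$ (a neighbor of $u$ in $G_1$) or a vertex of $Z_1 \setminus S_1$; in either case, if $X$ also has a neighbor $v$ outside $V(G_1)$, then $X$ must cross $Z_1$, and by the very definition of $L_1$ any such neighbor $v \in S_1 \setminus S_1^*$ lies in $L_1$, contradicting $v \notin L_1$. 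Given the separation property, every edge of the torso has both endpoints inside $V(B_1)$ or both endpoints inside $V(B_3)$, hence lies in $E(B_1) \cup E(B_3)$. This establishes $\torso_{G_1 \oplus G_3}(S_1) = (B_1, K_1) \oplus (B_3, K_3)$.

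For the second equality, the same decomposition argument applied to $G_2$ yields $\torso_{G_2 \oplus G_3}(S_2) = (B_2, K_2) \oplus (B_3', K_3')$, where $B_3' = \torso_{G_2 \oplus G_3}(S_2) - (V(G_2) \setminus K_2)$ and $K_3' = K_2$. It then remains to show that $(B_3', K_3') = (B_3, K_3)$ as labeled boundaried graphs, where the identification of boundaries is via the label set $P \cup \SB t+i \SM i \in [w'] \SE$ (treating $P(G_j, Z_j)$ as carrying its $[t]$-labels and $L_1$ as carrying labels $l_1, \dots, l_z$). The vertex sets coincide because $S_2 \setminus S_2^* = S_1 \setminus S_1^*$ by construction of $S_2$, and $L_1$ is the same set in both. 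For the edges, I would argue that clique edges in $\torso_{G_j \oplus G_3}(S_j)$ restricted to the outside region $(S_1 \setminus S_1^*) \cup K_1$ depend only on the components of $(G_3 - (Z_3 \cap S_j)) \cup \SB Q \SM Q \in \cP \SE$, together with the map $\gamma$ and the set $\cP'$; since both $S_1^*$ and $S_2^*$ realize the same configuration $\omega$, the partition $\cP$, the map $\gamma$, and the extra edges $\cP'$ are identical in both torsos.

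The main obstacle is the second part: one must carefully verify that the torso's clique structure on the boundary $K_1 = K_2$ coming from components crossing $Z_j$ is entirely determined by $\omega$ and not by $G_1$ or $G_2$ individually. This reduces to showing that the connectivity of a vertex $v \in (S_1 \setminus S_1^*) \cup K_1$ to a vertex $v' \in (S_1 \setminus S_1^*) \cup K_1$ via a component of $(G_j \oplus G_3) - S_j$ is recorded in the data $(\cP, \cP', \gamma)$ of $\omega$; the definitions of $\gamma$ (encoding which boundary components reach which strong-backdoor vertices) and $\cP'$ (encoding extra clique edges from components disjoint from $V(G_j)$) are precisely tailored so that this holds.
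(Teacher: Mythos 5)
Your treatment of the first equality is correct and is essentially the paper's argument: same vertex set, both parts are induced subgraphs of the torso, and no torso edge crosses between $S_1^*\setminus K_1$ and $(S_1\setminus S_1^*)\setminus L_1$ (the paper phrases this via a path that would have to cross $Z_1$ and then hit a vertex of $L_1$; your component-based phrasing is the same argument). You are also right to be suspicious of the second equality: it is \emph{not} literally symmetric, since $B_3$ is defined from the $G_1$/$S_1$ side but appears in the decomposition of $\torso_{G_2\oplus G_3}(S_2)$; the paper simply declares it ``completely symmetric,'' whereas you correctly reduce it to showing that the outside part $B_3'=\torso_{G_2\oplus G_3}(S_2)-(V(G_2)\setminus K_2)$ coincides with $B_3$.

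The gap is that this reduction is then only asserted, not proved, and the justification you sketch does not suffice as stated. You claim the boundary/outside clique structure is ``entirely determined'' by $(\cP,\cP',\gamma)$, but these data do not record everything relevant: (a) an edge between two vertices of $P(G_1,Z_1)$ created by a component of $G_1-S_1^*$ lying \emph{entirely inside} $G_1$ (disjoint from $Z_1$) appears in $B_3$ yet is recorded in none of $\cP$, $\cP'$, $\gamma$ (recall $\cP'$ only covers components disjoint from $V(G_1)$, and $\gamma$ only components meeting $Z_1$), so nothing in your cited data forces a matching edge on the $G_2$ side; and (b) nothing in $(\cP,\cP',\gamma)$ alone prevents a single component of $G_2-S_2^*$ from containing vertices of two different classes $Q_i,Q_j\in\cP$, which would merge their components in $(G_2\oplus G_3)-S_2$ and create adjacencies among $K_1$-vertices that $B_3$ does not have. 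Ruling out such discrepancies requires invoking the remaining realizability conditions of $\omega$ (in particular the folio condition on $\mathfrak{H}$ via $\torso_{\tilde G_2^\omega}(S_2^*\cup L_\omega)$, i.e., the content of Claim~\ref{claim:sameminors}, and the way $\tilde G^\omega$ wires each $Q$ to $\gamma(Q)$ and to $\cP'$), and this is precisely the verification you defer as ``the main obstacle.'' Until that verification is carried out, the second equality of the claim — the one actually used later, since the same $B_3$ must appear in both decompositions — is not established by your proposal.
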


\begin{proof}
We prove $\torso_{G_1\oplus G_3}(S_1)=(B_1,K_1)\oplus (B_3,K_3)$, since the other claim is completely symmetric. For brevity, let $T=\torso_{G_1\oplus G_3}(S_1)$ and $B=(B_1,K_1)\oplus (B_3,K_3)$. First observe that $V(T)=V(B)$. Indeed, $B$ was obtained by partitioning the vertices of $T$ into $B_1$ and $B_3$ with the exception of $K_1$, which was copied into both $B_1$ and $B_3$. Then clearly gluing $B_1$ and $B_3$ together will merge the two distinct copies of each vertex in $K_1$ into a single vertex, hence resulting in the same vertex set as $V(T)$. For the same reason, any edge that is present in $B$ must also occur in $T$ ($B$ was obtained by joining two boundaried induced subgraphs of $T$).

So, what remains to show is that any edge $e=ac$ in $T$ also occurs in $B$. Clearly, if $a,c\in V(B_1)$ then $e\in E(B_1)$ and in particular $e\in E(B)$. For the same reason, if $a,c\in V(B_3)$ then $e\in E(B)$ as well. So, consider $a\in V(B_1)\setminus V(B_3)$ and $c\in V(B_3)\setminus V(B_1)$ and assume for a contradiction that $e=ac\in E(B)$. In particular, since $K_1=V(B_1)\cap V(B_3)$, it follows that neither $a$ nor $c$ may occur in $K_1$. By the construction of a torso, this implies that there exists an $a$-$c$ path in $G_1\oplus G_3$ which does not intersect $K_1$. Since $Z_1$ is the boundary of $G_1$, the set $Z_1$ must also be a separator between $V(G_1)\setminus Z_1$ and $V(G_3)\setminus Z_3$ in $G_1\oplus G_3$, and in particular $P$ must intersect $Z_1$. Let $z$ be the last vertex of $Z_1$ in $P$, and in particular $z\in Z_1\setminus K_1$. Since $P$ is a path which ends in $c$ and $Z_1$ is a separator, the vertex $z_2$ immediately following after $z_1$ on $P$ must lie in $V(G_3)\setminus Z_3$. Once again, by our assumptions about $P$ we have $z_2\not \in K_1$ and in particular $z_2\not \in L_1$. But then, by the construction of $L_1$, $z_2\not \in S_1$ and hence $z_2\neq c$. So, let $D$ be the connected component of $z_2$ of $(G_1\oplus G_3)-S_1$ and observe that $D$ contains $z_1$ and hence intersects $Z_1$. Since $P$ ends in $c$, which is a vertex in $S_1\setminus S_1^*$, there must exist a vertex $d'$ which is the first vertex on $P$ in $S_1\setminus S_1^*$ after $z_2$. But then $d'$ is adjacent to $D$, and by the constructiion of $L_1$ it follows that $d'$ must necessarily be in $L_1\subseteq K_1$. This contradicts our assumption about $P$ not intersecting $K_1$, and we conclude that $B$ cannot contain any edge with one endpoint in each of $V(B_1)\setminus V(B_3)$, $V(B_3)\setminus V(B_1)$. 

Since we have shown that $T$ and $B$ have the same vertex set, any edge in $B$ occurs in $T$, and also that any edge in $T$ occurs in $B$, the claim holds.
\end{proof}

To complete the proof of Lemma~\ref{lem:replace_equiv}, consider for a contradiction that $\tw(\torso_{G_2\oplus G_3}(S_2)) >k$ and let $\ell=t\cdot (k+1)\geq |K_i|$ for $i\in [3]$. Then $\torso_{G_2\oplus G_3}(S_2)$ contains a forbidden minor, say $Q$, for treewidth $k$, and such a forbidden minor has size at most $h^*(k)$. By Claim~\ref{claim:torsojoin}, $\torso_{G_2\oplus G_3}(S_2)= (B_2,K_2)\oplus (B_3,K_3)$. Furthermore, by Claim~\ref{claim:sameminors}, $(B_2,K_2)$ contains the same $h^*(k,\ell)$-folios as $(B_1,K_1)$. But then it follows from Claim~\ref{claim:composeminors} that $Q$ is also a minor in $\torso_{G_1\oplus G_3}(S_1)$, contradicting our assumption that $\tw(\torso_{G_1\oplus G_3}(S_1))\leq k$.    

            Hence, we conclude that $S_2$ has width at most $k$, thus completing the proof of the lemma.
    \end{proof}

Before we move ahead to the next section, we  state the following lemma, the proof of which 
is identical to the `cut' and `paste' argument in the previous lemma and hence we do not repeat it.
\begin{lemma}\label{lem:local_cut_paste}
	Let $k,t\in \naturals$ and let $(G,Z)$ be a $t$-boundaried incidence graph. Let $(G',Z')$ be a $t$-boundaried incidence graph. Let $S$ be a strong backdoor of $\psi(G\oplus G')$ into $\CSP(\Gamma)$ of width at most $k$ and let $X=S\cap V(G)$. Let $\omega$ be a $(t,h^*(k,t),k)$-configuration realised in $(G,Z)$ by $S^*$ where $\omega$ is defined as in the proof of the previous lemma. Then, for any set $X'\subseteq V(G)$ that realises $\omega$, the set $(S\setminus X)\cup X'$ is a strong backdoor of $\psi(G\oplus G')$ into $\CSP(\Gamma)$ of width at most $k$.
\end{lemma}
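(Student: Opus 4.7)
The plan is to mirror Phase~II of the proof of Lemma~\ref{lem:replace_equiv}, but transplanted to the setting where the ``cut'' and ``paste'' both happen inside the same boundaried graph $(G,Z)$: one cuts out the realizer $X = S^*$ and pastes in the alternative realizer $X'$. Setting $S' = (S \setminus X) \cup X'$, I need to verify two things: that $S'$ is a strong backdoor of $\psi(G \oplus G')$ into $\CSP(\Gamma)$, and that $\tw(\torso_{G \oplus G'}(S')) \le k$.

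For the backdoor property, I would do a case analysis on the location of a hypothetical bad constraint $C$ under an assignment $\tau \colon S' \to \cD$. If $C$ sits on the $G$-side, then $X'$ is itself a strong backdoor of $\psi(G)$ by clause (4) of Definition~\ref{def:realizability}, so $C|_\tau \in \Gamma$. If $C$ sits on the $G'$-side and $\var(C) \cap Z = \emptyset$, the variables of $C$ lie entirely outside $V(G)$, so $C|_\tau$ coincides with its restriction under any assignment extending $\tau|_{S \setminus X}$ to all of $S$, and the strong backdoor property of $S$ applies directly. If $C$ sits on the $G'$-side and $\var(C) \cap Z \ne \emptyset$, clause (1) of Definition~\ref{def:realizability} gives $X \cap Z = X' \cap Z = P(G,Z)$, so $\tau$ and the corresponding assignment on $S$ agree on $\var(C)$, and the strong backdoor property of $S$ again applies.

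For the width bound, I would reuse the three internal claims of Lemma~\ref{lem:replace_equiv} essentially verbatim. Applying Claim~\ref{claim:torsojoin} to both $S$ and $S'$ decomposes the torsos as $\torso_{G \oplus G'}(S) = (B_1, K) \oplus (B_3, K)$ and $\torso_{G \oplus G'}(S') = (B_1', K) \oplus (B_3, K)$, sharing a common right-hand piece $B_3$ living in $G'$, while the left-hand pieces $B_1$ and $B_1'$ arise from $X$ and $X'$ respectively. Claim~\ref{claim:sameminors} shows that both left-hand pieces carry exactly the labeled $h^*(k,t)$-folio $\mathfrak{H}$ dictated by $\omega$, since both $X$ and $X'$ realize $\omega$. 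Claim~\ref{claim:composeminors} then propagates this equality across the gluing, so $\torso_{G \oplus G'}(S)$ and $\torso_{G \oplus G'}(S')$ have the same $h^*(k)$-folios; in particular the latter excludes every forbidden $h^*(k)$-minor for treewidth $k$ and therefore has treewidth at most $k$. The main obstacle is essentially illusory: the required machinery is already built, and only routine renaming is needed to adapt the original argument to two distinct realizers within a single boundaried graph.
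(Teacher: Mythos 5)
Your proposal is correct and follows essentially the same route as the paper: the paper itself proves Lemma~\ref{lem:local_cut_paste} by noting it is the same ``cut and paste'' argument as Phase~II of Lemma~\ref{lem:replace_equiv}, i.e., the backdoor property via the case analysis on where the offending constraint lives (using $X\cap Z = X'\cap Z = P(G,Z)$ for constraints touching the boundary), and the width bound via Claims~\ref{claim:sameminors}, \ref{claim:composeminors}, and \ref{claim:torsojoin} with the common piece $B_3$ on the $G'$-side. Your transplantation to a single boundaried graph with two realizers $X$ and $X'$ is exactly the intended instantiation, so only the renaming you describe is needed.
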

}

\section{Computing a Bound on the Size of a Minimal Representative of $\sim_{t,k}$}
\label{sec:rep}

In this section, we define a function $\alpha_{\sd}$ such that for every $t,k\in \naturals$, every equivalence class of $\sim_{t,k}$ contains a boundaried incidence graph whose size is bounded by $\alpha_{\sd}(t,k)$. In order to do so, we use the fact the relation $\equiv_{t,h^*(k,t),k}$ refines $\sim_{t,k}$. The following is a brief sketch of the proof strategy.
\begin{itemize}
	\item  In the first step \lv{(Lemma \ref{lem:reduction_strong})}, we show that for any $t$-boundaried incidence graph $(G,Z)$ whose \emph{treewidth} is bounded as a function of $t$ and $k$ and size exceeds a certain bound also depending only on $t$ and $k$, there is a strictly smaller $t$-boundaried graph $(G',Z')$ such that $(G,Z)\equiv_{t,h^*(k,t),k}(G',Z')$. This in turn implies that for any $t$-boundaried incidence graph $(G,Z)$ whose \emph{treewidth} is bounded by a function of $t$ and $k$ there is a $t$-boundaried graph $(G',Z')$ such that $(G,Z)\equiv_{t,h^*(k,t),k}(G',Z')$ and the size of $G'$ is bounded by a function of $t$ and $k$.  
	\item In the second step \lv{(Lemma \ref{lem:strong_rep_bound})}, we show that for any $t$-boundaried incidence graph $(G,Z)$, there is a $t$-boundaried incidence graph $(G',Z')$ such that $G'$ has treewidth bounded by a function of $k$ and $t$ and $(G,Z)\sim_{t,k}(G',Z')$.  \sv{Combining these two steps, we obtain the following lemma. }
	 \end{itemize}

\lv{
For the following lemma, let $\mathfrak{b}$ be the function bounding the primal treewidth based on the arity and incidence treewidth specified in Proposition~\ref{prop:treewidth_torso_bound}, i.e., $\mathfrak{t}(t,\rho)=\rho(t+1)-1$.

\begin{lemma}\label{lem:treewidth_torso_bound}
 Let $\rho\in \naturals$  and  $G$  be an incidence graph where every constraint has arity at most $\rho$. Then, for every $X\subseteq \cV(G)$,  $\tw(\torso_G(X))\leq \mathfrak{t}(tw(G),\rho)$.
\end{lemma}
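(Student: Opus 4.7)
The plan is to transform a tree decomposition of the incidence graph $G$ into one for $\torso_G(X)$ by replacing each constraint vertex appearing in a bag with its variable scope intersected with $X$. Concretely, starting from any tree decomposition $(T, \{B_t\}_{t \in V(T)})$ of $G$ of width $\tw(G)$, I would define
\[
B_t' := (B_t \cap X) \;\cup\; \bigcup_{C \in B_t \cap \CCC(G)} \big(\var(C) \cap X\big),
\]
and argue that $(T, \{B_t'\})$ is a tree decomposition of $\torso_G(X)$ of width at most $\rho(\tw(G)+1)-1 = \mathfrak{t}(\tw(G),\rho)$. The cardinality bound is immediate, since every constraint in a bag contributes at most $\rho$ variables, yielding $|B_t'| \leq \rho|B_t| \leq \rho(\tw(G)+1)$.

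The connectivity condition is likewise routine. For each $x \in X$, the set $\{t : x \in B_t'\}$ is the union of the connected subtree $\{t : x \in B_t\}$ together with the connected subtrees $\{t : C \in B_t\}$ for all constraints $C$ whose scope contains $x$; any such $\{t : C \in B_t\}$ intersects $\{t : x \in B_t\}$ at a bag witnessing the incidence-graph edge $(x, C)$, so the union is itself a connected subtree of $T$.

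The main obstacle lies in verifying edge coverage. Because $G$ is bipartite and $X \subseteq \cV(G)$, every edge $(x_1,x_2)$ of $\torso_G(X)$ arises from some connected component $K$ of $G-X$ with $x_1, x_2 \in N(K)$, witnessed by constraints $C_1, C_2 \in K \cap \CCC(G)$ satisfying $x_i \in \var(C_i)$. The classical fact that for a connected subgraph $K$ of $G$ the set $T_K := \{t : B_t \cap V(K) \neq \emptyset\}$ is a connected subtree of $T$ provides the correct skeleton, and I would combine it with the scope-based definition of $B_t'$ to locate a node $t^\star \in T_K$ at which the scope of a constraint present in $B_{t^\star}$ simultaneously forces $x_1$ and $x_2$ into $B_{t^\star}'$. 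Pinpointing such a $t^\star$ when $K$ stretches across many bags — so that $C_1$ and $C_2$ may not co-occur in any single $B_t$ — is the key technical difficulty, and is where I would expect the careful use of the construction behind Proposition~\ref{prop:treewidth_torso_bound}, together with the arity bound $\rho$, to come into play.
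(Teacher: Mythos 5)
The step you defer is not a routine verification but a genuine gap, and it cannot be closed within your construction. Your bags $B_t'$ place two variables of $X$ in a common bag only if they already share a bag $B_t$ or lie in the scope of a single constraint vertex of $B_t$; this covers the edges of the primal graph restricted to $X$, but not the torso edges created by a component $K$ of $G-X$ that stretches across many bags (your path example $x_1-C_1-y_1-C_2-\cdots-C_m-x_2$ already defeats it: no bag $B_t'$ contains both $x_1$ and $x_2$). Worse, no choice of $t^\star$ and no repair of the bags can exist in general, because the inequality itself is false for arbitrary $X\subseteq \cV(G)$: take variables $c,x_1,\dots,x_n$, binary constraints $C_1,\dots,C_n$ with $\var(C_i)=\{c,x_i\}$, and $X=\{x_1,\dots,x_n\}$. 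The incidence graph is a tree, so $\tw(G)=1$ and $\rho=2$, but $G-X$ is a single connected component whose neighborhood is all of $X$, so $\torso_G(X)$ is the complete graph $K_n$ with treewidth $n-1$, which exceeds $\mathfrak{t}(1,2)=3$ for $n\geq 5$. So the ``key technical difficulty'' you identified is precisely the point at which the statement, as quantified over all $X$, breaks down.

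For comparison, the paper's own proof is a two-liner taking a different route: it bounds the primal treewidth by $\mathfrak{t}(\tw(G),\rho)$ via Proposition~\ref{prop:treewidth_torso_bound} and then asserts that $\torso_G(X)$ is a minor of the primal graph. That minor claim fails on the same example (here the primal graph is the star $K_{1,n}$, and $K_n$ is not a minor of it), so the paper's argument does not fill the hole either; your instinct to distrust this step was sound. A correct statement needs an extra hypothesis on $X$, for instance that every connected component of $G-X$ has at most $s$ neighbors in $X$ (which holds with $s=k+1$ whenever one is interested in sets of width at most $k$, since otherwise the torso contains a clique of size $k+2$). Under that assumption one can start from a tree decomposition of the primal graph, add $N(K)$ to every bag met by a component $K$ of the primal graph minus $X$, and restrict to $X$; each bag meets at most $\tw+1$ such components, giving $\tw(\torso_G(X))\leq (\mathfrak{t}(\tw(G),\rho)+1)(s+1)-1$. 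Without some restriction of this kind, neither your construction nor the paper's minor argument can be completed.
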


\begin{proof} By Proposition~\ref{prop:treewidth_torso_bound}, $\mathfrak{t}(tw(G),\rho)$ is an upper bound on the treewidth of the primal graph (call it $H$) of $\psi(G)$. However, observe that for any $X\subseteq \cV(G)$, the graph $\torso_G(X)$ is a minor of $H$ and hence $tw(\torso_G(X))\leq \mathfrak{t}(tw(G),\rho)$. This completes the proof of the lemma.
\end{proof}
}
\lv{
\lv{\begin{lemma}}
\sv{\begin{lemma}[$\star$]}
\label{lem:reduction_strong}
 There is a function $\xi_{\sd}:\naturals^3\to \naturals$ such that for all $t,k, \ell\in \naturals$, for every $t$-boundaried incidence graph $(G,Z)$ with treewidth at most $\ell$ and size at  least $\xi_{\sd}(k,t,\ell)$, there is  a strictly smaller $t$-boundaried incidence graph $(G',Z')$ such that $(G,Z)\equiv_{t,h^*(k,t),k}(H,J)$. 
\end{lemma}
}

\lv{
\begin{proof}
Let $H$ denote the primal graph of $\psi(G)$. Since we are only interested in arguing the existence of a boundaried incidence graph $(G',Z')$ such that
$(G,Z)\equiv_{t,h^*(k,t),k}(G',Z')$, we may assume that the constraints in $\psi(G)$ have arity at most $\rho=\rho(\Gamma,k)$  (see Reduction Rule \ref{red:bounded_arity}). 
 Applying Proposition \ref{prop:treewidth_torso_bound}, we conclude that $tw(H)\leq \mathfrak{t}(\ell,\rho)$.

Let $(T,\cX)$ be a nice tree-decomposition of $H$ of width $tw(H)$ and let $(T',\cX')$ denote the tree-decomposition resulting from $(T,\cX)$ by adding $Z$ to every bag. Observe that the width of the decomposition $(T',\cX'=\{X_v|v\in V(T')\})$, denoted by $\ell'$, is $tw(H)+t\leq \mathfrak{t}(\ell,\rho)$.  Since $(T,\cX)$ is rooted by definition,  so is $(T',\cX')$. For technical reasons, we also create a bag containing only the vertices of $Z$, add it to the tree-decomposition $(T',\cX')$ by making it adjacent to the root and make this new bag the new root.

Now, for every $v\in T'$, we define the incidence graph $G_{X_v}$ as the subgraph of $G$ induced on the variables in $X_v$ and the bags below it in $T'$, and the constraints whose scope is completely contained in the union of $X_v$ and the bags below it. 
We now define the notion of a pair of equivalent bags in $T'$. For $u,v\in V(T')$, we say that $X_u$ and $X_v$ are \emph{equivalent} if they have the same size and the boundaried incidence graphs $G_{X_u}$ and $G_{X_v}$ with boundaries $X_u$ and $X_v$ (annotated by some $\lambda_u:X_u\to [\vert X_u\vert]$ and $\lambda_v:X_v\to [\vert X_v\vert]$) are equivalent with respect to $(|X_u|,h^*(k,|X_u|),k)$-configurations. We argue that if $\cV(G)$ is large enough, then $(T',\cX')$ contains a pair of equivalent bags. We first prove the following claim.

	\begin{claim}
\label{clm:threshold}
There is a constant $c(k,t,\ell)$ such that if $|\cV(G)| > c(k,t,\ell)$, then $(T',\cX')$ contains two equivalent bags $X_u$ and $X_v$, such that $u$ is an ancestor of $v$. 
\end{claim}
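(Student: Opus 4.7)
The plan is to count equivalence classes of boundaried subgraphs along root-to-leaf paths and apply pigeonhole, using the fact that bag sizes along any such path are uniformly bounded.

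First, I would observe that for each node $v\in V(T')$, the pair $(G_{X_v},X_v)$ is an $s$-boundaried incidence graph with $s=|X_v|\leq \ell'+1$. By the finite-state observation from Section~\ref{sec:finitestate}, for each fixed $s\leq \ell'+1$ the equivalence relation $\equiv_{s,h^*(k,s),k}$ has at most $N_s:=2^{\Upsilon_{\sd}(s,h^*(k,s),k)}$ classes. Setting $N:=\sum_{s=1}^{\ell'+1}N_s$, there are at most $N$ possible equivalence ``types'' that any bag in $(T',\cX')$ can induce (across all bag sizes it could have), and $N$ is a computable function of $k$, $t$, and $\ell$ since $\ell'\leq \mathfrak{t}(\ell,\rho)+t$ and $\rho\leq\rho(\Gamma,k)$ by Reduction Rule~\ref{red:bounded_arity}.

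Next, I would argue by contrapositive. Suppose $(T',\cX')$ contains no two equivalent bags $X_u,X_v$ with $u$ an ancestor of $v$. Then on every root-to-leaf path of $T'$, the boundaried incidence graphs associated with distinct bags lie in pairwise distinct equivalence classes, so the length of every such path is at most $N$. Since $(T,\cX)$ is nice (and augmenting every bag with $Z$ and attaching a new root bag equal to $Z$ preserves the property that each node has at most two children), the rooted tree $T'$ is binary, and therefore has at most $2^{N+1}-1$ nodes.

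Finally, since each bag of $(T',\cX')$ contains at most $\ell'+1$ variables, the total number of variables satisfies
\[
|\cV(G)|\;\leq\;(\ell'+1)\cdot (2^{N+1}-1).
\]
Defining $c(k,t,\ell)$ to be the right-hand side yields the claim: if $|\cV(G)|>c(k,t,\ell)$, then some root-to-leaf path of $T'$ must contain two bags $X_u,X_v$ (with $u$ an ancestor of $v$) that are equivalent in the above sense.

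The main technical subtlety is simply being careful that equivalence is compared at the appropriate bag size (the definition $\equiv_{s,h,\varepsilon}$ depends on $s=|X_v|$), which is handled by taking the sum $N=\sum_s N_s$ so that pigeonhole applies uniformly along the path regardless of how bag sizes fluctuate. Everything else is bookkeeping that reduces to Proposition~\ref{prop:treewidth_torso_bound} and the bound on $\Upsilon_{\sd}$.
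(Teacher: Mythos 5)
Your proposal is correct and follows essentially the same route as the paper: bound the number of equivalence types of bags (via $2^{\Upsilon_{\sd}}$, accounting for bag size), use the binary structure of the nice tree decomposition to relate $|\cV(G)|$ to the length of a root-to-leaf path, and apply pigeonhole along that path — the paper argues this directly (large instance implies a long path, hence a repeated type) while you argue the contrapositive, which is the same idea. Your explicit summation $N=\sum_s N_s$ over bag sizes is in fact a slightly more careful treatment of the bookkeeping that the paper leaves implicit.
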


\begin{proof} Note that $T$ has at least $c(k,t,\ell)$ vertices corresponding to introduce nodes. Further, since $T$ is a binary tree, and a binary tree on $2^{n} - 1$ vertices has at least one root-to-leaf path of length at least $n$, we have that $T$ admits a root-to-leaf path, say $P$, of length at least $\lfloor \log c(k,t,\ell)\rfloor$. Now, since the number of subsets of the set of all $(d,h^*(k,d),k)$-configurations of $d$-boundaried graphs is bounded by $2^{\Upsilon_{\sd}(d,h^*(k,d),k)}$, we conclude that if $\lfloor \log c(k,t,\ell)\rfloor>2^{\Upsilon_{\sd}(d,h^*(k,d),k)}$, then there is indeed a pair of equivalent bags (in $T$ and hence in $T'$) with one being an ancestor of the other. Therefore, setting $c=2^{2^{\Upsilon_{\sd}(d,h^*(k,d),k)}+1}$ concludes the proof of the claim. 
\end{proof}

Now, let $u,v\in V(T')$ be such that 	$X_u$ and $X_v$ are equivalent bags and $u$ is an ancestor of $v$ in $T'$. We now argue that $(G,Z)\equiv_{t,h^*(k,t),k}(G',Z')$ where $(G',Z')$ is defined as the boundaried graph obtained from $(G,Z)$ by replacing the graph $G_{X_u}$ with $G_{X_v}$. Once we argue this, the lemma follows by choosing $\xi_{\sd}(k,t,\ell)$ to be the same as $c(k,t,\ell)$.

\begin{claim}
	Let $(G',Z')$ be defined as the boundaried graph obtained from $(G,Z)$ by replacing the boundaried graph $(G_{X_u},X_u)$ with $(G_{X_v},X_v)$. Then, $(G,Z)\equiv_{t,h^*(k,t),k}(G',Z')$.
\end{claim}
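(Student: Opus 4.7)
The plan is a local cut-and-paste argument that mirrors Lemma~\ref{lem:replace_equiv} but is carried out at the inner boundary $X_u$ rather than the outer boundary $Z$. Set $d=|X_u|=|X_v|$ and let $R$ denote the incidence subgraph of $G$ obtained by deleting the interior $V(G_{X_u})\setminus X_u$. Since $Z$ was added to every bag of $(T',\cX')$ and the bag labellings $\lambda_u,\lambda_v$ are fixed, $R$ is simultaneously a $d$-boundaried graph with boundary $X_u$ (identified with $X_v$ via $\lambda_v\circ\lambda_u^{-1}$) and a $t$-boundaried graph with outer boundary $Z$. Under these identifications $G=R\oplus G_{X_u}$ and $G'=R\oplus G_{X_v}$.

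To prove $(G,Z)\equiv_{t,h^*(k,t),k}(G',Z')$ it suffices to establish one direction, since the other is symmetric. Given any $(t,h^*(k,t),k)$-configuration $\omega$ realized in $(G,Z)$ by a set $S^*\subseteq\cV(G)$, I would repeat Phase~I of the proof of Lemma~\ref{lem:replace_equiv} applied at the inner boundary $X_u$ to extract a $(d,h^*(k,d),k)$-configuration $\omega_u$ realized in $(G_{X_u},X_u)$ by $S^*_u=S^*\cap V(G_{X_u})$. By the hypothesis that $X_u$ and $X_v$ are equivalent bags, namely $(G_{X_u},X_u)\equiv_{d,h^*(k,d),k}(G_{X_v},X_v)$, the configuration $\omega_u$ is also realized in $(G_{X_v},X_v)$ by some $S^*_v\subseteq\cV(G_{X_v})$. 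I then take $S^{**}=(S^*\setminus(V(G_{X_u})\setminus X_u))\cup(S^*_v\setminus X_v)$, with boundary vertices identified via $\lambda_u,\lambda_v$, as the candidate witness for $\omega$ in $(G',Z')$.

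Verifying that $S^{**}$ realizes $\omega$ in $(G',Z')$ amounts to rerunning Phase~II of the proof of Lemma~\ref{lem:replace_equiv} with the decompositions $G=G_{X_u}\oplus R$ and $G'=G_{X_v}\oplus R$ in place of the outer ones used there. The strong-backdoor property is immediate: constraints disjoint from the interior of $G_{X_u}$ are common to $\psi(G)$ and $\psi(G')$, while constraints inside $G_{X_v}$ are tractable after every $S^*_v$-instantiation, by virtue of $S^*_v$ realizing $\omega_u$. For the outer torso data encoded in $\omega$---the partition $\cP$, the additional edges $\cP'$, the map $\gamma$, and the labelled folio $\mathfrak{H}$---Claim~\ref{claim:torsojoin} decomposes the global torso along $X_u$ (resp.\ $X_v$) into a local part living in $G_{X_u}$ (resp.\ $G_{X_v}$) glued to a common outer part living in $R$; the two local torsos are then matched by a Claim~\ref{claim:iso}-style label-preserving isomorphism induced by the shared $\omega_u$, and Claim~\ref{claim:composeminors} lifts this matching to the outer $h^*(k,t)$-folio that determines $\omega$.

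The main technical point, and the place that requires careful verification, is the folio bookkeeping in the last step: Claim~\ref{claim:composeminors} applied at the inner boundary tells us that preserving the outer $h^*(k,t)$-folio after gluing along a $d$-boundary needs the local equivalence to preserve $(h^*(k,t)+d)$-folios. This is exactly the purpose of the padding $t(k+1)$ in $h^*(k,t)=h^*(k)+t(k+1)$: because $Z\subseteq X_u$ forces $d\geq t$, the bag-level bound $h^*(k,d)=h^*(k)+d(k+1)$ leaves enough slack over $h^*(k,t)+d$ for the composition to go through. Once this accounting is checked, the cut-and-paste yields $(G,Z)\equiv_{t,h^*(k,t),k}(G',Z')$ and the claim follows.
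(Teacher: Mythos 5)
Your overall route is the same as the paper's: the paper likewise takes a configuration realized in $(G,Z)$ by $S^*$, restricts to $S^*_u=S^*\cap V(G_{X_u})$ to get a configuration realized in $(G_{X_u},X_u)$, invokes the bag equivalence to obtain a set $S^*_v$ realizing it in $(G_{X_v},X_v)$, and pastes, deferring the verification to ``an analogous chain of arguments as the proof of Lemma~\ref{lem:replace_equiv}''. Your proposal fills in that verification along the intended lines, via the analogues of Claims~\ref{claim:iso}, \ref{claim:composeminors} and~\ref{claim:torsojoin} applied at the inner boundary.

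However, the folio bookkeeping in your final paragraph --- which you yourself single out as the crux --- does not close as written. The torso-level gluing (the analogue of Claim~\ref{claim:torsojoin}) is not along $X_u$ itself but along $K=(S^*\cap X_u)\cup L_u$, where $L_u$ consists of the backdoor vertices outside $G_{X_u}$ adjacent to components crossing $X_u$; its size is only bounded by $d(k+1)$, not by $d$. Hence the analogue of Claim~\ref{claim:composeminors} asks the two inner torso parts to agree on labeled folios of size up to $h^*(k,t)+|K|$, which can be as large as $h^*(k)+t(k+1)+d(k+1)$, while the bag-level equivalence only guarantees agreement up to $h^*(k,d)=h^*(k)+d(k+1)$ --- a shortfall of $t(k+1)$. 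Moreover, even the weaker inequality you actually assert, $h^*(k,d)\geq h^*(k,t)+d$, is equivalent to $dk\geq t(k+1)$ and so does not follow from $d\geq t$ alone (take $d=t$). What your accounting does deliver is preservation of the outer torso's $h^*(k)$-folio, which suffices for the treewidth condition (2) of realizability, but condition (3) demands equality of the full $h^*(k,t)$-folio $\mathfrak{H}$ of the configuration. To make the composition go through literally, the bag equivalence must be taken with respect to folios of size at least $h^*(k,t)+d(k+1)$ (for instance $h^*(k,t+d)$), or the estimate must be redone using the specific structure of the cut (e.g., that $Z\subseteq X_u$ and all $\Box$/$\Diamond$-labeled vertices lie on the shared side); as stated, this step of your argument fails, whereas the rest matches the paper's (admittedly terse) proof.
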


\begin{proof}
Assume for a contradiction that there exists a $(t,h^*(k,t),k)$-configuration $\alpha$ such that, w.l.o.g., $\alpha\in \mathfrak{S}(G,Z)\setminus \mathfrak{S}(G',Z')$, and let $\alpha$ be realized in $(G,Z)$ by $S^*$. Let $S^*_u=S^*\cap V(G_{X_u})$, and let $\alpha_{uv}$ be the $(t,h^*(k,t),k)$-configuration realized by $S^*_u$ in $G_{X_u}$. Since $\mathfrak{S}(G_{X_u},X_u)=\mathfrak{S}(G_{X_v},X_v)$, we have that $\alpha_{uv}\in \mathfrak{S}(G_{X_v},X_v)$ and in particular there exists a variable-subset $S^*_v\subseteq \cV(G_{X_v})$ which realizes $\alpha_{uv}$ in $G_{X_v}$. It remains to argue that the set $S^*_2=(S^*\setminus S^*_u)\cup S^*_v$ in fact realizes the $(t,h^*(k,t),k)$-configuration $\alpha$ in $(G',Z')$, which follows by an analogous chain of arguments as the proof of Lemma~\ref{lem:replace_equiv}. This then yields a contradiction with the assumption that $\alpha\not\in \mathfrak{S}(G',Z')$.
\end{proof}

This completes the proof of the lemma.
\end{proof}

}

\lv{
\lv{\begin{lemma}}
\sv{\begin{lemma}[$\star$]}
	\label{lem:strong_rep_bound}
	There is a function $\iota_{\sd}:\naturals^2\to \naturals$ such that for every $k,t\in {\mathbb N}$ and  $t$-boundaried incidence graph $(G,Z)$ there is a $t$-boundaried incidence graph $(G',Z')$ of treewidth at most $ \iota_{\sd}(k,t)$ such that $(G,Z)\sim_{t,k}(G',Z')$. 
\end{lemma}
}

\lv{
\begin{proof}
 For every  $(t,h^*(k,t),k)$-configuration $\omega\in \mathfrak{S}_{\sd}((G,Z),h^*(k,t),k)$, we denote by $S^*_\omega$ an arbitrary subset of $V(G)$ realizing $\omega$.
 We now define a set $Y\subseteq V(G)$ as $Y=Z\cup \bigcup_{\omega\in \mathfrak{S}_{\sd}((G,Z),h^*(k,|Z|),k) }S^*_\omega$. Before we proceed we need the following claim about the structure of $Y$.  
 
 \begin{claim}
 $tw(\torso_{G}(Y))\leq \Upsilon_{\sd}(t,h^*(k,t),k)\cdot (k+1)+t$.	 
 \end{claim}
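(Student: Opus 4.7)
I would proceed in two phases: first establishing a neighborhood bound on components of $G - Y$, and then using it to bound the treewidth of $\torso_G(Y)$.

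\textbf{Phase 1 (Neighborhood bound).} I claim that every connected component $C$ of $G - Y$ satisfies $|N_G(C) \cap Y| \le \Upsilon_{\sd}(t,h^*(k,t),k) \cdot (k+1) + t$. Fix an arbitrary $\omega \in \mathfrak{S}_{\sd}((G,Z),h^*(k,|Z|),k)$. Since $S^*_\omega$ realizes $\omega$, Definition~\ref{def:realizability} gives $\tw(\torso_{\tilde G^\omega}(S^*_\omega \cup L_\omega)) \le k$, so every component of $\tilde G^\omega - (S^*_\omega \cup L_\omega)$ has at most $k+1$ neighbors in $S^*_\omega \cup L_\omega$. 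As $\tilde G^\omega$ is obtained from $G$ only by adding vertices and edges, every component of $G - S^*_\omega$ is contained in some component of $\tilde G^\omega - (S^*_\omega \cup L_\omega)$ and therefore has at most $k+1$ neighbors in $S^*_\omega$. Because $Y \supseteq S^*_\omega$ for every $\omega$, any component $C$ of $G - Y$ lies inside some component of $G - S^*_\omega$, giving $|N_G(C) \cap S^*_\omega| \le k+1$. Summing over the at most $\Upsilon_{\sd}(t,h^*(k,t),k)$ configurations and adding $|Z|\le t$ yields the claimed bound.

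\textbf{Phase 2 (Treewidth bound).} I would first note that $\tw(\torso_G(S^*_\omega)) \le k$ for every $\omega$, since $\torso_G(S^*_\omega)$ is a subgraph of $\torso_{\tilde G^\omega}(S^*_\omega \cup L_\omega)[S^*_\omega]$: any edge $(a_1,a_2)$ of $\torso_G(S^*_\omega)$ witnesses a component of $G - S^*_\omega$ adjacent to both endpoints, and this component is contained in a component of $\tilde G^\omega - (S^*_\omega\cup L_\omega)$ still adjacent to both, giving the same edge in the larger torso. Iterating the subadditivity of treewidth under graph unions on the $\Upsilon_{\sd}(t,h^*(k,t),k)$ graphs $\torso_G(S^*_\omega)$, I would construct a tree decomposition of $\torso_G(\bigcup_\omega S^*_\omega)$ of width at most $\Upsilon_{\sd}(t,h^*(k,t),k) \cdot (k+1) - 1$. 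Adding all of $Z$ to every bag then captures every remaining edge of $\torso_G(Y)$ (since $\torso_G(Y)[Y \setminus Z]\subseteq \torso_G(Y\setminus Z)$, any edge of $\torso_G(Y)$ with both endpoints outside $Z$ is already present in the previous decomposition) and increases the width by at most $t$, producing the claimed bound.

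The main obstacle will be the combination step in Phase~2: edges of $\torso_G(\bigcup_\omega S^*_\omega)$ may connect vertices from distinct backdoors via a common component of $G - Y$, and such ``cross'' edges are not reflected in any individual $\torso_G(S^*_\omega)$. The neighborhood bound from Phase~1 is crucial for handling them, since it guarantees that every such edge lies in a clique in $\torso_G(Y)$ of size at most $\Upsilon_{\sd}(t,h^*(k,t),k) \cdot (k+1) + t$, which can be accommodated in a single bag of the combined decomposition.
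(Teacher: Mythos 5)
There is a genuine gap, and it sits exactly where your proposal leans hardest: the combination step of Phase~2. The principle you invoke --- ``subadditivity of treewidth under graph unions'' --- is false. The $n\times n$ grid is the union of two linear forests, each of treewidth~$1$, yet its treewidth is $n$; so the treewidth of a union of graphs is not bounded by the sum of (or indeed any function of) the treewidths of the parts. Hence the claim that the $\Upsilon_{\sd}(t,h^*(k,t),k)$ graphs $\torso_G(S^*_\omega)$, each of treewidth at most $k$, can be generically merged into a decomposition of width $\Upsilon_{\sd}(t,h^*(k,t),k)\cdot(k+1)-1$ has no justification. Any correct argument must exploit the specific structure of how $S^*_{\omega_i}$ sits inside the components of $G$ minus the previously accumulated backdoors, which a union argument ignores. (A further subtlety you gloss over: the union of the individual torsos is not the same graph as $\torso_G\bigl(\bigcup_\omega S^*_\omega\bigr)$ --- an edge of $\torso_G(S^*_{\omega_1})$ witnessed by a component that gets split by $S^*_{\omega_2}$ need not survive --- so even the object you are decomposing is not the one the claim is about.)

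The second weak point is your treatment of the ``cross'' edges. You correctly observe (Phase~1, which is fine) that the neighborhood of every component of $G-Y$ is a clique of size at most $\Upsilon_{\sd}(t,h^*(k,t),k)\cdot(k+1)+t$ in $\torso_G(Y)$, but asserting that such a clique ``can be accommodated in a single bag of the combined decomposition'' is precisely the nontrivial part: a valid tree decomposition of $\torso_G(Y)$ must, for every component of $G-Y$, contain some bag covering its entire neighborhood, and one cannot simply insert vertices into bags of an already-built decomposition without breaking the connectivity condition or blowing up the width. The paper resolves both issues simultaneously by a single induction over the realizable configurations $\omega_1,\dots,\omega_r$: it maintains a tree decomposition $\cT_i$ of $\torso_G(S^*_1\cup\dots\cup S^*_{\omega_i})$ of width at most $i(k+1)+t$ together with the invariant that the neighborhood of every component of $G$ minus the current set is contained in some bag; in the inductive step, for each component $C_j$ it takes a width-$k$ decomposition of $\torso_{G[C_j]}(S^*_{\omega_i}\cap C_j)$, adds $N(C_j)$ to all its bags, and attaches it to a bag of $\cT_{i-1}$ containing $N(C_j)$. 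This clique-sum-style gluing, guided by the invariant, is what replaces your union step, and without it (or something equivalent) your Phase~2 does not yield the claim.
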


\begin{proof} 
Let $\omega_1,\dots,\omega_r$ be the realizable configurations in $\mathfrak{S}_{\sd}((G,Z),h^*(k,|Z|),k)$. We define $S^*_{i}$ as $\bigcup_{q\in [i]}S^*_{\omega_i}$. 
We will show that for every $i\in [r]$, $tw(\torso_G(S^*_{i}))\leq i\cdot (k+1) +t$ and there is a tree-decomposition $\cT_i$  of $\torso_G(S^*_{i})$ of width at most $i\cdot (k+1) +t$ such that the neighborhood of every connected component of $G-S^*_i$ is contained in some bag of this tree-decomposition.

  The proof is by induction on $i$. Consider the case when $i=1$.  Since $S^*_{\omega_1}$ realizes $\omega_1$, we have that $tw(\torso_G(S^*_{1}))\leq k$. We then add the vertices of $Z$ to all bags of an arbitrary tree-decomposition of $\torso_G(S^*_{1}))$ of this width to get a tree-decomposition of width at most $k+t$, which we call $\cT_1$. Observe that since the neighborhood of every connected component of $G-S^*_1$ is now a clique in  $\torso_G(S^*_{1}))$, it follows that the neighborhood of any connected component of $G-S^*_1$ is contained in some bag of $\cT_1$. We now consider the case when $i>1$.
  
    By the induction hypothesis, we have that $\tw(\torso_G(S^*_{i-1}))\leq (i-1)\cdot (k+1)+t$. Furthermore, there is a tree-decompsition $\cT_{i-1}$ with this width such that if 
   $C_1,\dots,C_s$ are the connected components of $G- S^*_{i-1}$ then the neighborhood of each $C_j$ is contained in some bag of the tree-decomposition $\cT_{i-1}$.

 For each $j\in [s]$, let $D_j=S^*_{\omega_i}\cap C_j$.  We know that there is a tree-decomposition $\cT'_j$ of $\torso_{G[C_j]}(D_j)$ of width at most $k$. Further, we have that every component of $G[C_j]-D_j$ has at most $k+1$ neighbors in $D_j$ and at most $(i-1)\cdot (k+1) +t$ neighbors in $S^*_{i-1}$. We now redefine $\cT'_j$ as follows. We add the vertices in $N(C_j)$ to every bag of $\cT'_j$.  We then take the tree decomposition $\cT_{i-1}$ and for each $j\in [s]$, we make an arbitrary bag of $\cT'_j$ adjacent to an arbitrary bag of $\cT$ which contains $N(C_j)$. Observe that what results is indeed a tree-decomposition of $\torso_G(S^*_i)$ and we call this tree-decomposition $\cT_i$. It follows from  definition that the width of $\cT_i$  exceeds the width of $\cT_{i-1}$ by at most $k+1$. Hence, the width of $\cT_i$ is at most $i\cdot (k+1)+t$. Furthermore, observe that for every $j$, the neighborhood of every connected component of $G[C_j]-D_j$ \emph{within} $C_j$ is contained in some bag of $\cT'_j$. By the construction of $\cT_i$, we can conclude that any connected component of $G-S^*_i$ is in fact a connected component of $G[C_j]-D_j$ for some $j$, and furthermore the neighborhood of such a component is contained in some bag of $\cT_i$.  This completes the proof of the claim.
\end{proof}

Having proved the claim, we now return to the proof of the lemma.
Since $tw(\torso_{G}(Y))\leq \Upsilon_{\sd}(t,h^*(k,t),k)\cdot k+t$, we conclude that every connected component of $G\setminus Y$ has at most $\beta$ neighbors in $Y$, where $\beta=\Upsilon_{\sd}(t,h^*(k,t),k)\cdot k+t+1$. We will use this fact to replace large components outside of $Y$ with small ones while preserving equivalence (informally speaking, these are constructed by keeping sufficiently many constraints to preserve interactions with $Y$, making redundant copies of variables and constraints to prevent a backdoor from using the component, and adding complete connections between the new variables).


So, let $C_1,\dots, C_s$ be the connected components of $G-Y$. 
For each $i\in [s]$, we define a function $\sigma_i$ that maps each constraint $c\in C_i$ to a $2^{|N(C_i)|} \times \cD^{|N(C_i)|}$ matrix $M_{c,i}$ with elements from $1,\dots,\kappa$, where $\kappa=|\cR^{\rho}_\cD|$, the number of possible relations of arity at most $\rho$ over the domain $\cD$. The rows of the matrix correspond to subsets of $N(C_i)$ and the columns correspond to assignments to the variables in $N(C_i)$. For a set $P\subseteq N(C_i)$ and assignment $\tau:N(C_i)\to \cD$, the corresponding cell of $M_{c,i}$ has the value $q\in [|\Gamma|]$ if reducing the constraint $c=(S,R)$ with the assignment $\tau|_P$ results in a constraint $c'=(S',R')$ where $R'$ is the $q^{th}$ relation in $\cR^{\rho}_\cD$. We may assume that the relations in $\cR^{\rho}_\cD$ are arbitrarily ordered. Note that we do not claim that the relation $R'$ is in our language $\gamma$.

Observe that the range of the function $\sigma_i$ for any $i\in [s]$ is bounded by $\eta=\kappa^{\cD^{2\beta}}$ where $\kappa$ (in our context) is bounded by $2^{\cD^{\rho(\Gamma,k)}}$. Now, for each $i\in [s]$, we pick a set $\CCC_i$ of at most  $\eta$ constraints as follows. If the number of constraints in $C_i$ is at most $\eta$, then we add all constraints in $C_i$ to $\CCC_i$. Otherwise, for every image of the function $\sigma_i$, we pick an arbitrary pre-image and add this constraint to $\CCC_i$. Observe that for every $i\in [s]$, $|\CCC_i|\leq \eta$. 
We now define the set $\cV_i$ as the set of all variables disjoint from $N(C_i)$ which occur in the scope of a constraint in $\CCC_i$. For every variable $v\in \cV_i$, we make $k+2$ copies denoted $\{v_1,\dots, v_{k+2}\}$, and for every constraint $c$ whose scope includes $v$ we make $k+2$ copies of this constraint $c_1,\dots, c_{k+2}$ with $v_i$ belonging to the scope of $c_i$. In order to keep the presentation simple, we continue to use $\CCC_i$ to refer to the larger set of constraints introduced by this operation. We now add a set $\hat{\CCC}_i^v$ of ${k+2 \choose 2}$ new constraints, each of which is a redundant tautological binary constraint with a distinct pair of copies of $v$ as its scope. We define the set $\hat \CCC_i$ as $\bigcup_{v\in \cV_i}\hat \CCC_i^v$. We use $\hat \cV_i$ to denote the set containing all $k+2$ copies of all vertices in $\cV_i$.

 We then introduce a set $\CCC_i'$ of ${{|(\hat \cV_i\cup N(C_i))|} \choose 2}$ new constraints, each of which is a redundant tautological binary constraint with a distinct pair of variables in $\hat \cV_i\cup N(C_i)$ as its scope.
We now define $G'=G[Y\bigcup_{i\in [s]} (\CCC_i\cup \hat \CCC_i\cup  \CCC'_i\cup \hat \cV_i \cup N(C_i))]$ and $Z'=Z$ and claim that $(G',Z')\sim_{t,k} (G,Z)$. Note that since $Z\subseteq Y$ by definition, $Z'\subseteq V(G')$ and hence $(G',Z')$ is indeed a $t$-boundaried graph.  In order to complete the proof of the lemma, we need the following claims.

\begin{claim}\label{clm:first} A set $S\subseteq Y$ is a strong backdoor set of $G$ into $\CSP(\Gamma)$ if and only if it is also a strong backdoor set of $G'$ into $\CSP(\Gamma)$.
	
\end{claim}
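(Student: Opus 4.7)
The plan rests on a single structural fact: because $S \subseteq Y$ is disjoint from every component $C_i$ of $G - Y$ and from all copy variables $\hat \cV_i$ introduced in $G'$, for any constraint $c$ whose scope lies in $C_i \cup N(C_i)$ (in $G$) or in $\hat \cV_i \cup N(C_i)$ (in $G'$), we have $\var(c) \cap S = \var(c) \cap N(C_i) \cap S$. Consequently $c|_\alpha$ depends only on $\alpha|_{N(C_i) \cap S}$, and this partial reduction is exactly the datum recorded by the matrix $M_{c,i} = \sigma_i(c)$. Constraints living inside $G[Y]$ are shared between $G$ and $G'$ with identical scope, so for them the equivalence is immediate.

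For the forward direction, I would take a strong backdoor $S$ of $G$ into $\CSP(\Gamma)$ and fix an arbitrary $\alpha : S \to \cD$. Constraints of $G'$ lying in $G[Y]$ are already constraints of $G$, so their reductions belong to $\CSP(\Gamma)$ by hypothesis. Constraints in $\hat \CCC_i \cup \CCC'_i$ are redundant tautological binary constraints and lie in $\Gamma$ by our standing assumption. Finally, each copy $c'$ of some representative $c^* \in \CCC_i$ has the same relation as $c^*$, with only the variables in $\cV_i$ renamed to their duplicates; since no such duplicate is in $S$, the reductions $c^*|_\alpha$ and $c'|_\alpha$ yield the same relation (up to renaming of the scope), so $c'|_\alpha \in \CSP(\Gamma)$ follows from $c^*|_\alpha \in \CSP(\Gamma)$, which holds because $c^*$ is a constraint of $G$.

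For the backward direction, I would take a strong backdoor $S$ of $G'$ and consider an arbitrary constraint $c \in G$ with $\alpha : S \to \cD$. If $c \in G[Y]$ the argument is identical to the previous case. Otherwise $c$ lies in some component $C_i$, and by the choice of $\CCC_i$ there is a representative $c^* \in \CCC_i$ with $\sigma_i(c) = \sigma_i(c^*)$. Reading this equality at $(P,\tau)$ where $P = N(C_i) \cap S$ and $\tau$ is any extension of $\alpha|_P$ to $N(C_i)$, we conclude that $c|_\alpha$ and $c^*|_\alpha$ carry the same relation (the same index in $\cR^\rho_\cD$); since a copy of $c^*$ (with the same relation and the same intersection with $S$) lies in $G'$, the hypothesis yields $c^*|_\alpha \in \CSP(\Gamma)$, hence $c|_\alpha \in \CSP(\Gamma)$.

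The only real obstacle is careful bookkeeping: one must verify that the copy operation and the enlargement by $\hat \cV_i$ leave $\var(\cdot) \cap S$ unchanged on every constraint whose reduction is invoked, and that the $\sigma_i$-invariance delivers equality of reduced relations uniformly across all partial assignments extending a given $\alpha|_P$, rather than at a single witness. Once these points are unpacked, the case analysis above concludes the claim.
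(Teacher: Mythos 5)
Your proof is correct and follows essentially the same route as the paper's: the forward direction uses that the non-redundant constraints of $G'$ are (copies of) constraints of $G$ whose reductions under any $\alpha:S\to\cD$ agree because the renamed variables avoid $S\subseteq Y$, and the backward direction uses that every constraint $c$ in a component $C_i$ has a representative $c^*\in \CCC_i$ with $\sigma_i(c)=\sigma_i(c^*)$, so their reductions carry the same relation. You are in fact somewhat more explicit than the paper (which compresses both directions into a few lines) about why the $\sigma_i$-matrices and the variable-copying leave $\var(\cdot)\cap S$ and hence the reduced relations unchanged, but the underlying argument is the same.
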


\begin{proof}
 For the forward direction, since all the non-redundant constraints in $G'$ are already present in $G$, it follows that  if $S$ is a strong backdoor set of $G$ into $\CSP(\Gamma)$ then $S$ is  also a strong backdoor set of $G'$ into $\CSP(\Gamma)$. For the converse, suppose that $S$ is a strong backdoor set of $G'$ into $\CSP(\Gamma)$ and is \emph{not} a strong backdoor set of $G$ into $\CSP(\Gamma)$. Let $c\in \CCC(G)$ be a constraint and $\tau:S\to \cD$ be such that the relation of $c|_\tau$ is not in $\Gamma$. Let $C_i$ be the connected component of $G-Y$ that $c$ belongs to. Since $c\notin V(G')$, it must be the case that there is a constraint $c'\in V(G')$ such that $\sigma_i(c)=\sigma_i(c')$. However, this implies that the relation corresponding to $c'|_\tau$ is not in $\Gamma$, a contradiction. This completes the proof of the claim.
\end{proof}

%
%
%
%

\begin{claim}\label{clm:second} For any $S\subseteq Y$ and any $u,v\in S\cup Z$, there is a $u$-$v$ path in $G$ with the internal vertices disjoint from $S$ if and only if there is such a path in $G'$. 
	
\end{claim}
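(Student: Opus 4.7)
The plan is to prove each direction by a path-surgery argument based on two structural facts:  $N(C_i)\subseteq Y$ separates $C_i$ from the rest of $G$, and it also separates the gadget $\CCC_i\cup\hat\CCC_i\cup\CCC'_i\cup\hat\cV_i$ introduced for $C_i$ from the rest of $G'$; moreover the vertices of $C_i$, the copied variables in $\hat\cV_i$ and the new tautological constraints all lie outside $Y$ and hence outside $S$. The endpoints $u,v$ belong to $Y$, so both graphs make sense in the same ``coordinate system''.

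For the forward direction, I start with a $u$-$v$ path $P$ in $G$ whose internal vertices avoid $S$ and decompose it at its $N(C_i)$-crossings into maximal subpaths of two types: (a) subpaths staying inside $G[Y]$ together with constraints whose scope is contained in $Y$, which are unchanged in $G'$; and (b) excursions lying inside a single $C_i$, which necessarily begin and end at vertices $w_1,w_2\in N(C_i)\subseteq Y$ because $C_i$ is a connected component of $G-Y$. By construction $\CCC'_i$ contains a redundant binary constraint $c$ whose scope is $\{w_1,w_2\}$, so I replace the excursion with the length-two walk $w_1-c-w_2$ in $G'$. Because $c\notin Y\supseteq S$, the substitution preserves the property of avoiding $S$; concatenating the pieces and removing repetitions yields the required $u$-$v$ path in $G'$.

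The reverse direction is symmetric but uses the connectivity of $C_i$ instead. Given a $u$-$v$ path $P'$ in $G'$ whose internal vertices avoid $S$, I decompose it at $N(C_i)$-crossings into subpaths inside $G[Y]$ and excursions into a single gadget. An excursion into the gadget for $C_i$ again enters at some $w_1\in N(C_i)$ and exits at some $w_2\in N(C_i)$, since the only vertices of the gadget adjacent in $G'$ to anything outside the gadget are those of $N(C_i)$. I then replace each excursion with any $w_1$-$w_2$ path through $C_i$ in $G$; such a path exists because $C_i$ is connected and each of $w_1,w_2$ has a neighbour in $C_i$ by the definition of $N(C_i)$. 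Since $V(C_i)\cap Y=\emptyset$ and $S\subseteq Y$, the rerouted segment avoids $S$, and the concatenated (and pruned) result is the desired path in $G$.

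The only bookkeeping subtlety is checking that an excursion in $G'$ really does re-enter $N(C_i)$ before escaping the gadget, even if it makes use of a copy variable $v_j\in\hat\cV_i$ or jumps between copies via an auxiliary constraint of $\hat\CCC_i\cup\CCC'_i$. This is immediate from the construction: every vertex in the gadget for $C_i$ other than those in $N(C_i)$ is either a copy variable in $\hat\cV_i$, a copied constraint in $\CCC_i$, or a new tautological constraint in $\hat\CCC_i\cup\CCC'_i$, none of which is incident in $G'$ to any vertex outside the gadget's interface $N(C_i)$. This makes the decomposition well-defined and completes both directions.
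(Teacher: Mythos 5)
Your proposal is correct and follows essentially the same path-surgery argument as the paper: excursions of a $G$-path through a component $C_i$ are rerouted via the tautological binary constraint in $\CCC'_i$ on the two crossing vertices of $N(C_i)$, and conversely gadget detours in $G'$ are rerouted through the connected component $C_i$, yielding a walk that is then pruned to a path. Your explicit decomposition at $N(C_i)$-crossings and the verification that gadget-internal vertices attach only to $N(C_i)$ just make the same bookkeeping slightly more explicit than the paper's version.
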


\begin{proof}
For the forward direction, let $P$ be a $u$-$v$ path in $G$ with the internal vertices disjoint from $S$. If $P$ is also present in $G'$ then we are done. Otherwise,
 consider a pair of vertices $x,y$ on $P$ such that $x,y\in Y$ and no vertex of the path $P$ \emph{between} $x$ and $y$ is present in $V(G')$. Then, it must be the case that $x,y\in N(C_i)$ for some $i\in [s]$. However, by the definition of $G'$, there is a redundant binary constraint whose scope is precisely the pair $x,y$. Hence, we can replace the segment of $P$ between $x$ and $y$ with the 2-length path through this constraint vertex. We repeat this for every such segment, concluding that there is a $u$-$v$ path in $G'$ which is internally disjoint from $S$. For the converse direction, we can use a similar argument where we replace segments of the path which use vertices of $V(G')\setminus V(G)$ with paths that pass through the connected components $C_1,\dots, C_s$ to get a $u$-$v$ \emph{walk} with the internal vertices disjoint from $S$ which also implies the presence of a path of the required kind. This completes the proof of the claim.
\end{proof}
We now proceed to complete the proof of the lemma using the above claims. 
Suppose that $(G\oplus G_1)$  has a strong backdoor $S$ of width at most $k$ into $\CSP(\Gamma)$. Due to Lemma \ref{lem:local_cut_paste} and the fact that $Y$ contains $S^*_\omega$ for every $\omega\in \mathfrak{S}((G,Z),h^*(k,|Z|),k)$, we may assume that $S\cap V(G)$, denoted $S^*$, is a subset of $Y$ and hence $S^*\subseteq \cV(G')$. This implies that $S\subseteq \cV(G'\oplus G_1)$. We now argue that $S$ is also a  a strong backdoor of $(G'\oplus G_1)$ of width at most $k$ into $\CSP(\Gamma)$.

Claim \ref{clm:first} implies that $S$ is indeed a strong backdoor of $(G'\oplus G_1)$. We now argue that $\torso_{G\oplus G_1}(S)$ is isomorphic to $\torso_{G'\oplus G_1}(S)$. We do this by showing that for every pair of vertices $u,v\in S$, $(u,v)\in E(\torso_{G\oplus G_1}(S))$ if and only if $(u,v)\in E(\torso_{G'\oplus G_1}(S))$. Equivalently, we argue that for every pair of vertices $u,v\in S$, there is a $u$-$v$ path in $G\oplus G_1$ disjoint from $S$ if and only if there is such a path in $G'\oplus G_1$. But this is a straightforward consequence of Claim \ref{clm:second}. This completes the argument in the forward direction. For the converse direction, suppose that $(G'\oplus G_1)$  has a strong backdoor $S'$ of width at most $k$ into $\CSP(\Gamma)$. It follows from the definition of $G'$ that $S\cap \cV(G')\subseteq Y$.

We now argue that $S$ is a a strong backdoor of $(G\oplus G_1)$ of width at most $k$ into $\CSP(\Gamma)$. Once again, Claim \ref{clm:first} implies that $S$ is indeed a strong backdoor and Claim \ref{clm:second} implies that for every pair of vertices $u,v\in S$, $(u,v)\in E(\torso_{G\oplus G_1}(S))$ if and only if $(u,v)\in E(\torso_{G'\oplus G_1}(S))$, implying that $\torso_{G\oplus G_1}(S)$ is isomorphic to $\torso_{G'\oplus G_1}(S)$. Clearly, choosing $\iota_{\sd}(k,t)$ to be the maximum of $\Upsilon(t,h^*(k,t),k)\cdot (k+1)+t$ and $max_{i\in [s]}\{|\CCC_i\cup \hat \CCC_i\cup  \CCC'_i\cup \hat \cV_i \cup N(C_i)|\}$ (which is easily seen to be bounded by a function of $k$ and $t$)  ensures that the treewidth of $G'$ is bounded by $\iota_{\sd}(k,t)$, concluding the proof of the lemma.
\end{proof}
}

\lv{Combining Lemma \ref{lem:reduction_strong} and Lemma \ref{lem:strong_rep_bound}, we get the desired lemma (stated below).

\begin{lemma}
	\label{lem:strong_rep_bound_main}
	There is a function $\alpha_{\sd}:\naturals^2\to \naturals$ such that for every $k,t\in {\mathbb N}$ and  $t$-boundaried incidence graph $(G,Z)$ there is a $t$-boundaried incidence graph $(G',Z')$ of size at most $ \alpha_{\sd}(k,t)$ such that $(G,Z)\sim_{t,k}(G',Z')$. 
\end{lemma}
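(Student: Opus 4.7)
The plan is to combine Lemma~\ref{lem:strong_rep_bound} and Lemma~\ref{lem:reduction_strong} in a straightforward way, using Lemma~\ref{lem:replace_equiv} as a bridge between the two equivalence relations. Starting from an arbitrary $t$-boundaried incidence graph $(G,Z)$, I would first apply Lemma~\ref{lem:strong_rep_bound} to obtain a $t$-boundaried incidence graph $(G'',Z'')$ such that $(G,Z)\sim_{t,k}(G'',Z'')$ and $\tw(G'')\leq \iota_{\sd}(k,t)$. At this point the size of $G''$ is not yet bounded, only its treewidth.

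Next, I would repeatedly invoke Lemma~\ref{lem:reduction_strong} on $(G'',Z'')$: as long as the size exceeds the threshold $\xi_{\sd}(k,t,\iota_{\sd}(k,t))$, the lemma produces a strictly smaller $t$-boundaried incidence graph in the same $\equiv_{t,h^*(k,t),k}$-class. Since each application strictly decreases the size, this process terminates, yielding $(G',Z')$ with $|G'|\leq \xi_{\sd}(k,t,\iota_{\sd}(k,t))$ and $(G'',Z'')\equiv_{t,h^*(k,t),k}(G',Z')$. One small point to verify is that the reduction in Lemma~\ref{lem:reduction_strong} preserves the treewidth bound (or at least does not push the instance outside the regime where the lemma applies); this follows because the replacement operation used there substitutes one subinstance by a strictly smaller equivalent subinstance sitting inside a bag of a tree decomposition, which can only decrease treewidth.

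Finally, by Lemma~\ref{lem:replace_equiv} the relation $\equiv_{t,h^*(k,t),k}$ refines $\sim_{t,k}$, so $(G'',Z'')\sim_{t,k}(G',Z')$. Chaining with the first step and transitivity of $\sim_{t,k}$ gives $(G,Z)\sim_{t,k}(G',Z')$. Setting $\alpha_{\sd}(k,t):=\xi_{\sd}\bigl(k,t,\iota_{\sd}(k,t)\bigr)$ then yields the desired bound.

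The main obstacle, if any, is not conceptual but notational: one must be a bit careful that the iterative shrinking in Lemma~\ref{lem:reduction_strong} is well-defined (i.e.\ the treewidth parameter $\ell$ in its statement is the \emph{current} treewidth of the shrinking graph rather than a fixed number), which is why the composition goes through $\iota_{\sd}(k,t)$ as a uniform upper bound. Everything else is a direct assembly of the two preceding lemmas.
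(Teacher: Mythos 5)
Your proposal is correct and takes essentially the same route as the paper, whose proof of Lemma~\ref{lem:strong_rep_bound_main} is exactly the one-line combination you describe: use Lemma~\ref{lem:strong_rep_bound} to pass to a $\sim_{t,k}$-equivalent instance of treewidth at most $\iota_{\sd}(k,t)$, iterate Lemma~\ref{lem:reduction_strong} until the size drops below $\xi_{\sd}(k,t,\iota_{\sd}(k,t))$, and conclude via the fact that $\equiv_{t,h^*(k,t),k}$ refines $\sim_{t,k}$ (Lemma~\ref{lem:replace_equiv}). You are in fact slightly more explicit than the paper on the one delicate point, namely that the repeated shrinking must keep the instance within the bounded-treewidth regime in which Lemma~\ref{lem:reduction_strong} is applicable, an issue the paper leaves implicit.
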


As a consequence of Lemma~\ref{lem:strong_rep_bound_main}, we also get.
}
\lv{\begin{lemma}
\label{lem:compute reps}
	Let $k,t\in \mathbb{N}$. There exists a set $\mathfrak{F}_s(t,h^*(k,t),k)$ of at most $2^{{\alpha_{\sd}(k,t)\choose 2}} \cdot {\alpha_{\sd}(k,t)\choose t}\cdot t! \cdot \alpha_{\sd}(k,t) ^{2^{\cD^\rho(\Gamma,k)}} \cdot \alpha_{\sd}(k,t)^{\rho(\Gamma,k)!}$ $t$-boundaried CSP instances that contains a $t$-boundaried CSP instance from every equivalence class of $\sim_{t,k}$. Furthermore, given $k$ and $t$, the set $\mathfrak{F}_s(t,h^*(k,t),k)$ can be computed in time  $\bigoh(|\mathfrak{F}_s(t,h^*(k,t),k)|)$.
\end{lemma}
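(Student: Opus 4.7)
My plan is to obtain $\mathfrak{F}_s(t,h^*(k,t),k)$ by brute-force enumeration of all $t$-boundaried CSP instances whose incidence graph has at most $\alpha_{\sd}(k,t)$ vertices. The key ingredient is Lemma~\ref{lem:strong_rep_bound_main}, which guarantees that every equivalence class of $\sim_{t,k}$ already contains some $t$-boundaried instance of size at most $\alpha_{\sd}(k,t)$. Hence, if we enumerate every such small instance (up to isomorphism is not even needed — duplicates are harmless) and add each of them to $\mathfrak{F}_s(t,h^*(k,t),k)$, the resulting set must contain a representative from every equivalence class of $\sim_{t,k}$.

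The main step is to verify that this enumeration procedure is bounded as claimed. Let $\alpha = \alpha_{\sd}(k,t)$. I would enumerate a candidate instance by specifying: (i) the incidence graph on at most $\alpha$ vertices, contributing the factor $2^{\binom{\alpha}{2}}$ for the choice of edges; (ii) the selection and labeling of the boundary, contributing $\binom{\alpha}{t}\cdot t!$ for choosing $t$ variable-vertices and assigning them the labels $1,\ldots,t$; (iii) the relation attached to each constraint vertex, contributing a factor of $\alpha^{2^{\cD^{\rho(\Gamma,k)}}}$, since by Reduction Rule~\ref{red:bounded_arity} we may assume every constraint has arity at most $\rho(\Gamma,k)$ and so the total number of distinct relations that could appear is at most $2^{\cD^{\rho(\Gamma,k)}}$; and (iv) the ordering of the variables in the scope of each constraint, contributing at most $\alpha^{\rho(\Gamma,k)!}$ since each scope is a sequence of at most $\rho(\Gamma,k)$ variables. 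The product of these factors is exactly the bound stated in the lemma.

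The algorithm itself iterates through the tuples describing each of (i)–(iv) in turn and emits the corresponding $t$-boundaried CSP instance. Since there are no hard feasibility checks — any tuple we enumerate defines a bona fide $t$-boundaried CSP instance — the algorithm runs in time proportional to the total number of enumerated tuples, which gives the claimed $\bigoh(|\mathfrak{F}_s(t,h^*(k,t),k)|)$ bound (up to polynomial-in-$\alpha$ overhead absorbed into the function $\alpha_{\sd}(k,t)$, which depends only on $k$ and $t$).

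The only conceptual obstacle is justifying completeness: we need that every equivalence class of $\sim_{t,k}$ is represented. This follows immediately from Lemma~\ref{lem:strong_rep_bound_main}: for any $t$-boundaried instance $\III$, there exists an equivalent $\III'$ whose incidence graph has size at most $\alpha_{\sd}(k,t)$, and such an $\III'$ is one of the instances produced by our enumeration. No further argument is required, because the enumeration is exhaustive over \emph{all} small $t$-boundaried instances rather than over a restricted family. Thus the set $\mathfrak{F}_s(t,h^*(k,t),k)$ meets both the size bound and the completeness requirement.
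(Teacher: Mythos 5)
Your proposal is correct and follows essentially the same route as the paper's own proof: both enumerate all $t$-boundaried CSP instances on at most $\alpha_{\sd}(k,t)$ vertices by choosing the incidence graph, the boundary vertices and their labels, the relation (of arity at most $\rho(\Gamma,k)$) attached to each constraint vertex, and the scope orderings, which yields exactly the stated product bound, and both obtain completeness directly from Lemma~\ref{lem:strong_rep_bound_main}, i.e., from the fact that every equivalence class of $\sim_{t,k}$ contains a representative of size at most $\alpha_{\sd}(k,t)$. There is no substantive difference between your argument and the paper's.
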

}

\sv{\begin{lemma}[$\star$]
\label{lem:compute reps}
	Let $k,t\in \mathbb{N}$. There exists a computable function $\mathfrak{y}(t,k)$ and a set $\mathfrak{F}_s(t,h^*(k,t),k)$ of at most $\mathfrak{y}(t,k)$
$t$-boundaried CSP instances that contains a $t$-boundaried CSP instance from every equivalence class of $\sim_{t,k}$. Furthermore, given $k$ and $t$, the set $\mathfrak{F}_s(t,h^*(k,t),k)$ can be computed in time  $\bigoh(|\mathfrak{F}_s(t,h^*(k,t),k)|)$.
\end{lemma}
}

\lv{
\begin{proof}
	The first term is a bound on the  number of graphs on at most $\alpha_{\sd}(k,t)$ vertices, the second term bounds the number of possible choices of the boundary variables with the third term corresponding to the possible labelings of the $t$ boundary variables. The next term correspond to all ways of assigning relations of $\Gamma$ (which must have arity at most $\rho(\Gamma,k)$) to the constraint vertices and the final term corresponds to choosing the order of the variables in the scope of each constraint. It is straightforward to see that this captures all possible CSP instances of size at most $\alpha_{\sd}(k,t)$ that appear in any equivalence class of $\sim_{t,k}$ and since any equivalence class has an instance not exceeding this size bound, the lemma follows.	
\end{proof}
}

\section{The {\FPT} Algorithm for {\sc Width Strong-$\CSP(\Gamma)$ Backdoor Detection}}
\label{sec:algo}

An often-used approach in the design of {\FPT} algorithms for graph problems is that of finding a sufficiently small separator in the graph and then reducing one of the sides. In the technique of `recursive understanding' introduced by Grohe et al. \cite{GroheKMW11}, this is achieved by performing this step \emph{recursively} until we arrive at a separator where the side we want to reduce has certain connectivity-based structure using which we can find a way reduce it without recursing further. This approach has been combined with various problem specific reduction rules at the bottom to obtain parameterized algorithms for several well-studied problems. These include the $k$-{\sc way Cut} problem, solved by 
Kawarabayashi and Thorup \cite{KawarabayashiT11}, {\sc Steiner Cut} and {\sc Unique Label Cover} -- both solved by Chitnis et al. \cite{ChitnisCHPP12}.
In this section, we will employ this technique to design our algorithm for {\sc Width Strong-$\CSP(\Gamma)$ Backdoor Detection}. We begin by defining a notion of \emph{nice} instances which basically capture the kind of instances we will be dealing with at the bottom of our recursion.

\subsection{Solving Nice Instances}

\lv{\begin{lemma}}
\sv{\begin{lemma}[$\star$]}
\label{lem:small_backdoor_nice}
There is a function $\mathfrak{Z}:\naturals^2\to \naturals$ and an algorithm that, given a CSP instance $\III$ with incidence graph $G$ and positive integers $\beta,k\in \naturals$, runs in time $\bigoh(\mathfrak{Z}(\beta,k)|G|^2)$ and either computes a strong backdoor into $\CSP(\Gamma)$ of width at most $k$ or correctly concludes that $\III$ has no backdoor set $X$ of width at most $k$ that satisfies the following properties: 
	\begin{enumerate*}[label={\normalfont{\textbf{(\arabic*)}}}]
	\item $G-X$ has exactly one connected component $C$ of size at least $\beta+1$.
	\item $|V(G)\setminus N[C]|\leq \beta$
	\end{enumerate*}   
\end{lemma}

\lv{
\begin{proof}
The algorithm begins by enumerating all minimal strong backdoor sets into $\CSP(\Gamma)$ of size at most $\beta$. Since the arity of every constraint in $\III$ can be assumed to be bounded by $\rho=\rho(\Gamma,k)$, there exists a set $\cY=\{Y_1,\dots, Y_r\}$ containing all minimal strong backdoor sets $\CSP(\Gamma)$ of size at most $\beta$, where $r\leq \rho^\beta$, and furthermore $\cY$ can be computed in time at most $\bigoh(\rho^\beta\cdot |V(G)|^2)$. For each such backdoor set, we test whether it has width at most $k$ (by computing the torso and then computing its treewidth) and whether it creates a component $C$ satisfying the required properties; if this is the case for at least one of these strong backdoor sets, then we are done. So, suppose that this is not the case and furthermore suppose that $\III$ has a backdoor set $X$ of width at most $k$ that satisfies the stated properties. In particular, observe that since $|V(G)\setminus N[C]|\leq \beta$, such $X$ must have cardinality at most $\beta$, and hence there exists at least one element of $\cY$ which is a subset of $X$. Note that $X$ need not be a minimal strong backdoor; it could contain additional elements which separate the instance so as to ensure that $X$ has the required width. Let us branch over $\cY$, knowing that there exists some $Y_i$ such that $Y_i\subseteq X$; assume without loss of generality that $Y_i=Y_1$. Since $C\cap X=\emptyset$, it follows that $Y_1\cap C=\emptyset$.

If $Y_1\supseteq N(C)$, then we can detect $X$ as follows. We test whether there exists a single component $C$ in $G-Y_1$ if size at least $\beta+1$ and whether $|V(G)\setminus N[C]|\leq \beta$. We then branch over the at most $2^\beta$ subsets $Q$ of $V(G)\setminus N[C]$ and test whether $X=Q\cup Y_1$ satisfies the lemma. 

Hence, we may assume that there is a vertex of $N(C)$ which is not already in $Y_1$. We now argue that if a variable in $N(C)$ is not in the scope of a constraint outside $C$, then we may assume that this variable is already in $Y_1$. This is because removing this variable from the strong backdoor set $X$ will not increase the width of $X$. Hence, its sole purpose in $X$ is to reduce a constraint in $C$ to  $\CSP(\Gamma)$, which allows us to assume that it is in $Y_1$ since $Y_1$ is a minimal strong backdoor set into $\CSP(\Gamma)$. 
At this point, we will design a branching algorithm that attempts to find variables in $N(C)\setminus Y_1$. Note that, as argued above, we may assume that every variable in $N(C)\setminus Y_1$ is in the scope of a constraint outside $C$.

Now, observe that since $Y_1$ does not have width at most $k$ it must be the case that $|Y_1|>k+1$. Also observe that since $X$ has width at most $k$, it must hold that $|N(C)|\leq k+1$ (otherwise the torso would contain a clique of size $k+2$). That is, there is a variable $y\in Y_1$ such that $y\notin N(C)$. However, this implies that there is a connected subgraph in $G$ that contains $y$, has size at most $\beta$, and has a neighborhood at size most $k+1$ such that the neighborhood intersects $N(C)\setminus Y_1$. Hence, we simply branch over the neighborhoods of all such connected subgraphs containing a vertex of $Y$ in order to locate the vertices in $N(C)\setminus Y_1$. By \cite{FominV12} there exist at most $|Y_1|\cdot {{\beta +k+1}\choose \beta}$ such subgraphs and each of them has a neighborhood of size at most $k+1$. Hence, we only need to branch over a set of at most $|Y_1|\cdot {{\beta +k+1}\choose \beta} \cdot (k+1)$ variables. Finally, once we have a subset of $X$ that contains $N(C)$, the argument for the case when $N(C)\subseteq Y_1$ can be applied to verify the existence of $X$.	If every branch of this algorithm concludes that there is no strong backdoor of  width at most $k$ then we can correctly conclude that there is no width-$k$ backdoor that satisfies the specified properties. Setting $\mathfrak{Z}(\beta,k)=(\rho^\beta(\beta+k+1)^\beta (k+1))^{k+1}$ completes the proof of the lemma.
\end{proof}
}

We now give the definition of `nice' instances. Generally speaking, these are instances which fall into either the bounded `classical' treewidth case or bounded backdoor size case. 
\lv{The formal definition is provided below.}

\begin{definition}
	Let $k,\beta\in \mathbb{N}$ and $\III$ be a CSP instance with incidence graph $G$. We say that $\III$ is $(\beta,k)$-\emph{nice} if  
	\lv{
	\vspace{-0.2cm}
	\begin{itemize}
		\item $tw(G)\leq \beta+k$ and/or
		\item if $\III$ has some strong backdoor set of width at most $k$, then it also has a strong backdoor set $X$ of width at most $k$ which satisfies the following properties:
		\begin{itemize}
	\item $G-X$ has exactly one connected component $C$ of size at least $\beta+1$, and
	\item $|V(G)\setminus N[C]|\leq \beta$.
	\end{itemize} 
		\sv{\vspace{-0.2cm}}
			\end{itemize}
			}
			\sv{ $tw(G)\leq \beta+k$ or
	if $\III$ has some strong backdoor set of width at most $k$, then it also has a strong backdoor set $X$ of width at most $k$ such that 			$G-X$ has exactly one connected component $C$ of size at least $\beta+1$, and $|V(G)\setminus N[C]|\leq \beta$.
			}
\end{definition}
We now formally show that given a $(\beta,k)$-nice incidence graph, one can detect strong backdoor sets of small width in {\FPT} time parameterized by $\beta+k$. This will later be used to compute small representatives of large boundaried CSP instances (specifically, in Lemma~\ref{lem:stucture_minimal_rep}).

\begin{lemma}\label{lem:nice algorithm}
 There is a function $\hat{\mathfrak{X}}:\naturals \to \naturals$ and an algorithm that, given $\beta,k\in \mathbb{N}$, a $(\beta,k)$-nice CSP instance $\III$ with the incidence graph $G$, runs in time $\bigoh(\hat{\mathfrak{X}}(\beta+k)\vert G\vert^2)$ and either computes a strong backdoor set into $\CSP(\Gamma)$ of width at most $k$ or correctly detects that such a set does not exist.
\end{lemma}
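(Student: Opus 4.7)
The plan is to match the two disjuncts in the definition of $(\beta,k)$-niceness with two complementary subroutines and run them in parallel on the input $(\III,\beta,k)$ with incidence graph $G$. First, I would invoke Bodlaender's algorithm on $G$ with target width $\beta+k$: this either produces a tree decomposition of $G$ of width at most $\beta+k$ in time $f(\beta+k)\cdot |G|$, or certifies $\tw(G)>\beta+k$. In parallel I would run the algorithm of Lemma~\ref{lem:small_backdoor_nice} with parameters $(\beta,k)$, which in time $\bigoh(\mathfrak{Z}(\beta,k)\,|G|^2)$ either returns a strong backdoor of width at most $k$ or certifies that no backdoor of the structured form demanded by the second niceness condition exists. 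If either subroutine returns a backdoor I output it; otherwise I output {\sc No}.

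The substantive step is the bounded-treewidth subroutine. Given a decomposition of $G$ of width at most $\beta+k$, I want to decide whether there is an $X\subseteq \cV(G)$ that is simultaneously a strong backdoor into $\CSP(\Gamma)$ and satisfies $\tw(\torso_G(X))\leq k$. I would encode this as a single MSO sentence on $G$ and appeal to the constructive version of Courcelle's theorem to both decide it and extract a witness. The backdoor predicate is local: after Reduction Rule~\ref{red:bounded_arity} every constraint has arity at most $\rho=p+k+2$, so for each constraint vertex the condition ``for every assignment $\alpha:\var(C)\cap X\to \cD$ the reduced relation $R|_\alpha$ lies in $\Gamma$'' becomes a finite Boolean combination over the possible patterns $\var(C)\cap X$ and assignments, with the finite language $\Gamma$ hard-coded. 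The width condition is captured by describing $\torso_G(X)$ as an MSO transduction of the pair $(G,X)$ (adding clique edges on the neighbourhood of every connected component of $G-X$), and then using the equivalence of $\tw\leq k$ with the exclusion of every graph in a finite forbidden-minor family. Composing these and existentially quantifying $X$ yields a sentence $\varphi$ with $G\models\varphi$ iff the required backdoor exists, and Courcelle's theorem evaluates $\varphi$ in time $g(\beta+k)\cdot|G|$.

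Correctness follows directly from niceness: if a strong backdoor of width at most $k$ exists, then either $\tw(G)\leq \beta+k$ and the MSO subroutine finds one, or the structured second condition holds and Lemma~\ref{lem:small_backdoor_nice} finds one; if no such backdoor exists, both subroutines correctly return {\sc No}. Setting $\hat{\mathfrak{X}}(\beta+k)$ to a suitable upper bound on $g(\beta+k)+\mathfrak{Z}(\beta,k)$ gives the claimed $\bigoh(\hat{\mathfrak{X}}(\beta+k)\,|G|^2)$ running time. The main obstacle I anticipate is writing down the MSO transduction for $\torso_G(X)$ cleanly enough to compose with the forbidden-minor formula: one needs an auxiliary set variable encoding the partition of $V(G)\setminus X$ into components and a careful definition of the induced clique edges on $X$. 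Once this is set up, the remainder is a routine application of standard bounded-treewidth machinery together with the already-proved Lemma~\ref{lem:small_backdoor_nice}.
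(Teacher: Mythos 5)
Your proposal is correct and follows essentially the same route as the paper: handle the bounded-treewidth case by an MSO sentence (backdoor condition checked locally per constraint thanks to the arity bound, width condition via excluded minors of the torso) evaluated through Courcelle's theorem, and otherwise fall back on Lemma~\ref{lem:small_backdoor_nice}, with correctness following from the disjunction in the definition of $(\beta,k)$-niceness. The only differences are cosmetic: you run the two subroutines in parallel rather than branching on whether $\tw(G)\leq\beta+k$, and you formalize the torso-minor test as an MSO transduction where the paper states the sentence directly.
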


\begin{proof}
If $tw(G)\leq \beta+k$, then we can solve the problem directly by applying Courcelle's theorem \cite{Courcelle90b}, as follows. First, recall that the arity of any constraint which appears in the CSP instance $\psi(G)$ is upper-bounded by $k$ plus the maximum arity of relations in $\Gamma$. Hence we can assume that the number of relations which appear in the constraints of $\psi(G)$ is bounded by a function of $k$, and we can think of $G$ as having vertex labels which specify which relation is used in each constraint vertex and edge labels which specify the order in which variables appear in the incident constraint. Second, for a $j$-ary relation $R$ which appears in a constraint $C$ in $\psi(G)$, we say that a subset $\alpha$ of $\{1,\dots,j\}$ is a \emph{valid choice} if the variables occurring in positions $\alpha$ in $C$ form a strong backdoor for $\{C\}$ into $\CSP(\Gamma)$. Note that the set of valid choices for all of the relations which occur in a constraint in $\psi(G)$ can be precomputed in advance. Then the problem can be formulated in Monadic Second Order logic with a sentence stating the following: there exists a set $T$ of variables such that $(1)$ for each constraint $C$ with label $R$ it holds that the edges between $T$ and $C$ correspond to a valid choice for $R$, and $(2)$ the torso of $T$ does not contain any of the forbidden minors for treewidth at most $k$. Indeed, condition $(1)$ ensures that $T$ forms a backdoor to $\CSP(\Gamma)$ and condition $(2)$ ensures that $T$ has width at most $k$.

Otherwise, we execute the algorithm of Lemma \ref{lem:small_backdoor_nice} that runs in time $\bigoh(\hat{\mathfrak{Z}}(\alpha)\vert G\vert^2)$. 
 The function $\mathfrak{X}$ is obtained from the function $\mathfrak{Z}$ and the dependence of the algorithm on $\beta+k$ in the case of bounded treewidth.
%
	\end{proof}

\subsection{Computing a Minimal Representative}
In this subsection, we show that if  a $t$-boundaried instance has a certain guarantee on the \mbox{(non-)}existence of a small separator separating two large parts of the instance from each other, then we can compute a $t$-boundaried instance of bounded size which is equivalent to it under the relation $\sim_{t,k}$.

\begin{definition}
Let $G$ be an incidence graph and $(A,S,B)$ be a partition of $V(G)$ where $S\subseteq \cV(G)$ and $N(A),N(B)\subseteq S$. We call $(A,S,B)$ a $(q,k)$-\emph{separation} if $S$ has size at most $k$, and $A$ and $B$ have size at least $q$.
\end{definition}

\lv{\begin{lemma}}
\sv{\begin{lemma}[$\star$]}
\label{lem:unbreakable_implies_nice}
Let $G$ be the incidence graph of a CSP instance $\III$.
	If $G$ has no $(q,k+1)$-separation then $\III$ is $(q,k)$-nice.
\end{lemma}

\lv{
\begin{proof}
We will prove the lemma by showing that either $\III$ is $(q,k)$-nice or $G$ contains a $(q,k+1)$-separation.
Let $X$ be a hypothetical strong backdoor set of width at most $k$ for the CSP instance $\III$.
 Let $C$ be the largest component of $G-X$, let $S=N(C)$ and $D=V(G)\setminus (C\cup S)$. Since $X$ has width at most $k$, it follows that every connected component of $G-X$ has at most $k+1$ neighbors in $X$ and in particular, $|S|\leq k+1$. 
		
		Observe that if $C$ has size at most $q$, then we are done since $G$ has treewidth at most $q+k$. Indeed, we can obtain a tree-decomposition for $G$ of this width by starting with a tree-decomposition of width at most $k$ for $\torso_{G}(X)$ and then creating a new bag for each connected component of $G-X$ which contains the vertices of this component along with its neighborhood in $X$. 
		
Otherwise, observe that if $D$ has size at most $q$, then $\III$ is $(q,k)$-nice and the lemma holds. So, suppose for a contradiction that $D$ has size greater than $q$. This implies that $(C,N(C),D)$ is a $(q,k+1)$-separation in $G$, and hence the lemma holds.
	\end{proof}
	}
	
\lv{\begin{lemma}}
\sv{\begin{lemma}[$\star$]}
\label{lem:local_unbreakable_implies_global_unbreakable}
	Let $t\in \naturals$ and $\III_1$ be a $t$-boundaried CSP instance with $t$-boundaried incidence graph $(G,Z)$. Let $k,q\in \naturals$ be such that $G$ does not admit a $(8^q,k+1)$-separation, and let $(H,J)$ be the $t$-boundaried incidence graph of a $t$-boundaried CSP instance $\III_2$ such that the size of $V(H)$ is at most some $r\in \naturals$. Then the incidence graph $G\oplus H$ corresponding to the instance $\III_1\oplus \III_2$ has no $(8^q+r,k+1)$-separation.
\end{lemma}

\lv{
\begin{proof}
	Suppose to the contrary that the graph $G'=G\oplus H$ admits a $(8^q+r,k+1)$-separation $(A,S,B)$. Let $A'=A\setminus (V(H)\setminus J)$, $B'=B\setminus (V(H)\setminus J)$ and $S'=S\setminus (V(H)\setminus J)$. We claim that $(A',S',B')$ is a $(8^q,k+1)$-separation in $G$.

	Observe that it follows from the definition of $A',B', S'$ that $V(G)=A' \uplus S' \uplus B'$. Furthermore, since $V(H)\leq r$, it follows that $|A'|,|B'|\geq 8^q$ and since $S'\subseteq S$, it follows that $|S'|\leq k+1$. Since $(A,S,B)$ is a separation in $G'$, it follows that there is no edge in $G'$ with one endpoint in $A$ and one in $B$. Since $A'\subseteq A$ and $B'\subseteq B$, we conclude that there is no edge in $G$ with one endpoint in $A'$ and one in $B'$. This implies that $(A',S',B')$ is indeed a $(8^q,k+1)$-separation in $G$, a contradiction to the premise of the lemma.
\end{proof}
}

For the following lemma, recall the definition of the set $\mathfrak{F}_s(t,h^*(k,t),k)$ (Lemma~\ref{lem:compute reps}). 
\sv{The proof relies on Lemmas~\ref{lem:local_unbreakable_implies_global_unbreakable}, \ref{lem:unbreakable_implies_nice}, and~\ref{lem:nice algorithm}.}

\lv{\begin{lemma}}
\sv{\begin{lemma}[$\star$]}
\label{lem:stucture_minimal_rep}
Let $t\in \naturals$ and $\III_1$ be a $t$-boundaried CSP instance with incidence graph $G$ and boundary $Z$. Further, let $k,q\in \naturals$ be such that $t\leq 2(k+1)$, $|V(G)|>q$, and $G$ does not admit a $(8^q,k+1)$-separation. Let $(H,J)$ be the $t$-boundaried incidence graph of a $t$-boundaried CSP instance $\III_2$ in $\mathfrak{F}_s(t,h^*(k,t),k)$. Then the instance $\III_1\oplus \III_2$ is $(8^q+\alpha_{\sd}(k,2(k+1)),k)$-nice. Furthermore,
if $q=\alpha_{\sd}(k,2(k+1))$ then one can compute in time $\bigoh(\mathfrak{M}(k)|G|^2)$ a $t$-boundaried CSP instance $\III_1^*$ of size at most $q$ such that 	$\III_1\sim_{t,k}\III_1^*$, for some function $\mathfrak{M}$.
	\end{lemma}
	
\lv{
	\begin{proof}
		
		It follows from Lemma \ref{lem:local_unbreakable_implies_global_unbreakable} in conjunction with Lemma~\ref{lem:strong_rep_bound_main} that the graph $G\oplus H$ corresponding to the instance $\III_1\oplus \III_2$ has no $(8^q+\alpha_{\sd}(k,2(k+1)),k+1)$-separation. By Lemma \ref{lem:unbreakable_implies_nice} this implies that $\III_1\oplus \III_2$ is $(8^q+\alpha_{\sd}(k,2(k+1)),k)$-nice. This completes the argument for the first statement.

We now argue the second statement. For each instance $\III\in \mathfrak{F}(t,h^*(k,t),k)$, we construct the instance $\III_1\oplus \III$ and 
execute the algorithm of Lemma \ref{lem:nice algorithm} to check for the existence of a strong backdoor set of width at most $k$. 
Following this, for each pair of instances $\III_a,\III_b\in \mathfrak{F}(t,h^*(k,t),k)$ we do the same on the instance $\III_a\oplus \III_b$. We define $\III^*$ to be that instance in $\mathfrak{F}(t,h^*(k,t),k)$ with the property that $\III_1\oplus \III$ has a strong backdoor into $\CSP(\Gamma)$ of width at most $k$ 
if and only if $\III^*\oplus \III$ has a strong backdoor into $\CSP(\Gamma)$ of width at most $k$ 
for every $\III\in \mathfrak{F}(t,h^*(k,t),k)$. We use the bounds stated in Lemma \ref{lem:nice algorithm} and the fact that each $\III\in \mathfrak{F}(t,h^*(k,t),k)$ has size bounded by $\alpha_{\sd}(k,2(k+1)$ to appropriately define the function $\mathfrak{M}$.  This completes the proof of the lemma.
\end{proof}
}

%
%
%
%
%
%
%

\subsection{Solving the Problem via Recursive Understanding}
In this subsection, we complete our algorithm for {\sc Width Strong-$\CSP(\Gamma)$ Backdoor Detection} by describing the recursive phase of our algorithm and the way we utilize the subroutines described earlier to solve the problem. We note that variants of Lemma \ref{lem:fast compute good seps}, Lemma \ref{lem:find_good_sep} and Lemma \ref{lem:compute_extremal_cut} are well-known in literature (see for example \cite{ChitnisCHPP12}). However the parameters involved in these lemmas are specific to the application. Furthermore, our proofs are simpler and avoid the color coding technique employed in \cite{ChitnisCHPP12}.

\lv{\begin{lemma}}
\sv{\begin{lemma}[$\star$]}
\label{lem:fast compute good seps}
	There is an algorithm that, given an incidence graph $G$ and $q,k\in \mathbb{N}$, runs in time $\bigoh((2q)^k\cdot \vert G \vert^2)$ and either computes a $(q,k)$-separation or concludes correctly that there is no  $(q,k)$-separation $(A,S,B)$ where $A$ and $B$ are connected.
\end{lemma}

\lv{
\begin{proof}
We describe a branching algorithm which we analyze using $k$ as the measure. 
	We go over all pairs of constraint vertices $u,v$ in $G$ and for each pair we test if there is a $(q,k)$-separation $(A,S,B)$ where $u\in A$ and $v\in B$. This is done as follows.
	
	We pick an arbitrary connected $q$-vertex subgraph $H_u$ of $G$ containing $u$ and an arbitrary connected $q$-vertex subgraph $H_v$ of $G$ containing $v$. We now perform a $2q$-way branching where in each branch, we pick a unique variable vertex $x$ in $H_u\cup H_v$ and recursively try to compute a $(q,k-1)$-separation $(A,S',B)$ in $G-x$ where $u\in A$ and $v\in B$. 
	 If we do not find the required $(q,k-1)$-separation in any of these branches, then it must be the case that either there is no $(q,k)$-separation of the kind we are looking for at all or $S$ is disjoint from $H_u\cup H_v$. Observe that in the latter case it follows that $H_u\cap H_v=\emptyset$ and furthermore that $V(H_u)\subseteq A$ and $V(H_v)\subseteq B$. In this case, we simply need to test whether there is  \emph{any} set of size at most $k$ which is disjoint from $H_u\cup H_v$ and separates $V(H_u)$ and $V(H_v)$. This can be achieved by a simple max-flow computation, thus completing the proof of the lemma.
	 	\end{proof}
}

\sv{Next, we use Lemma~\ref{lem:fast compute good seps} to obtain the final ingredient for our algorithm.}

\sv{\begin{lemma}[$\star$]}
\lv{\begin{lemma}}
\label{lem:find_good_sep}
 There is an algorithm that, given an incidence graph $G$ and $q,k\in \mathbb{N}$, runs in time $\bigoh((q+k)^k\vert G\vert^2)$ and either computes a $(q,k)$-separation in $G$ or correctly concludes that $G$ has no $(8^q,k)$-separation.
 
\end{lemma}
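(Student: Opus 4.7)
The plan is to turn this lemma into a wrapper around Lemma~\ref{lem:fast compute good seps}. Given the input $(G,q,k)$, I would invoke the algorithm of Lemma~\ref{lem:fast compute good seps} with an appropriately chosen parameter $q' = \Theta(q+k)$ (so that the resulting running time $(2q')^k |G|^2$ absorbs into the target $(q+k)^k|G|^2$). If that call returns a $(q',k)$-separation, this is automatically a $(q,k)$-separation (since $q' \geq q$) and we output it.

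In the remaining case, Lemma~\ref{lem:fast compute good seps} has certified that $G$ admits no $(q',k)$-separation $(A,S,B)$ in which both $G[A]$ and $G[B]$ are connected, and we wish to upgrade this to the conclusion that $G$ has no $(8^q,k)$-separation. I would establish the contrapositive: assuming $(A^*,S^*,B^*)$ is an $(8^q,k)$-separation, I would construct a $(q',k)$-separation with both sides connected. The connected components of $G[A^*]$ (and likewise of $G[B^*]$) carry neighborhoods in $S^*$, and since $|S^*|\leq k$ there are at most $2^k$ distinct such neighborhood types on each side. A pigeonhole argument then isolates a component group on each side sharing a common neighborhood and containing many vertices; the slack $8^q = 2^{3q}$ is calibrated to allow three successive rounds of such trimming---one for grouping by neighborhood, one for picking a connected representative of sufficient size inside a neighborhood class, and one for rebalancing the remaining vertices so that the result is a valid partition of $V(G)$. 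Together these yield a $(q',k)$-separation with both sides connected, which contradicts the conclusion of Lemma~\ref{lem:fast compute good seps}.

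The main obstacle will be the accounting in the contrapositive: one has to maintain simultaneously the partition property $V(G) = A \cup S \cup B$, the separator bound $|S|\leq k$, and the connectivity of both $G[A]$ and $G[B]$, all of which can conflict when components of $G[A^*]$ or $G[B^*]$ are numerous, small, and share a common small neighborhood in $S^*$ (the typical ``star-like'' obstruction, where many constraints pend off a single variable). Handling that case is what forces $q'$ to be $\Theta(q+k)$ rather than $\Theta(q)$, and in turn forces the $(q+k)^k$ factor in the running time in place of the $(2q)^k$ factor of Lemma~\ref{lem:fast compute good seps}; the extra $|G|^2$ factor is inherited from the pair-iteration already present in Lemma~\ref{lem:fast compute good seps}.
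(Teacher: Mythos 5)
There is a genuine gap, and it lies exactly at the point you flag as ``the main obstacle.'' Your correctness argument rests on the contrapositive claim that any $(8^q,k)$-separation can be converted into a $(q',k)$-separation in which both sides induce connected subgraphs, with $q'=\Theta(q+k)$. This claim is false, and no choice of $q'$ as a function of $q$ and $k$ repairs it. Consider the star-like instance: one variable $v$, and $n$ constraints each containing $v$ and one private variable. Every component of $G-\{v\}$ has exactly two vertices, deleting any other single variable disconnects nothing, so for $k=1$ there is \emph{no} $(3,1)$-separation with both sides connected at all; yet splitting the leaf components evenly yields an $(8^q,1)$-separation as soon as $n$ is at least $8^q$. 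Hence a failure report from Lemma~\ref{lem:fast compute good seps} (with any parameter) carries no information about separations with disconnected sides, and your step of ``picking a connected representative of sufficient size inside a neighbourhood class'' cannot be carried out: all components in a class may be tiny, and their union is disconnected because they meet only through the separator, which is not available to either side. Enlarging $q'$ does not touch this obstruction, and the $(q+k)^k$ factor in the target running time is not there to absorb a larger parameter in the first call.

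The paper closes this hole with a second phase that your wrapper omits. It first runs Lemma~\ref{lem:fast compute good seps} with the original parameter $q$; if that fails, it enumerates, for every vertex $v$, all connected subgraphs of size at most $q$ containing $v$ whose neighbourhood has at most $k$ vertices (at most $\binom{q+k}{k}$ per vertex, computable in time $\bigoh(\binom{q+k}{k}|G|)$ by the cited result of Fomin and Villanger), groups these subgraphs by their exact neighbourhood $X$, and checks whether some $X$ accumulates at least $3q$ vertices. If so, it assembles a $(q,k)$-separation by taking a suitable sub-collection of these size-at-most-$q$ pieces as one side---a side that need not be connected, which is fine because the lemma's output imposes no connectivity requirement. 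Only if both phases fail does it declare that no $(8^q,k)$-separation exists; correctness pigeonholes the components of the large side of a hypothetical $(8^q,k)$-separation according to their (at most $2^k$ possible) neighbourhoods in the separator. In short, a reduction to Lemma~\ref{lem:fast compute good seps} alone cannot work; the enumeration-and-aggregation phase is the essential missing idea, and it---not a larger parameter in the first phase---is the true source of the $(q+k)^k$ term.
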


\lv{
\begin{proof} The algorithm has 2 phases. 
The first phase executes the algorithm of Lemma \ref{lem:fast compute good seps}.
If the algorithm does not succeed in computing a $(q,k)$-separation in the first phase, the second phase is executed as follows. For every vertex $v\in \cV(G)$, we compute the family $\cH_v$ which denotes the set of connected subgraphs of $G$ of size at most $q$ which contain $v$ and have a neighborhood of size at most $k$. The size of each such family is at most ${q+k\choose k}$ and can be computed in time $ \bigoh({q+k\choose k}\vert G\vert)$ \cite{FominV12}. We define ${\cal H}=\bigcup_{v\in V(G)}{\cal H}_v$.

Having computed the families, we group the connected subgraphs by their neighborhood. That is, $H_1$ and $H_2$ in $\cal H$ are in the same class iff $N(H_1)=N(H_2)$. We then check if there is a variable set $X$ such that the set $W(X)=\SB v ~|~ v\in V(H), H\in {\cal H}: N(H)=X \SE$ has size at least $3q$. Suppose such a set $X$ exists. Since every $H\in {\cal H}$ has size at most $q$, we can construct a $(q,k)$ separation from these components. If such a set $X$ does not exist, then we return that $G$ does not have a $(8^q,k)$-separation. We now prove that this algorithm is indeed correct. For this, we prove that if $G$ has a $(8^q,k)$-separation then there is a variable set $X$ such that the set $W(X)=\SB v~|~v\in V(H), H\in {\cal H}: N(H)=X \SE$ has size at least $3q$.

Suppose that $(A,S,B)$ is a $(8^q,k)$-separation in $G$. We now group the connected components of $G[A]$ according to their neighborhood in $S$.
Since $\vert A\vert > 8^q$, there is a set $S'\subseteq S$ such that the connected components of $G[A]$ with $S'$ as the neighborhood contain at least $\log (8^q)=3q$ vertices of $A$. 
This completes the proof of the lemma.
	\end{proof}
	}

\begin{observation}\label{obs:either_or}
	Let $(G,Z)$ be a $t$-boundaried graph with $|V(G)|>q$ and $t\leq 2(k+1)$ and let $(A,S,B)$ be a $(q,k+1)$-separation in $G$. Then, one of the pairs $(G[A\cup S],S\cup (Z\cap A))$ or  $(G[B\cup S],S\cup (Z\cap B))$ is a $t'$-boundaried graph with $t'\leq 2(k+1)$.
	\end{observation}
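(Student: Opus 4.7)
The plan is to exploit a simple pigeonhole argument on how the original boundary $Z$ is split by the separator triple $(A,S,B)$. Observe first that the new boundary of the piece $(G[A\cup S], S\cup (Z\cap A))$ has size at most $|S| + |Z\cap A|$, and similarly the new boundary of the piece $(G[B\cup S], S\cup (Z\cap B))$ has size at most $|S| + |Z\cap B|$. Since $(A,S,B)$ is a $(q,k+1)$-separation, we have $|S|\leq k+1$, so it remains only to bound $|Z\cap A|$ or $|Z\cap B|$ by $k+1$.

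For this bound, I would note that $Z\subseteq \cV(G)\subseteq V(G) = A\uplus S\uplus B$, so $Z\cap A$, $Z\cap S$, and $Z\cap B$ form a partition of $Z$. In particular, $|Z\cap A| + |Z\cap B| \leq |Z| \leq t \leq 2(k+1)$, and by averaging at least one of $|Z\cap A|$ and $|Z\cap B|$ is at most $k+1$. Without loss of generality assume $|Z\cap A|\leq k+1$; then the new boundary $S\cup (Z\cap A)$ has cardinality at most $(k+1)+(k+1) = 2(k+1)$, as required.

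The only remaining technicality is to verify that $(G[A\cup S], S\cup (Z\cap A))$ is indeed a well-defined boundaried graph in the sense of the paper. This follows because $S\subseteq \cV(G)$ by the definition of a separation and $Z\cap A\subseteq Z\subseteq \cV(G)$, so the chosen boundary consists entirely of variable vertices; unique labels from $[t']$ can be assigned to the (at most $t'\leq 2(k+1)$) boundary vertices in an arbitrary way, where labels inherited from $Z$ can be preserved for the overlap $Z\cap A$ and fresh labels assigned to vertices in $S\setminus Z$. Since the whole statement reduces to an elementary counting argument and a routine boundary-assignment check, I do not anticipate any significant obstacle.
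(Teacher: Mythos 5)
Your proof is correct and matches the intended argument: the paper states this as an unproved observation precisely because it reduces to the pigeonhole count you give, namely $|S|\leq k+1$ together with $|Z\cap A|+|Z\cap B|\leq t\leq 2(k+1)$ forcing one side's boundary $S\cup(Z\cap A)$ or $S\cup(Z\cap B)$ to have size at most $2(k+1)$. Your added check that the chosen boundary consists of variable vertices (since $S\subseteq \cV(G)$ and $Z\subseteq\cV(G)$) and can be relabeled is a harmless and accurate formality.
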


	\begin{lemma}\label{lem:compute_extremal_cut}
		There is an algorithm that, given a $t$-boundaried graph $(G,Z)$ with  $|V(G)|>q$ and $t\leq 2(k+1)$, in time $\bigoh((q+k)^k\vert G\vert^3)$ returns a $t'$-boundaried graph $(G',Z')$ where $G'$ is a subgraph of $G$, (a) $|V(G')|>q$, 
(b) $t'\leq 2k+1$, and
(c) $G'$ has no $(8^q,k+1)$-separation.
	\end{lemma}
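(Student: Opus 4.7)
The plan is to iteratively call Lemma~\ref{lem:find_good_sep} with parameters $q$ and $k+1$ and, whenever it returns a separation, descend into one of the two sides, until the algorithm certifies that no $(8^q, k+1)$-separation remains. Concretely, initialize $(G',Z') := (G,Z)$ and run the algorithm of Lemma~\ref{lem:find_good_sep} on $(G', q, k+1)$. If it reports that $G'$ has no $(8^q, k+1)$-separation, return $(G',Z')$. Otherwise it produces a $(q, k+1)$-separation $(A,S,B)$ of $G'$, and by Observation~\ref{obs:either_or} (whose hypotheses $|V(G')|>q$ and $t'\leq 2(k+1)$ are preserved as invariants, see below) at least one of the two boundaried graphs $(G'[A\cup S], S\cup(Z'\cap A))$ or $(G'[B\cup S], S\cup(Z'\cap B))$ has boundary size at most $2(k+1)$. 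Pick such a side (either one if both qualify), update $(G',Z')$ to be this side, and iterate.

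For correctness, I would verify that the three required properties are invariants of the loop. Property (a), $|V(G')|>q$, is maintained because each of $A$ and $B$ contains strictly more than $q$ vertices by the definition of a $(q, k+1)$-separation, so the side we retain (together with the separator $S$) still has more than $q$ vertices. Property (b) on the boundary size is guaranteed at every step by Observation~\ref{obs:either_or}; note also that the new boundary vertices $S$ are variables, since a separation requires $S\subseteq \cV(G)$, so $(G',Z')$ remains a legitimate boundaried incidence graph. Property (c) is immediate upon termination, since the loop exits only when Lemma~\ref{lem:find_good_sep} certifies the absence of any $(8^q, k+1)$-separation. Throughout, $G'$ is an induced subgraph of the original $G$, so in particular a subgraph as required.

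For the running time, observe that each iteration discards at least one entire side of the $(q, k+1)$-separation, i.e., more than $q$ vertices. Thus the loop executes at most $|V(G)|/(q+1) = \bigoh(|G|/q)$ times, and each iteration calls Lemma~\ref{lem:find_good_sep} in time $\bigoh((q+k)^k |G'|^2) \leq \bigoh((q+k)^k |G|^2)$ plus $\bigoh(|G|)$ bookkeeping to construct the new boundaried subgraph. Summing gives the claimed bound $\bigoh((q+k)^k |G|^3)$.

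No step is a serious obstacle; the only mild subtlety is bookkeeping, namely verifying that the separator vertices promoted to the boundary are variables (forced by the definition of a separation) and that the two relevant invariants $|V(G')|>q$ and boundary size at most $2(k+1)$ propagate through every recursive descent, which follows directly from Observation~\ref{obs:either_or} and from both sides of a $(q, k+1)$-separation being strictly larger than $q$.
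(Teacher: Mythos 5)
Your proposal is correct and follows essentially the same route as the paper: repeatedly invoke Lemma~\ref{lem:find_good_sep} with parameters $q$ and $k+1$, use Observation~\ref{obs:either_or} to select a side whose boundary $S\cup(Z\cap A)$ (or $S\cup(Z\cap B)$) stays within the bound, recurse on that side, and stop once the absence of an $(8^q,k+1)$-separation is certified, with the recursion depth bounded linearly in $|G|$ giving the $\bigoh((q+k)^k|G|^3)$ bound. The only nitpick is that the paper's definition of a $(q,k)$-separation requires the sides to have size \emph{at least} $q$ rather than strictly more than $q$, but this boundary case is glossed over in the paper's own proof as well and does not change the argument.
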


	\begin{proof} We begin by executing the algorithm of Lemma \ref{lem:find_good_sep}. If this algorithm returns that $G$ has no $(8^q,k+1)$-separation then we terminate the algorithm and return the graph $(G,Z)$ itself. Otherwise, let $(X,S,Y)$ be the $(q,k+1)$-separation returned by this algorithm.  By Observation \ref{obs:either_or}, we may assume w.l.o.g.
	that $(G[X\cup S],S\cup (Z\cap X))$ is a $t''$-boundaried graph where $t''\leq 2(k+1)$. We now set $G:=G[X\cup S]$, $Z:=S\cup (Z\cap X)$ and recurse. The depth of recursion is clearly bounded by the size of the input graph. Since each step takes time $\bigoh((q+k)^k\vert G\vert^2)$, the lemma follows.		
	\end{proof}

\noindent
\textbf{Algorithm for the Decision version of Theorem \ref{thm:find_backdoor}.} Let $\III$ be the given input CSP instance and let $G$ be its incidence graph. We begin by setting $q=\alpha_{\sd}(k,2(k+1))$, choosing the boundary $Z$ to be the empty set and then executing the algorithm of Lemma \ref{lem:compute_extremal_cut} to compute a $t$-boundaried graph $(G',Z')$ where $G'$ is a subgraph of $G$, $|V(G')|>q$ and $t\leq 2(k+1)$ such that $G'$ has no $(8^q,k+1)$-separation. 

Next, we invoke Lemma \ref{lem:stucture_minimal_rep} on the corresponding CSP instance, say  $\III'$, to compute in time $\bigoh(\mathfrak{M}(k)|G|^2)$ a $t$-boundaried CSP instance $\III''$ such that $\III'\sim_{t,k}\III''$. We then set $\III=\III''\oplus (\psi(G-(V(G')\setminus Z')))$ and recursively check for the presence of a strong backdoor set of width at most $k$ for this instance. Since we strictly reduce the size of the instance in each step, the depth of the recursion is bounded linearly in the size of the initial input, implying {\FPT} running time.\medskip

\lv{
\noindent
\textbf{Computing a strong backdoor set of width at most $k$ given the decision algorithm.} Recall that the algorithm of Lemma \ref{lem:usingwidth} requires as input a strong backdoor set of width at most $k$. We use the self-reducibility of this problem in order to compute a strong backdoor set using the decision algorithm as a sub-routine.
Let $\III$ be the given $\CSP$ instance and $k$ be the given budget. We first show that for any $Y\subseteq \cV$, we can check in {\FPT} time whether $\III$ contains a strong backdoor set $X$ of width at most $k$ into $\CSP(\Gamma)$ such that $X\cap Y=\emptyset$. 

  Let $\III'$ be the instance defined from $\III$ as follows. For every variable $v\in Y$, we make $k+2$-copies of $v$, $\{v_1,\dots, v_{k+2}\}$ and for every constraint $c$ whose scope includes $v$, we make $k+2$ copies of this constraint $c_1,\dots, c_{k+2}$ with $v_i$ belonging to the scope of $c_i$.
  Finally, we add ${k+2} \choose 2$ tautological binary constraints, one on each pair of these copies of $v$. It is straightforward to verify that $\III$ has a strong backdoor of width at most $k$ disjoint from $Y$ if and only if $\III'$ has a strong backdoor of width at most $k$.   Given this subroutine, one can incrementally and in {\FPT} time construct a set $Y^*$ of variables such that $\cV(\III)\setminus Y^*$ is in fact a strong backdoor set for $\III$ of width at most $k$. This concludes the proof of Theorem \ref{thm:find_backdoor}.\qed

\lv{\medskip}

\noindent
\textbf{Proof of Theorem \ref{thm:main main theorem} and Corollary~\ref{cor:counting}.} The proof follows from Theorem \ref{thm:find_backdoor} since one can compute a strong backdoor set of width at most $k$ (if it it exists) and then execute the algorithm of Lemma \ref{lem:usingwidth} to solve $\CSP$ and $\sharpCSP$.\qed
}

\sv{
\noindent
\textbf{Proof of Theorem \ref{thm:main main theorem} and Corollary~\ref{cor:counting}.} Using the self-reducibility of the problem and the algorithm for the decision variant of Theorem \ref{thm:find_backdoor},  one can compute a strong backdoor set of width at most $k$ (if it it exists). Following this, one can execute the algorithm of Lemma \ref{lem:usingwidth} to solve $\CSP$ and $\sharpCSP$.\qed
}

\section{Concluding Remarks}
We have introduced the notion of backdoor treewidth for \CSP\ and \sharpCSP\ by combining the two classical approaches of placing structural restrictions and language restrictions, respectively, on the input. Thus the presented results represent a new ``hybrid’’ approach for solving \CSP and \sharpCSP. Our main result, Theorem~\ref{thm:main main theorem}, is quite broad as it covers all tractable finite constraint languages combined with the graph invariant treewidth. This can be seen as the base case of a general framework which combines a specific graph invariant of the torso graph with a specific class of constraint languages. Therefore, we hope it will stimulate further research in this direction.

\enlargethispage*{10mm}
\bibliographystyle{abbrv}
\bibliography{literature}

\end{document}